\numberwithin{equation}{section}
\newtheorem{theorem}{Theorem}[section]
\newtheorem{corollary}[theorem]{Corollary}
\newtheorem{fact}[theorem]{Fact}
\newtheorem{proposition}[theorem]{Proposition}
\newtheorem{lemma}[theorem]{Lemma}
\newtheorem{definition}[theorem]{Definition}
\newtheorem{example}[theorem]{Example}
\newtheorem{remark}{Remark}
\newenvironment{proof}{\begin{trivlist}\item[]{\bf
Proof.}}{\hfill {\sc qed}\end{trivlist}}
\newcommand{\bi}{\begin{itemize}}
\newcommand{\ei}{\end{itemize}}
\newcommand{\benum}{\begin{enumerate}}
\newcommand{\eenum}{\end{enumerate}}
\newcommand{\ba}{\begin{array}}
\newcommand{\ea}{\end{array}}
\newcommand{\takeout}[1]{\empty}
\newcommand{\rem}[1]{\empty}
\newcommand{\blue}[1]{\textcolor{blue}{#1}}
\definecolor{cornellred}{RGB}{196,18,48}
\definecolor{dartmouthgreen}{RGB}{0,112,60}
\definecolor{mcgillred}{RGB}{237,27,47}
\newcommand{\To}{\Rightarrow}		
\newcommand{\sgoes}[1]{\stackrel{#1}{\longrightarrow}}
\newcommand{\sfrom}[1]{\stackrel{#1}{\longleftarrow}}
\newcommand{\natIso}{\stackrel{\sim}{\To}}		
\newcommand{\inj}{\rightarrowtail}
\newcommand{\surj}{\twoheadrightarrow}
\def\subto{\hookrightarrow}  
\newcommand{\lsub}[1]{{#1}_0}
\newcommand{\reach}[1]{\mathit{reach}(#1)}
\def\id{{\mathit{id}}}
\def\Id{{\mathit{Id}}}
\def\inc{{\sf{in}}} 
\def\arKl[#1,#2]{\ar[#1,#2]|-{\circ}}
\newcommand{\tup}[1]{\langle #1 \rangle}
\def\lsem{\llbracket}
\def\rsem{\rrbracket}
\newcommand{\sem}[1]{\llbracket #1 \rrbracket}
\newcommand{\langsem}[1]{\langle\!| #1 |\!\rangle}
\newcommand{\boxmod}[1]{[#1]}
\newcommand{\LLang}[1]{\mathbb{L}(#1)} 
\newcommand{\sm}[1]{s^{#1}} 
\renewcommand{\th}[1]{th^{#1}} 
\newcommand{\Trc}{\Alph^*\Omega} 
\newcommand\subs{\mathrel{\subseteq}}
\newcommand\len[1]{|#1|}
\newcommand\set[2]{\{#1 \mid #2 \}}
\newcommand\eps{\varepsilon}
\newcommand\eqdef\triangleq
\newcommand{\Sec}{Section}
\newcommand\Iff{\Leftrightarrow}
\def\o{\circ}
\def\x{\times}
\def\wt#1{\widetilde{#1}}
\def\ol#1{\overline{#1}}
\newcommand{\str}[1]{\mathcal{#1}}
\renewcommand{\phi}{\varphi}
\renewcommand{\rho}{\varrho}
\def\cat#1{\ensuremath{\mathsf{#1}}\xspace}
\newcommand\functor[1]{\ensuremath{\mathop{\mathit{#1}}}}
\newcommand{\catC}{\mathcal{C}}
\newcommand{\catD}{\mathcal{D}}
\newcommand{\catA}{\mathcal{A}}
\newcommand{\Set}{\cat{Set}}
\def\op{\mathsf{op}} 
\def\SetOp{{\Set^\op}}
\newcommand\Setop{\ensuremath{\SetOp}\xspace}
\newcommand\Qop{\ensuremath{\cPo^\op}\xspace}
\newcommand{\Coalg}{\mathsf{Coalg}}
\newcommand{\Alg}{\mathsf{Alg}}
\newcommand{\EM}[1]{\cat{EM}(#1)} 
\newcommand{\Aut}{\mathsf{Aut}}
\newcommand{\CABA}{\ensuremath{\mathsf{CABA}}\xspace}
\def\FVec{\cat{F}\Veck} 
\def\k{\mathbbm{k}}
\def\Veck{\textsf{Vec}_{\k}}
\def\SMod{{\S\textsf{Mod}}} 
\def\Pow{\mathcal{P}} 
\def\Powf{\mathcal{P}_\omega} 
\newcommand{\cPo}{\mathcal{Q}} 
\DeclareMathOperator{\Hom}{Hom} 
\def\D{\mathcal{D}_\omega} 
\def\N{\mathcal{N}} 
\renewcommand{\L}{P}  
\newcommand{\Pred}{P}  
\newcommand{\Spc}{S}  
\newcommand{\coun}{\varepsilon}  
\newcommand{\dua}{\Delta}  %
\newcommand{\FreeD}{\Phi_\catD}  
\newcommand{\ForgD}{U_\catD}  
\newcommand{\xitrc}{\xi^{\mathsf{trc}}}
\newcommand{\rhotrc}{\rho^{\mathsf{trc}}}
\def\st{\mathsf{st}}  
\renewcommand{\det}[1]{{#1}^\sharp} 
\newcommand{\lif}[2]{{#1}_{#2}}  
\newcommand{\Flam}{\lif{F}{\lambda}} 
\newcommand{\rev}{\mathit{rev}} 
\newcommand{\transp}[1]{#1^\flat} 
\renewcommand{\Alph}{\Sigma} 
\newcommand{\Lang}{\mathcal{L}}
\newcommand{\Two}{\mathbbm{2}} 
\DeclareMathOperator{\Uf}{Uf} 
\newcommand{\BA}{\cat{BA}}
\newcommand\CBA{\cat{CBA}}
\newcommand{\FBA}{\Phi_{\mathsf{BA}}} 
\newcommand\At{\functor{At}}
\newcommand\EMN{\ensuremath{\EM N}\xspace}
\newcommand\setcompl[1]{{\sim}#1}
\newcommand\up{\ensuremath{\mathop{\uparrow}}}
\renewcommand\up[1]{\ensuremath{#1{\uparrow}}}
\renewcommand\angle[1]{\langle{#1}\rangle}
\newcommand{\A}{\mathcal{A}}
\renewcommand{\L}{\mathcal{L}}
\newcommand{\fext}[1]{{#1}^\sharp}  
\newcommand{\ext}[1]{\fext{#1}}
\newcommand{\detA}{\mathop{\mathrm{det}}{\A}}
\newcommand{\TTwo}[1]{2^{2^{#1}}}
\newcommand\lam[2]{\lambda#1\kern1pt.\kern1pt#2}
\newcommand\trtr[1]{{#1}^{\flat\flat}}
\newcommand{\EMNtoCABA}{D}
\newcommand{\V}{\mathcal{V}_\bbS} 
\newcommand{\KHaus}{\mathsf{KHaus}} 
\newcommand{\CCStar}{{\mathsf{CUC^*Alg}}} 
\newcommand{\Spec}{\mathit{Spec}} 
\newcommand{\Cont}{\mathit{C}} 
\newcommand{\SubD}{\mathcal{D}_{\leq 1}} 
\newcommand{\pleq}{\preceq} 
\newcommand{\icong}[1]{\equiv_{#1}} 
\newcommand{\norm}[1]{\|#1\|}
\def\S{\mathbb{S}} 
\newcommand{\Obs}{\Omega} 
\newcommand{\obs}{\omega} 
\newcommand{\I}{I} 
\newcommand{\B}{B} 
\newcommand\emptyword\varepsilon
\def\Nat{\mathbb{N}}
\def\Real{\mathbb{R}}
\newcommand{\bbS}{\mathbb{S}}
\newcommand{\bbN}{\mathbb{N}}
\newcommand{\bbP}{\mathbb{P}}
\newcommand{\bbR}{\mathbb{R}}
\newcommand{\cB}{\mathcal{B}}
\newcommand{\cI}{\mathcal{I}}
\newcommand{\cS}{\mathcal{S}}
\newcommand\restr[2]{{
  \left.\kern-\nulldelimiterspace 
  #1 
  \vphantom{\big|} 
  \right|_{#2} 
  }}
\newif\ifdraft\drafttrue 
\title{Minimisation in Logical Form}
\author{
  Nick Bezhanishvili \and  Marcello Bonsangue \and
  Helle Hvid Hansen \and  Dexter Kozen \and
  Clemens Kupke \and   Prakash Panangaden \and Alexandra Silva
}
\date{\today} 
\begin{document}
\maketitle


\section{Introduction}

%

The role of category theory, algebra, and logic in deepening our understanding of semantics and algorithms in Computer Science has long been one of Samson's flagships. His seminal paper {\it Domain Theory in Logical Form}~\cite{Abr91}  studies the connection between program logic and domain theory via Stone duality. This is an example of a fundamental duality in Computer Science between semantics, operational or denotational, and syntax, provided as a logic or specification language.

Building on Stone's celebrated representation theorems for Boolean algebras \cite{Sto36} and distributive lattices \cite{Sto37}, categorical dualities linking algebra and topology~\cite{Joh82} have been widely used in logic and theoretical computer science 
\cite{CZ97,BdRV01,GHKLMS03}. With algebras corresponding to the syntactic, deductive side of logical systems, and topological spaces to their semantics, Stone-type dualities  provide a powerful mathematical framework for studying various properties of logical systems. More recently,  it has also been fruitfully explored in more algorithmic applications, notably in understanding minimisation of various types of automata~\cite{AdamekBHKMS12,BKP:WoLLIC,BBRS12,BBHPRS:ACM-TOCL,KlinR16,Rot16,ganty}. Among these, \cite{BKP:WoLLIC} and \cite{BBHPRS:ACM-TOCL} had striking similarities in the approach yet it was not clear whether the differences could be harmonised in a uniform way. The main aim of this paper is to find a unifying perspective on the minimisation constructions in \cite{BKP:WoLLIC} and \cite{BBHPRS:ACM-TOCL}. Duality will play a central role in achieving our aim of unification of approaches, which puts us on the path forged by Samson. 

In \cite{BKP:WoLLIC}, the authors adopt the coalgebraic perspective on automata and use a dual equivalence between the category of coalgebras and a category of algebras to explain minimisation. The key observation is that the algebras considered provide semantics of a modal logic. This algebraic semantics is dual to coalgebraic semantics in which 
logical equivalence coincides with trace equivalence. 
From this coalgebra-algebra duality it follows that maximal quotients of coalgebras correspond to minimal subobjects of algebras. In order to explain the minimisation algorithm, the authors exploit duality to prove that the maximal quotient of a coalgebra can be constructed by computing
the subalgebra of definable predicates in the dual modal algebra.
The examples in \cite{BKP:WoLLIC} include partially observable DFAs, linear weighted automata viewed as
coalgebras over finite-dimensional vector spaces, and belief automata which are coalgebras on compact Hausdorff spaces.

In \cite{BBHPRS:ACM-TOCL},
Brzozowski's double-reversal minimisation algorithm for deterministic finite automata 
(with both initial and final states) was described categorically and its correctness  explained 
via the duality between reachability and observability, whose origins trace back to seminal work of Kalman in control theory \cite{Kalman59}. Kalman's work was extended to automata theory in a collection of papers by Arbib and Manes~\cite{AZ69,Arbib74,AM75,AM75:Fuzzy,AM75:slatt,AM80:Hankel,AM80}. The work of \cite{BBHPRS:ACM-TOCL} is closely related to these and includes generalisations of Brzozowski's algorithm to Moore 
and weighted automata over commutative semirings. 

The contributions of the present paper are as follows.
\begin{enumerate}
\item A categorical framework within which minimisation algorithms can be understood and different  approaches unified (Section \ref{sec:dual-aut}). We start with a comparison between the approaches in \cite{BKP:WoLLIC} and \cite{BBHPRS:ACM-TOCL} (Section~\ref{sec:previous}) and then proceed to 
a general setup for different automata types based on algebra and coalgebra (Section \ref{sec:aut-alg-coalg}).
Section~\ref{sec:dual-adj-lift} 
includes the categorical picture that unifies the work in \cite{BKP:WoLLIC} and \cite{BBHPRS:ACM-TOCL}: in a nutshell, it is a stack of three interconnected adjunctions. It starts with a base dual adjunction that is subsequently lifted to a dual adjunction between coalgebras and algebras, and finally to a dual adjunction between automata.
Section~\ref{sec:trace-logic} extends this categorical picture place to include trace logic.
Section \ref{sec:reach-obs},
presents an abstract understanding of reachability and observability,
and finally everything is summarised and abstract minimisation algorithms are stated in Section~\ref{sec:abstract-brz}.
\item A thorough illustration of the general framework instantiated to concrete examples. In Section \ref{sec:old-ex}), we revisit a range of examples stemming from previous approaches: deterministic Kripke frames, weighted automata, and topological automata (belief automata). In Section \ref{sec:alternating}, we include an extensive new example on alternating automata, which uses the duality of complete atomic Boolean algebras and sets. 
For weighted automata, we use our framework to extend a well-known result for weighted automata over a field~\cite{Sch61} to weighted automata over a principal ideal domain: the minimal weighted automaton over a principal ideal domain always exists, and, as expected, it has a state space smaller or equal than that of the original automaton.
\end{enumerate}
We conclude the paper with a review of related work (Section \ref{sec:related-work}).


\section{Preliminaries}
\label{sec:prelim}

In this section, we fix notation and recall basic definitions of coalgebras and algebras.
For a more detailed introduction to coalgebra, we refer to \cite{Rutten00}.
For general categorical notions, see e.g.~\cite{AdaHerStr90:ACC}.
We assume familiarity with classic automata such as (non)deterministic finite automata,
and Moore automata.

Categories are denoted by $\catC, \catD, \ldots$,
objects of categories by $X,Y,Z,\ldots$,
and arrows/morphisms of categories by $f,g,h,\dots$. 
We denote by $\Set$ the category of sets and functions.
Let $X_1, X_2$ be in $\catC$.
The product of $X_1$ and $X_2$ (if it exists) is denoted by $X_1 \times X_2$
with projection maps $\pi_i\colon X_1 \times X_2 \to X_i$, $i=1,2$.
Similarly, their coproduct (if it exists) is written $X+Y$
with coprojection maps $\inc_i \colon X_i \to X_1 + X+2$.
In $\Set$, $X \times Y$ and $X+Y$ are the usual constructions of cartesian product and disjoint union.
Let $X$ be an object in $\catC$ and $A$ be a set.
Assuming $\catC$ has products,
then $X^A := \prod_A X$ denotes the $A$-fold product of $X$ with itself.
Similarly, if $\catC$ has coproducts, then $A \cdot X := \coprod_A X$ denotes the $A$-fold coproduct of $X$ with itself.

The covariant powerset functor $\Pow\colon \Set\to\Set$ sends a set $X$ to its powerset $\Pow(X)$ and a function $f\colon X \to Y$ to the direct-image map $\Pow(f)\colon \Pow(X) \to \Pow(Y)$.
The contravariant powerset functor $\cPo\colon \Set\to\SetOp$ also sends a set $X$ to its powerset, now denoted $\cPo(X)$, and a function $f \colon X \to Y$ to its inverse-image map 
$\cPo(f)\colon \Pow(Y) \to \Pow(X)$.

\takeout{For a set $A$ of letters,
$A^*$ denotes the set of all words (i.e., finite sequences) over $A$, and $\epsilon$ the empty word; and $w_1\cdot w_2$ (and $w_1w_2$) the concatenation of words $w_1,w_2 \in A^*$.
}

\subsection{Coalgebras, Algebras and Monads}
\label{sec:coalg-alg}

Given an endofunctor $F \colon \catC \to\catC$,
an \emph{$F$-coalgebra} is a pair $(X, \gamma\colon X\to FX)$,
where $X$ is a $\catC$-object and $\gamma\colon X\to FX$ is a $\catC$-arrow.
The functor $F$ specifies the type of the coalgebra (which may be thought of as the type of observations and transitions), and the structure map $\gamma$ specifies the dynamics.
An \emph{$F$-coalgebra morphism} from an $F$-coalgebra $(X,\gamma)$ to an
$F$-coalgebra $(Y,\delta)$ is a $\catC$-arrow $h\colon X \to Y$ that preserves
the coalgebra structure, {\em i.e.}, $\delta\circ h = Fh \circ \gamma$.
$F$-coalgebras and $F$-coalgebra morphisms form a category denoted by $\Coalg_\catC(F)$. A \emph{final $F$-coalgebra} is a final object in $\Coalg_\catC(F)$, i.e., an $F$-coalgebra $(\Omega,\omega)$ is final if for all $T$-coalgebras $(X,\gamma)$ there is a unique $F$-coalgebra morphism $h \colon (X,c) \to (\Omega,\omega)$.


An $F$-algebra is the dual concept of an $F$-coalgebra.
An \emph{$F$-algebra} is a pair $(X, \alpha)$,
where $X$ is a $\catC$-object and $\alpha\colon FX \to X$ is a $\catC$-arrow.
Now, the functor $F$ can be seen to specify the type of operations of the algebra. An \emph{$F$-algebra morphism} from an $F$-algebra $(X,\alpha)$ to an
$F$-algebra $(Y,\beta)$ is a $\catC$-arrow $h\colon X \to Y$ that preserves the
algebra structure, {\em i.e.}, $h\circ \alpha = \beta \circ Fh$.
$F$-algebras and $F$-algebra morphisms form a category denoted by $\Alg_\catC(F)$. An \emph{initial $F$-algebra} is an initial object $(A,\alpha)$ in $\Alg_\catC(F)$, i.e., for all $F$-algebras $(X,\beta)$ there is a unique $F$-algebra morphism $h\colon (A,\alpha) \to (X,\beta)$.

A \emph{monad} (on $\catC$) is a triple $(T, \eta, \mu)$ consisting of a functor
$T\colon \catC\to\catC$ and two natural transformations
$\eta\colon \Id \to T$ (the unit) and $\mu\colon TT \to T$ (the multiplication) satisfying $\mu \o \eta T = \id_T = \mu \o T\eta$ and $\mu \o T\mu = \mu \o \mu T$.
For brevity, we will sometimes refer to a monad simply by its functor part, leaving the unit and multiplication implicit.
An \emph{Eilenberg-Moore $T$-algebra} \takeout{for a monad $(T,\eta,\mu)$}
is a $T$-algebra $(A, \alpha)$ such that $\alpha \o \eta_A = \id_A$ and
$\alpha \o \mu_A = \alpha \o T\alpha$.
Eilenberg-Moore $T$-algebras and $T$-algebra morphisms form a category denoted by $\EM{T}$.
In particular, for every $X$ in $\catC$, $(TX, \mu_X)$ is the free
Eilenberg-Moore $T$-algebra on $X$,
i.\,e., for every 
$(A,\alpha)$ in $\EM{T}$ and
every $\catC$-arrow $f\colon X \to A$ there is a unique $T$-algebra morphism
(called the \emph{free extension of $f$}) $\ext f\colon (TX,\mu_X) \to (A,\alpha)$ such that $\ext f \o \eta_X = f$.
\takeout{
\begin{equation}
  \label{not:ext}
  \vcenter{
    \xymatrix{
      TTX
      \ar[r]^-{\mu_X}
      \ar[d]_{T\ext f}
      &
      TX
      \ar[d]^{\ext f}
      &
      X
      \ar[l]_-{\eta_X}
      \ar[ld]^-f
      \\
      TA
      \ar[r]_-\alpha
      &
      A
    }
  }
\end{equation}
}
Notice also that we have $\ext f = \alpha \o Tf$.


\subsection{Determinisation}
\label{sec:det}

Let $(T,\eta,\mu)$ be a monad on $\Set$ and 
$F\colon \Set \to\Set$ a functor given by
$FX = B \times X^A$ where $A$ is a set and $B$ is the carrier of an Eilenberg-Moore $T$-algebra $(B,\beta)$.
Then $FT$-coalgebras can be seen as automata with input alphabet $A$, output in $B$ and branching structure given by $T$. For example, nondeterministic automata are $F\Pow$-coalgebras where $FX = 2 \times X^A$
and $\beta = \lor \colon \Pow{2} \to 2$ is the join (or max).
Such $FT$-coalgebras can be ``determinised'' using a generalisation of the classic powerset construction \cite{SilvaBBR10}, and the result can be seen as an $F$-coalgebra in the category $\EM{T}$.
We follow \cite{Bartels:PhD,Jacobs:bialg-dfa-regex} in explaining this general construction. As shown in \cite{Jacobs:bialg-dfa-regex}, there is a so-called distributive law $\lambda\colon TF \To FT$ of the monad $(T,\eta,\mu)$ over the functor $F$ given by
\begin{equation}
 \xymatrix@C=5em{
  \lambda_X \colon T (B \times X^A) \ar[r]^-{\tup{T{\pi_1},T{\pi_2}}}
     & TB\times T(X^A) \ar[r]^-{\beta\times \st}
     & B\times (TX)^A
}
\label{eq:lambda}
\end{equation}
where $\st\colon T \o (-)^A \To (-)^A \o T$ is the strength natural transformation that exists for all monads on $\Set$.
Such a distributive law $\lambda$ corresponds to a lifting of $F\colon \Set\to\Set$ to a functor $\Flam\colon \EM{T} \to \EM{T}$ \cite{Johnstone:Adj-lif},
and it induces a functor
$\det{(-)} \colon \Coalg_\Set(FT) \to \Coalg_{\EM{T}}(\Flam)$
which sends an $FT$-coalgebra $\gamma = \tup{o,t}\colon X \to B \times (TX)^A$
to its determinisation 
$\det{\gamma} = F\mu_X \circ\lambda_{TX} \circ T\gamma$,
that is,
\begin{equation}
\det{\gamma} =
\xymatrix{
TX \ar[r]^-{T\gamma}
& T(B \times (TX)^A) \ar[r]^-{\lambda_{TX}}
& B \times (TTX)^A \ar[r]^-{B \times (\mu_X)^A}
& B \times (TX)^A
}
\label{eq:det}
\end{equation}
Another perspective is that $\lambda$ induces an Eilenberg-Moore $T$-algebra structure $\alpha$ on $FTX$, and $\det{\gamma}\colon (TX,\mu_S) \to (FTX,\alpha)$ is the free extension of $\gamma$ induced by $\alpha$. This also justifies our use of the notation $\det{(-)}$.

The determinisation $\det{\gamma}$ can be seen as a Moore automaton in $\EM{T}$. 
We will use the determinisation construction in order to place alternating automata and weighted automata in our general minimisation framework.

\section{Minimisation via Dual Adjunctions}
\label{sec:dual-aut}


\subsection{Unifying Previous Approaches}
\label{sec:previous}

One aim of this paper is to find a unifying perspective on the automata minimisation constructions in
\cite{BKP:WoLLIC} and \cite{BBHPRS:ACM-TOCL}.
We therefore start by summarising the two papers, and discuss the differences and similarities.

First we establish some terminology regarding key notions.
A classic DFA is reachable if all states are reachable by reading some word from the initial state,
it is observable if no two states accept the same language, and
it is minimal if it is both reachable and observable.
These notions can be generalised to other types of automata using that
automata are in a sense both algebras and coalgebras as we will explain in
Section~\ref{sec:aut-alg-coalg}.
We will call an algebra \emph{reachable} if it has no proper subalgebras,
and a coalgebra is \emph{observable} if it has no proper quotients.
A (generalised) automaton is then \emph{minimal} if its algebra part is reachable and
its coalgebra part is observable.
Note that in the literature, observable coalgebras are usually called minimal coalgebras.

In \cite{BKP:WoLLIC}, (generalised) Moore automata (without initial state) are modelled as coalgebras
for a functor $F = B \times (-)^\Alph$ on base categories of algebras or topological spaces.
The main observation used in \cite{BKP:WoLLIC} is that for many types of such coalgebras,
one can define a category of algebras that is dually equivalent to the category of coalgebras.
This dual equivalence can be seen as a generalisation of the Jonsson-Tarski duality known from modal logic,
which in turn arises from Stone duality.
The algebras in \cite{BKP:WoLLIC} are therefore understood as modal algebras, i.e.,
they consist of an algebra (that describes a propositional logic, e.g., Boolean logic)
expanded with the modal operators.
From this coalgebra-algebra duality it follows that maximal quotients of coalgebras
correspond to minimal subobjects of algebras.

The main contribution of \cite{BKP:WoLLIC} can then be formulated as follows:
Letting $\gamma$ be a coalgebra,
the minimal subalgebra of its dual modal algebra $\alpha$
consists of the predicates over $\gamma$ that are definable in the modal logic.
Hence to make $\gamma$ observable, compute the subalgebra of definable subsets
(which is reachable by construction), and dualise to obtain an observable coalgebra.
Although, this is not stated in \cite{BKP:WoLLIC}, for classic automata
the computation of definable subsets corresponds to the partition refinement algorithm.

The minimisation-via-duality approach of \cite{BKP:WoLLIC} was shown to apply to
partially observable DFAs 
(using duality of finite sets and finite Boolean algebras),
linear weighted automata 
(using the self-duality of vector spaces),
and belief automata viewed as coalgebras on compact Hausdorff spaces
(using Gelfand duality).
Moreover, for each of these examples it is shown that the definable subsets are determined by the
subsets definable in the trace logic fragment consisting of formulas of the shape 
$\boxmod{a_0} \cdots \boxmod{a_n}p$

In \cite{BBHPRS:ACM-TOCL},
Brzozowski's double-reversal minimisation algorithm \cite{Brz62} for classic automata 
was described categorically.
The Brzozowski algorithm works as follows.
Starting with a classic, finite (possibly nondeterministic) automaton accepting a language $\Lang$,
reverse the transitions,
swap initial and final states, and make the result deterministic using the subset construction.
This reversed automaton accepts the reversed language $\rev(\Lang)$.
Take the reachable part of the reversed automaton.
Now, do all of this again.
The result will be a reachable and observable (i.e., minimal) classic deterministic automaton accepting $\Lang$.
\rem{(In practice, one would do the subset construction and reachability on-the-fly such that only the reachable part is actually constructed,
rather than constructing the full powerset and only afterwards take the reachable part.)
}
The correctness of the algorithm was explained in \cite{BBHPRS:ACM-TOCL}
via the duality between reachability and observability known from control theory
(cf.~\cite{Kalman59,AZ69,Arbib74}.
This duality arises from a dual adjunction between algebras and coalgebras,
and therefore only works in one direction,
so to speak, namely, a reachable algebra dualises to an observable coalgebra, but not vice versa.
This, however, is sufficient to formalise Brzozowski's algorithm in terms of a
dual adjunction between categories of automata (with both initial and final states).

Generalisations of the Brzozowski algorithm were then formulated in \cite{BBHPRS:ACM-TOCL}
for Moore automata (over $\Set$)
and weighted automata, which include nondeterministic and linear weighted automata as instances.
More precisely, weighted automata were first determinised into
Moore automata over semimodules, and after the reverse-determinise step,
the semimodule structure is forgotten in order to take the reachable part.
Example~8.3 in \cite{BBHPRS:ACM-TOCL} illustrates that one generally wants to take a
subsemimodule that spans the reachable part, but this was not fully formalised.
One aim of the present paper is to make this part precise.

We summarise the main differences and similarities.
In \cite{BKP:WoLLIC}, the minimisation-via-duality approach
produces from a coalgebra (with structured state space),
an observable coalgebra of the same type.
In \cite{BBHPRS:ACM-TOCL}, the Brzozowski-based approach
starts with a $\Set$-based automaton
that possibly has branching structure specified by a monad $T$.
This automaton is determinised to yield a Moore automaton
over the category $\EM{T}$ of Eilenberg-Moore algebras for $T$,
and the result is a reachable and observable (i.e., minimal) Moore automaton
over $\EM{T}$.
If the automaton has no branching, we just proceed with Brzozowki over $\Set$.
In Appendix~\ref{app:example}, we give a small example illustrating the difference between
the two minimisation constructs on a concrete DFA.
In \cite{BKP:WoLLIC}, the perspective is based on modal logic.
Language semantics and reachability of automata is not an explicit part of the story,
although it is implicitly present via trace logic, however the connection to reachability
(in the usual set-theoretic sense) is not made.
In \cite{BBHPRS:ACM-TOCL}, the perspective is language-based. No link is made to modal logic.
\rem{
Towards the aim of finding a unified categorical of both algorithms,
a few observations can be made already.
A dual equivalence is a special case of a dual adjunction.
}

In the remainder of this section, we present a categorical picture
that unifies both approaches.
In particular,
our picture formalises the role of trace logic in the minimisation algorithms. Some of the technical details of this part are known from ~\cite{Rot16,KerKoeWes14,HermidaJacobs98,BBHPRS:ACM-TOCL} -- precise connections are detailed throughout the sections and in Section~\ref{sec:related-work}.

\subsection{Automata, Algebras and Coalgebras}
\label{sec:aut-alg-coalg}

Throughout this paper, we let $\Alph$ be a finite set. We will consider different types of automata, but they will all have input alphabet $\Alph$.

A classic deterministic automaton (on alphabet $\Alph$) consists of
a set $X$ (the state space),
a transition map $t \colon X \to X^\Alph$ (or equivalently $t \colon \Alph \times X \to X$), 
an acceptance map $f \colon X \to 2$, and
an initial state $i \colon 1 \to X$.
We generalise this basic definition to arbitrary categories as follows.

\begin{definition}\label{def:C-aut}
Let $\catC$ be a category, and let $\I$ and $\B$ be objects in $\catC$.
A~\emph{$\catC$-automaton 
(with initialisation in $\I$ and output in $\B$)}
is a quadruple $\str{X} = (X,t,i,f)$
consisting of
a state space object (or carrier) $X$ in $\catC$,
a $\Alph$-indexed set of transition morphisms $\{t_a \colon X \to X \mid a \in \Alph\}$, 
an initialisation morphism $i \colon \I \to X$, and 
an output morphism $f \colon X \to \B$.
A \emph{$\catC$-automaton morphism} from $\str{X}_1 = (X_1,t_1,i_1,f_1)$ to $\str{X}_2 = (X_2,t_2,i_2,f_2)$
is a $\catC$-morphism $h \colon X_1 \to X_2$
such that for all $a \in \Alph$, $h \o t_{1,a} = t_{2,a} \o h$,
$f_1 = f_2 \o h$, and 
$h \o  i_1 = i_2$. 
Together, $\catC$-automata with initialisation in $\I$ and output in $\B$,
and their morphisms form a category which we denote by $\Aut^{I,B}_\catC$.
\end{definition}

A classic deterministic automaton is then easily seen to be a $\Set$-automaton with output in $2$
and initialisation in $1$.

A central observation in \cite{BBHPRS:ACM-TOCL} is that
automata can be seen as coalgebras with initialisation, or dually, as algebras with output,
as we briefly recall now.
Assuming that $\catC$ has products and coproducts,
the transition morphisms  $\{t_a \colon X \to X \mid a \in \Alph\}$
correspond uniquely to morphisms of the following type:
\begin{equation}
\begin{array}{c}
\tup{t_a}_{a \in \Alph}\colon X \to X^\Alph\\
\hline\hline
[t_a]_{a \in \Alph}\colon \Alph \cdot X \to X
\end{array}
\end{equation}

Letting $F$ and $G$ be endofunctors on $\catC$ given by
$FX = \B \times X^\Alph$
and 
$GX = \I + \Alph \cdot X$, we see that
a $\catC$-automaton is an $F$-coalgebra $\tup{f,\tup{t_a}_{a \in \Alph}}\colon X \to B \times X^\Alph$
with intialisation $i\colon I \to X$.
Or equivalently, a $G$-algebra $[i,[t_a]_{a \in \Alph}]\colon GX \to X$ with output $f\colon X \to B$.


\subsection{Dual Adjunctions of Coalgebras, Algebras and Automata}
\label{sec:dual-adj-lift}

The categorical picture that unifies the work in \cite{BKP:WoLLIC} and \cite{BBHPRS:ACM-TOCL}
is sketched in the diagram \eqref{eq:unify-pic} below. 
This picture starts with a base dual adjunction that is lifted
to a dual adjunction between coalgebras and algebras.
This adjunction captures the construction in \cite{BKP:WoLLIC}
for obtaining observable coalgebras via duality.
The coalgebra-algebra adjunction is then lifted to a dual adjunction between automata
which captures the formalisation of the Brzozowski algorithm from \cite{BBHPRS:ACM-TOCL},
which uses automata with initial states.
\rem{
\blue{As shown above, automata have an algebra-part and a coalgebra part.
In order to generalise the notion of reachability on both sides
of the adjunction, we need an algebraic view on both sides.
Similarly, in order to generalise the notion of language semantics
on both sides, we need a coalgebraic view on both sides. [HH: I am thinking about how to present this with the minimal means...maybe we don't need the dual adj below on the right.]}
}
In the remainder of the section, we will explain the details
of how this picture comes about. 
\begin{equation}\label{eq:unify-pic}
\ba{ccc}
  \xymatrix@C=0.8em@R=4em{
    \left(\Aut_\catC^{I,R(O)}\right)^\op \ar@/^1pc/[rrrr]^-{\ol{\Pred}'} \ar@{}|{\top}[rrrr] \ar[d] &&&&
    \Aut_\catD^{O,L(I)} \ar@/^1pc/[llll]^-{\ol{\Spc}'} \ar[d]
    \\
    \Coalg_\catC(F_\catC)^\op \ar@/^1pc/[rrrr]^-{\ol{\Pred}} \ar@{}|{\top}[rrrr] \ar[d]
    &&&& \Alg_\catD(G_\catD) \ar@/^1pc/[llll]^-{\ol{\Spc}} \ar[d]
    \\
  \ar@(dl,ul)^-{F_\catC^\op} \catC^\op \ar@/^1pc/[rrrr]^-{\Pred} \ar@{}|{\top}[rrrr] &&&& 
  \ar@(dr,ur)_-{G_\catD} \catD \ar@/^1pc/[llll]^-{\Spc}
  }
\rem{
\xymatrix@C=0.8em@R=4em{
    \Aut_\catC^{I       ,R(O)} \ar@/^1pc/[rrrr]^-{\wt{L}'} \ar@{}|{\bot}[rrrr] \ar[d] &&&&
    \left(\Aut_\catD^{O,L(I)}\right)^\op \ar@/^1pc/[llll]^-{\wt{R}'} \ar[d]
    \\
    \Alg_\catC(G_\catC) \ar@/^1pc/[rrrr]^-{\wt{L}} \ar@{}|{\bot}[rrrr] \ar[d]
    &&&& \Coalg_\catD(F_\catD)^\op \ar@/^1pc/[llll]^-{\wt{R}} \ar[d]
    \\
  \ar@(dl,ul)^-{G_\catC} \catC \ar@/^1pc/[rrrr]^-{L} \ar@{}|{\bot}[rrrr] &&&& 
  \ar@(dr,ur)_-{F_\catD^\op} \catD^\op \ar@/^1pc/[llll]^-{R}
  }
  }
  \\\\
  F_\catC = \Spc (O) \times (-)^\Alph, \qquad
  G_\catD = O + \Alph \cdot (-)
\rem{
  \ba{rcl}
  F_\catC &=& \Spc (O) \times (-)^\Alph\\
  G_\catD &=& O + \Alph \cdot (-)
  \ea}
  
\rem{
&
  &
  \ba{rcl}
  F_\catD &=& \L (I) \times (-)^\Alph\\
  G_\catC &=& I + \Alph \cdot (-)
  \ea
}
\\
\rem{
\ba{l}
  \xi \colon  F_\catC \Spc \natIso \Spc G_\catD\\
  \rho\colon G_\catD \Pred \To \Pred F_\catC
  \ea
}
\rem{
&&
  \ba{l}
  \chi \colon \L G_\catC \natIso F_\catD \L\\
  \delta\colon G_\catC \Spc \To \Spc F_\catD
  \ea
}
\ea     
\end{equation}

\subsubsection{Base dual adjunction}\label{sec:base}

Our starting point is a dual adjunction $\Spc \dashv \Pred$ between categories $\catC$ and $\catD$
as in the above picture.
We will generally try to avoid the use of superscript $\op$,
and treat $\Pred$ and $\Spc$ as contravariant functors.
The units of the dual adjunction will be denoted
$\eta \colon \Id \To \Pred\Spc$
and
$\coun \colon \Id \To \Spc\Pred$.
The natural isomorphism of Hom-sets $\theta_{X,Y} \colon \catC(X,\Spc Y) \to \catD(Y, \Pred X)$,
will sometimes be written in both directions simply
as $f \mapsto \transp{f}$.
For $f \colon X \to \Spc{Y}$, its adjoint is $\transp{f} = \Pred{f} \o \eta_Y$,
and for $g \colon Y \to \Pred{X}$, its adjoint is $\transp{g} = \Spc{g} \o \coun_X$.

In all our examples, $\catC$ and $\catD$ are concrete categories, and 
the dual adjunction arises from homming into a dualising object $\dua$ (cf.~\cite{PorTho91}),
i.e., $\Pred = \catC(-,\dua)$ and $\Spc = \catD(-,\dua)$,
and we will often denote both of them by $\dua^{(-)}$.  
This means that adjoints are obtained simply by swapping arguments.
E.g., for $f\colon Y \to \dua^X$ we have
$\transp{f}(x)(y) = f(y)(x)$.
Moreover, the units are given by evaluation. E.g. $\eta_X \colon X \to \dua^{\dua^X}$
is defined by $\eta_X(x)(f) = f(x)$.

\begin{example}\label{exm:self-dual-powerset}
  A central example is the self-dual adjunction of $\Set$ 
given by the \emph{contravariant powerset functor} $\cPo = \Set(-,2)$
which maps a set $X$ to its powerset $2^X$ and a function $f:X\to Y$
to its inverse image map $f^{-1}\colon 2^Y \to 2^X$.
The functor $\cPo$ is dually self-adjoint with $\cPo^\op\dashv\cPo$,
and the isomorphism of Hom-sets is given by taking exponential transposes,
i.e., for $f\colon X \to 2^Y$ we have $\transp{f}\colon Y \to 2^X$.
\end{example}

Dual adjunctions are also called \emph{logical connections} as they form the basis of semantics
for coalgebraic modal logics \cite{BonKur05,Kli07,JacSok10}.
In this logic perspective,
$\catC$ is a category of state spaces,
$\catD$ is a category of algebras (e.g. Boolean algebras) encoding a propositional logic,
and the functor $G_\catD$ encodes a modal logic.
Intuitively, the adjoint $\Pred$ maps a state space $C$ to the predicates over $C$,
and $\Spc$ maps a predicate $A$ to the theories of $A$. 
The logic given by $G_\catD$ can be interpreted over
$F_\catC$-coalgebra by providing a so-called one-step modal semantics in the form of a
natural transformation 
$\rho\colon G_\catD \Pred \To \Pred F_\catC$,
or equivalently via its mate
$\xi \colon  F_\catC \Spc \To \Spc G_\catD$.
The pair $(G_\catD,\rho)$ is referred to as a logic.
By assuming that the initial $G_\catD$-algebra $(A_0,\alpha_0)$ exists, and viewing its elements as formulas,
the semantics of formulas in a $F_\catC$-coalgebra is $(C,\gamma)$ is obtained by initiality:
$\sm{G_\catD} \colon (A_0,\alpha_0) \to \Pred(\gamma) \o \rho_C$,
i.e., as an underlying $\catD$-map, it has type
$\sm{G_\catD} \colon A_0 \to \Pred(C)$.
Alternatively, the semantics can be specified by the theory map
$\th{G_\catD} \colon C \to \Spc(A_0)$ which is defined as 
the adjoint of $\sm{G_\catD}$.
We refer to \cite{BonKur05,Kli07,JacSok10} for a more detailed introduction
to coalgebraic modal logic via dual ajdunctions.

\subsubsection{Dual adjunction between coalgebras and algebras}\label{sec:liftcoalg}

We lift the base dual adjunction to coalgebras and algebras using some
some basic results from \cite{HermidaJacobs98,KerKoeWes14}. 
We assume that
$\catC$ has products, $\catD$ has coproducts,
and that we have functors $F_\catC$ and $G_\catD$ 
as given above, i.e.,
\[
F_\catC(C) = \Spc(O) \times C^\Sigma \quad\text{ and }\quad
G_\catD(D) =  O + \Sigma \cdot D
\]

We know from 
\cite[Cor.~2.15]{HermidaJacobs98} (see also \cite[Thm.~2.5]{KerKoeWes14}),
that
the dual base adjunction $\Spc \dashv \Pred$ lifts to a dual adjunction 
$\ol{\Spc} \dashv \ol{\Pred}$  between $\Coalg_\catC(F_\catC) = \Alg_{\catC^\op}(F_\catC^\op)$ and $\Alg_{\catD}(G_\catD)$
if there is a natural isomorphism $\xi\colon F_\catC \Spc \natIso \Spc G_\catD$.
We have for all $D \in \catD$,          
\begin{equation}\label{eq:xi}        
\Spc G_\catD(D) = \Spc(O + \Alph \cdot D) \cong \Spc(O)\times \Spc(D)^\Alph = F_\catC\Spc(D)
\end{equation}
since $\Spc$ (as a dual adjoint functor) turns colimits into limits.
Hence there is a natural isomorphism $\xi\colon F_\catC \Spc \natIso \Spc G_\catD$.
Let $\rho\colon G_\catD \Pred \To \Pred F_\catC$ be the mate of $\xi$,
i.e., the adjoint of $\xi_\Pred \circ F_\catC \coun$:
\begin{equation}\label{eq:mate}
\rho = \Pred F_\catC \coun \o \Pred \xi\Pred \o \eta G_\catD \Pred 
\end{equation}
The lifted adjoint functors are defined for
all $F_\catC$-coalgebras $\gamma\colon C \to F_\catC(C)$,
all  $F_\catC$-coalgebra morphisms $f$,
all $G_\catD$-algebras $\alpha \colon G_\catD(D) \to D$,
and all $G_\catD$-algebra morphisms $g$ by:
\begin{equation}\label{eq:ol-LR}
\begin{array}{rclc}
\ol{\Pred}(\gamma) &=&  \Pred\gamma \o \rho_C \colon G_\catD \Pred C \to \Pred C, & \ol{\Pred}(f) = \Pred(f) \\
\ol{\Spc}(\alpha) &=& \xi_D \o \Spc\alpha \colon \Spc D \to F_\catC \Spc D, & \ol{\Spc}(g) = \Spc(g) 
\end{array}
\end{equation}

\rem{
The following lemma states the precise result.
\begin{lemma}\label{lem:lift-coalg-alg}
Assume a dual adjunction $\Spc \dashv \Pred$ between $\catC$ and $\catD$
as described above, and that
the functors $F_\catC\colon\catC\to\catC$ and $G_\catD\colon\catD\to\catD$ are
as above, i.e.,
\[
F_\catC(C) = \Spc(O) \times C^\Sigma \quad\text{ and }\quad
G_\catD(D) =  O + \Sigma \cdot D
\]
where $\Alph$ is a finite set. 
Then the dual adjunction $\Spc \dashv \Pred$ 
lifts to a dual adjunction $\ol{\Spc} \dashv \ol{\Pred}$
between $\Coalg_\catC(F_\catC)$ and $\Alg_{\catD}(G_\catD)$.
The lifted adjoint functors are defined for
all $F_\catC$-coalgebras $\gamma\colon C \to F_\catC(C)$,
all  $F_\catC$-coalgebra morphisms $f$,
all $G_\catD$-algebras $\alpha \colon G_\catD(D) \to D$,
and all $G_\catD$-algebra morphisms $g$ by:
\begin{equation}\label{eq:ol-LR}
\begin{array}{rclc}
\ol{\Pred}(\gamma) &=&  \Pred\gamma \o \rho_C \colon G_\catD \Pred C \to \Pred C, & \ol{\Pred}(f) = \Pred(f) \\
\ol{\Spc}(\alpha) &=& \xi_D \o \Spc\alpha \colon \Spc D \to F_\catC \Spc D, & \ol{\Spc}(g) = \Spc(g) 
\end{array}
\end{equation}
\end{lemma}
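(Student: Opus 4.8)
The plan is to take the functors $\ol{\Pred}$ and $\ol{\Spc}$ of \eqref{eq:ol-LR} as given and verify three things: that they are well defined, that they are functorial on morphisms, and that the base hom-set isomorphism $\theta$ restricts to a natural bijection between the appropriate hom-sets of coalgebras and algebras. Well-definedness on objects is immediate, since a $G_\catD$-algebra (respectively an $F_\catC$-coalgebra) is just an object together with a structure arrow of the correct type and carries no further equational constraint, and $\ol{\Pred}(\gamma)=\Pred\gamma\o\rho_C$ and $\ol{\Spc}(\alpha)=\xi_D\o\Spc\alpha$ are composites of arrows of exactly that type. It is worth observing here that $\ol{\Pred}$ uses only the mate $\rho$ from \eqref{eq:mate}, whereas to equip $\Spc D$ with a coalgebra structure $\Spc D\to F_\catC\Spc D$ the functor $\ol{\Spc}$ relies on $\xi$ being invertible; this is the only place where the \emph{isomorphism} of \eqref{eq:xi}, rather than a mere natural transformation, is needed.

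Next I would check functoriality on morphisms. Given a coalgebra morphism $f\colon(C,\gamma)\to(C',\gamma')$, so that $\gamma'\o f=F_\catC f\o\gamma$, one computes
\[
\Pred f\o\ol{\Pred}(\gamma')=\Pred(\gamma'\o f)\o\rho_{C'}=\Pred\gamma\o\Pred(F_\catC f)\o\rho_{C'}=\Pred\gamma\o\rho_C\o G_\catD\Pred f=\ol{\Pred}(\gamma)\o G_\catD\Pred f,
\]
the third equality being naturality of $\rho\colon G_\catD\Pred\To\Pred F_\catC$; hence $\ol{\Pred}(f)=\Pred f$ is a $G_\catD$-algebra morphism. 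The dual computation, using naturality of $\xi$, shows that $\ol{\Spc}(g)=\Spc g$ is an $F_\catC$-coalgebra morphism, and preservation of identities and composites is inherited from $\Pred$ and $\Spc$.

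The heart of the argument is the adjunction itself. Fixing an $F_\catC$-coalgebra $(C,\gamma)$ and a $G_\catD$-algebra $(D,\alpha)$, and writing $k=\transp h$ for the base transpose of $h\colon C\to\Spc D$, the key claim to establish is the equivalence
\[
\xi_D\o\Spc\alpha\o h=F_\catC h\o\gamma
\quad\Longleftrightarrow\quad
k\o\alpha=\Pred\gamma\o\rho_C\o G_\catD k .
\]
The left equation says exactly that $h$ is a coalgebra morphism $(C,\gamma)\to\ol{\Spc}(\alpha)$, and the right one that $k$ is an algebra morphism $(D,\alpha)\to\ol{\Pred}(\gamma)$. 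Because $\theta$ is already a bijection on the full hom-sets $\catC(C,\Spc D)\cong\catD(D,\Pred C)$, proving this equivalence of the two defining equations immediately yields a bijection between the two hom-sets of structure-preserving maps. I would prove the equivalence by transposing the left-hand equation and rewriting it using naturality of the unit $\eta$ and counit $\coun$, the triangle identities of the base adjunction, and, crucially, the mate formula \eqref{eq:mate}: this last is precisely the ingredient that turns the $\xi$-term into the $\rho$-term. I expect this transposition computation to be the main obstacle, since it is where all the coherence data of the base adjunction and of the mated pair $(\xi,\rho)$ must be assembled correctly.

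Finally, naturality of the restricted bijection in both $(C,\gamma)$ and $(D,\alpha)$ needs no separate verification. Its naturality squares arise by pre- and post-composition with coalgebra and algebra morphisms, whose underlying arrows are ordinary $\catC$- and $\catD$-morphisms, so these squares are simply restrictions of the naturality squares for the base $\theta$, which commute by hypothesis; faithfulness of the forgetful functors then transports commutativity back up. This exhibits $\ol{\Spc}\dashv\ol{\Pred}$ as a dual adjunction. In fact, once $\xi$ is available as in \eqref{eq:xi} the whole statement is an instance of the general lifting result \cite[Cor.~2.15]{HermidaJacobs98}, and the verification above is just the specialisation of that result to the present functors.
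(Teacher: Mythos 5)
Your proposal is correct, and it reaches the conclusion by a more explicit route than the paper, which disposes of this lemma in one line: it invokes the general adjoint-lifting result \cite[Cor.~2.15]{HermidaJacobs98} (equivalently \cite[Thm.~2.5]{KerKoeWes14}) together with the identification $\Coalg_{\catD^\op}(G_\catD^\op)=\Alg_\catD(G_\catD)$, the only thing actually checked being the hypothesis of that corollary, namely that $\xi\colon F_\catC\Spc\natIso\Spc G_\catD$ is a natural isomorphism (which holds because $\Spc$ turns the coproduct $O+\Alph\cdot D$ into the product $\Spc(O)\times\Spc(D)^\Alph$, cf.~\eqref{eq:xi}). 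You instead unfold what that corollary asserts: well-definedness of $\ol{\Pred}$ and $\ol{\Spc}$ on objects, functoriality via naturality of $\rho$ and $\xi$ (your computation for $\ol{\Pred}(f)$ is exactly right), and the restriction of the base hom-set bijection $\theta$ to the equivalence ``$h$ is a coalgebra morphism into $\ol{\Spc}(\alpha)$ iff $\transp{h}$ is an algebra morphism out of $(D,\alpha)$''. This buys a self-contained argument and, as a bonus, isolates the one point the paper's notation obscures: in $\ol{\Spc}(\alpha)=\xi_D\o\Spc\alpha$ the component $\xi_D$ must be read in the direction $\Spc G_\catD D\to F_\catC\Spc D$, i.e.\ invertibility of $\xi$ is genuinely used to define $\ol{\Spc}$, whereas a mere natural transformation $\rho$ would suffice for $\ol{\Pred}$ alone. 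The one step you do not carry out --- the transposition computation establishing the displayed equivalence from the mate formula \eqref{eq:mate} and the triangle identities --- is precisely the content of the cited corollary, so as written your argument still leans on the same external result you name in your final sentence; to make the proof fully self-contained you would need to execute that (routine but fiddly) calculation.
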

\begin{proof}
The lemma follows from \cite[Cor.~2.15]{HermidaJacobs98} (or \cite[Thm.~2.5]{KerKoeWes14}),
and the fact that $\Coalg_{\catD^\op}(G_\catD^\op) = \Alg_\catD(G_\catD)$.
\end{proof}
}

\rem{
The lifting to algebras and coalgebras in the above diagram on the right comes about similarly.
For all $C \in \catC$, we have
\begin{equation}\label{eq:chi}
\Pred G_\catC (C) = \Pred(I + \Alph\cdot C) = \Pred(I) \times (\Pred C)^\Alph = F_\catD L(C)
\end{equation}
Hence there is a natural isomorphism $\chi \colon \Pred G_\catC \natIso F_\catD \Pred$.
Let $\delta\colon G_\catC \Spc \To \Spc F_\catD$ be the mate of $\chi^{-1}$,
i.e., the adjoint of $\chi_\Spc^{-1}\circ F_\catD \coun$:
\begin{equation}\label{eq:mate-delta}
\delta = \Spc F_\catD\coun \o \Spc \xi^{-1}\Spc \o \coun G_\catC\Spc
\end{equation}

\blue{
[HH: Need to check whether we really need this lemma.]
\begin{lemma}\label{lem:lift-alg-coalg}
Assume a dual adjunction $\Spc \dashv \Pred$ between $\catC$ and $\catD$
as described above, and that the functors
$G_\catC\colon\catC\to\catC$ and $F_\catD\colon\catD\to\catD$ are
as above, i.e.,
\[
F_\catD(D) = \Pred(I) \times C^\Sigma \quad\text{ and }\quad
G_\catC(C) =  I + \Sigma \cdot D
\]
where $\Alph$ is a finite set. 
Then the dual adjunction $\Spc \dashv \Pred$ 
lifts to a dual adjunction $\wt{\Spc} \dashv \wt{\Pred}$
between $\Alg_\catC(G_\catC)$ and $\Coalg_{\catD}(F_\catD)$.
The lifted adjoint functors are defined for
all $F_\catD$-coalgebras $\gamma\colon D \to F_\catD(D)$,
all  $F_\catD$-coalgebra morphisms $f$,
all $G_\catC$-algebras $\alpha \colon G_\catC(C) \to C$,
and all $G_\catC$-algebra morphisms $g$ by:
\begin{equation}\label{eq:wt-LR}
\begin{array}{rclc}
\wt{\Spc}(\gamma) &=&  \Spc\gamma \o \delta_D \colon G_\catC \Spc D \to \Spc D, & \wt{\Spc}(f) = \Spc(f) \\
\wt{\Pred}(\alpha) &=& \chi_C \o \Pred\alpha \colon \Pred C \to F_\catD \Pred C, & \wt{\Pred}(g) = \Pred(g) 
\end{array}
\end{equation}
\end{lemma}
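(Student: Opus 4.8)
The plan is to read this lemma as the exact mirror image of the coalgebra--algebra lifting established in Section~\ref{sec:liftcoalg}, and to obtain it from the same general mechanism \cite[Cor.~2.15]{HermidaJacobs98} (equivalently \cite[Thm.~2.5]{KerKoeWes14}). The key observation is that a dual adjunction $\Spc\dashv\Pred$ between $\catC$ and $\catD$ is symmetric: its defining hom-set isomorphism $\catC(X,\Spc Y)\cong\catD(Y,\Pred X)$ is, read the other way round, a dual adjunction between $\catD$ and $\catC$ with the roles of $\Spc$ and $\Pred$ interchanged. Under the relabelling $(\catC,\Spc,O)\rightsquigarrow(\catD,\Pred,I)$ the functor $F_\catC=\Spc(O)\times(-)^\Alph$ turns into $F_\catD=\Pred(I)\times(-)^\Alph$ and $G_\catD=O+\Alph\cdot(-)$ turns into $G_\catC=I+\Alph\cdot(-)$, so the present statement is literally the conclusion of Section~\ref{sec:liftcoalg} applied to the swapped data. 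In principle this already finishes the argument; below I spell out the one computation that must be redone in the new variables.

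The ingredient playing the role of $\xi$ is a natural isomorphism $\chi\colon\Pred G_\catC\natIso F_\catD\Pred$. I would produce it exactly as in~\eqref{eq:xi}, but running the colimit-to-limit argument on $\Pred$ instead of $\Spc$: since $\Pred$ is a dual adjoint it sends the coproduct $I+\Alph\cdot C$ in $\catC$ to the product $\Pred(I)\times(\Pred C)^\Alph$ in $\catD$, and the latter is precisely $F_\catD\Pred(C)$; naturality in $C$ is automatic. This is the only genuinely new verification, and it is routine.

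With $\chi$ available the lifted functors are forced. I would take $\delta\colon G_\catC\Spc\To\Spc F_\catD$ to be the mate of $\chi^{-1}$, mirroring the passage from $\xi$ to $\rho$ in~\eqref{eq:mate}, and then set $\wt\Spc(\gamma)=\Spc\gamma\o\delta_D$ and $\wt\Pred(\alpha)=\chi_C\o\Pred\alpha$ on objects, letting $\wt\Spc,\wt\Pred$ act as $\Spc,\Pred$ on morphisms, in complete analogy with~\eqref{eq:ol-LR}. A type-check confirms that $\wt\Spc(\gamma)\colon G_\catC\Spc D\to\Spc D$ is a $G_\catC$-algebra and $\wt\Pred(\alpha)\colon\Pred C\to F_\catD\Pred C$ an $F_\catD$-coalgebra, so the functors land in the intended categories.

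The cited theorem then delivers the dual adjunction $\wt\Spc\dashv\wt\Pred$; the only point I would treat with care is that the base hom-set bijection $\theta$ restricts to a bijection between $F_\catD$-coalgebra morphisms and $G_\catC$-algebra morphisms. Concretely, one must check that a $\catD$-map is a coalgebra morphism for $\gamma$ and $\wt\Pred(\alpha)$ iff its transpose is an algebra morphism for $\alpha$ and $\wt\Spc(\gamma)$, which unwinds to the naturality of $\chi$ (equivalently of $\delta$) together with the triangle identities for $\eta$ and $\coun$. Since this is exactly what \cite[Cor.~2.15]{HermidaJacobs98}/\cite[Thm.~2.5]{KerKoeWes14} guarantees once $\chi$ is supplied, I anticipate no real obstacle; the single bookkeeping subtlety is that one must mate $\chi^{-1}$ rather than $\chi$, which is what makes the variances line up so that $\wt\Spc$ and $\wt\Pred$ point in the correct directions.
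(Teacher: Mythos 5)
Your proposal is correct and follows essentially the same route as the paper: the paper likewise obtains the natural isomorphism $\chi\colon\Pred G_\catC\natIso F_\catD\Pred$ from $\Pred$ (as a dual adjoint) turning the coproduct $I+\Alph\cdot C$ into the product $\Pred(I)\times(\Pred C)^\Alph$, takes $\delta$ to be the mate of $\chi^{-1}$, and then invokes the Hermida--Jacobs lifting theorem together with the identification $\Alg_{\catD^\op}(F_\catD^\op)=\Coalg_\catD(F_\catD)$. Your explicit ``swap the roles of $\Spc$ and $\Pred$'' framing and the remark about mating $\chi^{-1}$ rather than $\chi$ are just more verbose renderings of the same argument.
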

\begin{proof}
The lemma follows from \cite[Thm.~2.14]{HermidaJacobs98} 
and the fact that $\Alg_{\catD^\op}(F_\catD^\op) = \Coalg_\catD(F_\catD)$.
\end{proof}
}
}

\begin{remark}\label{rem:isoF}
If $F'_\catC\colon\catC\to\catC$ is $F'_\catC(C) = B \times C^\Alph$ with $B \cong \Spc(O)$,
then $F'_\catC \natIso F_\catC$, and hence $\Coalg_\catC(F'_\catC) \cong \Coalg_\catC(F_\catC)$,
so we can think of $F'_\catC$-coalgebras as $F_\catC$-coalgebras.
\end{remark}

The isomorphism $\ol{\theta}$ of Hom-sets for $\ol{\Spc} \dashv \ol{\Pred}$
is simply the restriction of the isomorphism $\theta$ of Hom-sets
for $\Spc \dashv \Pred$ to the relevant morphisms.
\rem{
\[
\ba{c}
\gamma \sgoes{\transp{h}} \ol{\Spc}(\alpha)\\
\hline\hline
\alpha \sgoes{h} \ol{\Pred}(\gamma)
\ea
\]
}

The natural transformation $\rho \colon G_\catD\Pred \To \Pred F_\catC$
provides the one-step semantics for a modal logic for $F_\catC$-coalgebras
as described at the end of Section~\ref{sec:base}.
This makes most sense when the dual adjunction arises from a dualising object $\dua$
in which case $\dua$ is a domain of truth-values, i.e., the logic is $\dua$-valued,
and when $\catD$ is category of algebras with operations given by a signature $Sgn$.
The algebra functor $G_\catD = O + \Alph\cdot(-)$ then corresponds to a modal language
$\LLang{G_\catD}$ that has atomic propositions from $O$, labelled modalities $\boxmod{a}$, $a \in \Alph$,
and the propositional connectives are the operations from $Sgn$.
That is, formulas in $\LLang{G_\catD}$ are generated by the following grammar:
\[
  \phi ::= q \in O \mid \boxmod{a}\phi, a \in \Alph \mid \sigma(\Psi), \sigma \in Sgn
\]
where $\Psi$ is a set of formulas of cardinality matching the arity of the operation $\sigma$.

\rem{
For our specific choice of functors $F_\catC$ and $G_\catD$,
we compute the concrete definition of $\rho$ from \eqref{eq:mate}
when the adjunction arises from a dualising object.
\[\ba{lcccccl}
\rho_X\colon\\
O + \Alph\cdot\Pred{X} & \sgoes{\eta_{G\Pred{X}}}
& \Pred\Spc(O+\Alph\cdot\Pred{X}) & \sgoes{\sim}
& \Pred(\Spc{O}\times(\Spc\Pred{X})^\Alph) & \sgoes{\Pred F_\catC\coun_X}
& \Pred(\Spc{O} \times X^\Alph)
\\
O + \Alph\cdot\dua^{X} & \sgoes{\eta_{G(\dua^{X})}}
& \dua^{\dua^{(O+\Alph\cdot\dua^{X})}} & \sgoes{\sim}
& \dua^{(\dua^{O}\times(\dua^{\dua^{X}})^\Alph)} & \sgoes{\dua^{F_\catC\coun_X}}
& \dua^{\dua^{O} \times X^\Alph}
\\
q & \mapsto & \eta(q) & \mapsto &  \eta(q) & \mapsto &
\lambda (j,d) . j(q)
\\
\qquad (a,p) & \mapsto & \eta(a,p) & \mapsto &  \eta(a,p) & \mapsto &
\lambda (j,d) . p(d(a))
\ea\]
Here we used that the units evaluate, and that
\[\ba{lccc}
\dua^{F_\catC\coun_X} \colon
& \dua^{(\dua^{O}\times(\dua^{\dua^{X}})^\Alph)} & \sgoes{}
& \dua^{\dua^{O} \times X^\Alph}
\\
& h & \mapsto &
\lambda(j,d) . h (j, \lambda a. \lambda g. g(d(a)))
\ea\]
In short,
\begin{equation}\label{eq:rho-concrete}
\ba{rc}
\rho_X\colon\colon
& O + \Alph\cdot\dua^{X} 
\sgoes{}
\dua^{\dua^{O} \times X^\Alph}
\\
&
\ba{rcl}
\rho(q)(j,d) &=& j(q)\\
\rho(a,p)(j,d) &=& p(d(a))
\ea            
\ea
\end{equation}
}

For our specific choice of functors $F_\catC$ and $G_\catD$,
and when the adjunction arises from a dualising object $\dua$,
we can compute the concrete definition of $\rho$ from \eqref{eq:mate}
(see Appendix~\ref{app:rho-sem}) and we get the following $\dua$-valued modal semantics
of the language $\LLang{G_\catD}$:
\[
\ba{rcll}
\sem{q}(x) &=& j(q), & \text{where } \gamma(x) = \tup{j\colon \dua^O,d\colon X^\Alph}
\\
\sem{\boxmod{a}\phi}(x) &=& \sem{\phi}(d(a)), &  \text{where } \gamma(x) = \tup{j\colon \dua^O,d\colon X^\Alph}
\\
\sem{\sigma(\Psi)}(x) &=& \sigma(\{\sem{\psi}(x) \mid \psi \in \Psi\})
\ea\]

This shows that $\rho$ gives the expected modal semantics for
$F_\catC$-coalgebras viewed as deterministic $\Alph$-labelled Kripke frames with observations from $O$.
In particular, the modalities are ``deterministic'' Kripke box/diamond-modalities. 

\begin{example}\label{exa:dfa}
We consider the case of classic deterministic automata.
Here $\catC =\catD=\Set$, 
$F_\Set = 2 \times (-)^\Alph$ and $G_\Set = 1 + \Alph\cdot(-)$,
and the self-dual adjunction of $\Set$ is
given 
by the contravariant powerset functor $\cPo = \Set(-,2)$ (Example~\ref{exm:self-dual-powerset}).
The logic we obtain is \emph{trace logic} \cite{Kli07},
but here interpreted over DFAs rather than labelled transition systems as in \cite{Kli07}.
The initial $G_\Set$-algebra is $\Alph^*$, the set of finite words over $\Alph$,
and these are the formulas $\LLang{G_\catD}$,
since $\catD = \Set$ means that there are no propositional connectives.
The natural transformation $\rho$ has type
$\rho_X \colon 1 + \Alph\cdot 2^X \to 2^{2 \times X^\Alph}$,
and is given concretely here together with the induced semantics, where we write $x \Vdash \phi$ iff $\sem{\phi}(x) = 1$:
\[
\ba{rcl}
\rho_X(*) &=&  \{ (b,d) \in 2 \times X^\Alph \mid b = 1\}\\
\rho_X(a,U) &=&  \{ (b,d) \in 2 \times X^\Alph \mid d(a) \in U \}
\ea
\ba{||lcl}
x \Vdash *  & \iff & \text{$x$ is accepting}\\
x \Vdash \boxmod{a}\phi  & \iff & x \sgoes{a} y \text{ and } y \Vdash \phi 
\ea
\]
\end{example}

\subsubsection{Dual adjunction between automata}\label{sec:liftaut}

In order to obtain the upper adjunction in \eqref{eq:unify-pic} (which formalises Brzozowski),
we will use algebra and coalgebra structure on both sides,
hence we assume that $\catC$ and $\catD$ both have products and coproducts.
The lifting is a small extension of $\ol{\Spc} \dashv \ol{\Pred}$
obtained by defining how an initialisation map $I \to C$ for an $F_\catC$-coalgebra $\gamma$
is turned into an observation map $\Pred{C} \to \Pred{I}$ for the $G_\catD$-algebra $\ol{\Pred}(\gamma)$,
and vice versa for $\ol{\Spc}$.
\rem{
by defining $\ol{\Pred}'\colon \Aut_\catC^{I,\Spc O} \to \Aut_\catD^{O, \Pred I}$
and $\ol{R}' \colon \Aut_\catD^{O, \Pred I} \to \Aut_\catC^{I,\Spc O}$ as follows
for all $\gamma\colon C \to F_\catC C$ and $\alpha \colon G_\catD D \to D$:
\begin{equation}\label{eq:def-Aut-LR}
\begin{array}{rclc}
\ol{\Pred}'(\gamma, i\colon I \to C) &=&  (\ol{\Pred}(\gamma) \colon G_\catD\Pred C \to \Pred C, \Pred(i)\colon \Pred C \to \Pred I), & \ol{\Pred}'(f) = \Pred(f) \\
\ol{\Spc}'(\alpha, j\colon D \to \Pred I) &=& (\ol{R}(\alpha)\colon \Spc D \to F_\catC\Spc D,\transp{j}\colon I \to \Spc D) , & \ol{\Spc}'(g) = \Spc(g) 
\end{array}
\end{equation}
}

\begin{theorem}\label{thm:liftaut}
Under the assumptions of section~\ref{sec:liftcoalg},    
the dual adjunction 
$\ol{\Spc} \dashv \ol{\Pred}$ between $\Coalg_\catC(F_\catC)$ and $\Alg_{\catD}(G_\catD)$
lifts to a dual adjunction $\ol{\Spc}' \dashv \ol{\Pred}'$ 
between $\Aut_\catC^{I,\Spc O}$ and $\Aut_\catD^{O, \Pred I}$ by
defining $\ol{\Pred}'$ 
and $\ol{R}'$ 
as follows for all $\gamma\colon C \to F_\catC C$ and $\alpha \colon G_\catD D \to D$:
\[\begin{array}{rclc}
\ol{\Pred}'(\gamma, i\colon I \to C) &=&  (\ol{\Pred}(\gamma) \colon G_\catD\Pred C \to \Pred C, \Pred(i)\colon \Pred C \to \Pred I), & \ol{\Pred}'(f) = \Pred(f) \\
\ol{\Spc}(\alpha, j\colon D \to \Pred I) &=& (\ol{R}(\alpha)\colon \Spc D \to F_\catC\Spc D,\transp{j}\colon I \to \Spc D) , & \ol{\Spc}'(g) = \Spc(g) 
\end{array}
\]
\end{theorem}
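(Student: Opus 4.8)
The plan is to reduce everything to the already-established adjunction $\ol{\Spc}\dashv\ol{\Pred}$ between coalgebras and algebras, treating the initialisation/output data as the only genuinely new content. Recall that an object of $\Aut_\catC^{I,\Spc O}$ is an $F_\catC$-coalgebra $\gamma\colon C \to \Spc O \times C^\Alph$ equipped with an initialisation $i\colon I \to C$, and an $\Aut_\catC^{I,\Spc O}$-morphism is a coalgebra morphism that in addition commutes with the initialisations; dually, an object of $\Aut_\catD^{O,\Pred I}$ is a $G_\catD$-algebra with an output $j\colon D \to \Pred I$, and its morphisms are algebra morphisms commuting with outputs. Since $\ol{\Spc}$ and $\ol{\Pred}$ already handle the coalgebra/algebra part, I would proceed in three steps: check well-definedness on objects, check well-definedness on morphisms, and then show that the Hom-set bijection $\theta$ of the base adjunction restricts to automaton morphisms.

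Well-definedness on objects is immediate from the adjunction: for $(\gamma,i)$ the map $\Pred(i)\colon \Pred C \to \Pred I$ has exactly the type of an output in $\Aut_\catD^{O,\Pred I}$, and dually $\transp{j}\colon I \to \Spc D$ has the type of an initialisation in $\Aut_\catC^{I,\Spc O}$. For morphisms, I would take an $\Aut_\catC^{I,\Spc O}$-morphism $h\colon C_1\to C_2$, that is, a coalgebra morphism with $h \o i_1 = i_2$; its coalgebra part is sent by $\ol{\Pred}$ to an algebra morphism, and applying the contravariant functor $\Pred$ to $h\o i_1 = i_2$ yields $\Pred(i_1)\o\Pred(h) = \Pred(i_2)$, which is precisely compatibility with the outputs. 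The statement for $\ol{\Spc}'$ is dual, using $\Spc$ and the transpose of the output maps.

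The core of the argument is the Hom-set bijection. I would define $\ol{\theta}'$ as the restriction of the base bijection $\theta_{C,D}\colon h \mapsto \transp{h}$, and show it lands inside the automaton morphisms. Writing $\str{C}=(\gamma,i)$ and $\str{D}=(\alpha,j)$, a left-hand morphism $h\colon C\to\Spc D$ is a coalgebra morphism $\gamma\to\ol{\Spc}(\alpha)$ with $h\o i = \transp{j}$, and a right-hand morphism $k\colon D\to\Pred C$ is an algebra morphism $\alpha\to\ol{\Pred}(\gamma)$ with $\Pred(i)\o k = j$. That the coalgebra and algebra conditions correspond under $h\mapsto\transp{h}$ is exactly the assertion (recorded just before the theorem) that $\ol{\theta}$ is the restriction of $\theta$. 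The main obstacle, and really the only nontrivial point, is to match the remaining two conditions: I expect this to follow from naturality of $\theta$ in its first argument applied to $i\colon I\to C$, which gives $\theta_{I,D}(h\o i) = \Pred(i)\o\transp{h}$. Since $\transp{j}=\theta_{I,D}^{-1}(j)$, applying $\theta_{I,D}$ to the identity $h\o i = \transp{j}$ converts it into $\Pred(i)\o\transp{h}=j$, and bijectivity of $\theta_{I,D}$ gives the converse.

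Finally, naturality of $\ol{\theta}'$ in both arguments should come for free, as its naturality squares are restrictions of those for $\theta$ to the subcollections of automaton morphisms. In summary, the only real computation is the single naturality identity turning initialisation-compatibility into output-compatibility; the rest is type-checking and transporting the coalgebra/algebra adjunction through the forgetful passage from automata to their coalgebra and algebra parts.
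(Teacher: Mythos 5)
Your proposal is correct and follows essentially the same route as the paper: both reduce the theorem to the single equivalence between initialisation-compatibility ($h \o i = \transp{j}$) and output-compatibility ($\Pred(i)\o\transp{h} = j$) under the hom-set bijection, the paper verifying it by the explicit unit/counit computation with naturality of $\coun$, and you by invoking naturality of $\theta$ in its first argument — which is the same fact in packaged form. The additional well-definedness and functoriality checks you include are left implicit in the paper but are exactly the right bookkeeping.
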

\begin{proof}
This is a minor generalisation of Prop.~9.1 in \cite{BBHPRS:ACM-TOCL}.
It suffices to show that
for all $\catC$-arrows $i \colon I \to C$, and
all $\catD$-arrows $g \colon D \to \Pred{I}$ and $h\colon D \to \Pred{X}$:
$
g = \Pred{i} \o h \text{ iff } \transp{g} = \transp{h} \o i.
$
First, if $g = \Pred{i} \o h$, then
$\transp{g} = \Spc{g} \o \coun_I = \Spc{h} \o \Spc\Pred{i} \o \coun_I = \Spc{h} \o \coun_X \o i = \transp{h} \o i$,
where the third equality follows from naturality of $\coun$.
Conversely, if $\transp{g} = \transp{h} \o i$, then
$g = \Pred{\transp{g}} \o \eta_D = \Pred{i} \o \Pred{\transp{h}} \o \eta_D = \Pred{i} \o h$.
\end{proof}

\rem{
\blue{Define $\wt{L}'$ and $\wt{R}'$ similarly, and show that 
 $\wt{L}' = \ol{L}'$ and $\wt{R}' = \ol{R}'$.}
}




It is straightforward to verify that for our choice of $F_\catC$,
the final $F_\catC$-coalgebra exists,
and we usually view it as having carrier
$\Spc(O)^{\Alph^*}$, hence for $\gamma \colon C \to F_\catC(C)$,
the final morphism $!_\gamma \colon C \to \Spc(O)^{\Alph^*}$
assigns to each state in $C$ what can be seen as an $\Spc(O)$-weighted language. 
For $\str{X} = \tup{\gamma,i} \in \Aut_\catC^{I,\Spc(O)}$,
we define its language semantics as the composition
$\I \stackrel{i}{\to} C \stackrel{!_\gamma}{\to} \Spc(O)^{\Alph^*}$.
This $\catC$-morphism can be seen as an $\Alph^*$-indexed family of
$\catC$-morphisms $\langsem{\str{X}}_w \colon I \to \Spc{O}$
defined for all $w = a_1 \cdots a_k \in \Alph^*$ by
\[
\langsem{\str{X}}_w \;\;=\;\; I \sgoes{i} X \sgoes{t_{a_1}} \cdots  \sgoes{t_{a_k}} X \sgoes{f} \Spc(O)
\]
Computing the adjoint transpose $\transp{\langsem{\str{X}}_w} = \Pred{\langsem{\str{X}}} \o \eta_O$,
we get the $\catD$-morphism:
\[
\transp{\langsem{\str{X}}_w} \;\;=\;\; \Pred(I) \sfrom{i} \Pred(X) \sfrom{\Pred t_{a_1}} \cdots \sfrom{\Pred t_{a_k}} \Pred(X) \sfrom{\transp{f}} O 
\]
Hence $\transp{\langsem{\str{X}}_w} = \langsem{\ol{\Pred}'(\str{X})}_{w^R}$
where $w^R = a_k \cdots a_1$ is the reversal of $w$.
Similarly, we find that
for all $\str{Y} \in \Aut_\catD^{O,\Pred(I)}$,
$\transp{\langsem{\str{Y}}_w} = \langsem{\ol{\Spc}'(\str{Y})}_{w^R}$.
In the case of classic DFAs from Example~\ref{exa:dfa}
where $I=O=1$ and $\Spc(O) = \Pred(I) \cong 2$,
the above says that the adjoint functors reverse the language accepted by the automaton.

\rem{In the case of PODFAs and BAOs
where $O = \FBA(\Obs) = 2^{2^\Obs}$ for a finite set $\Obs$ of observations,
$\Pred(X) = 2^X$, $\Spc(A) = \Uf(A)$,
and we choose an initial state with $i \colon 1 \to X$,
\[\ba{rll}
\langsem{\str{X}}_w: & 1 \to 2^\Obs = \Uf(\FBA(\Obs)) & \text{ in } \Set\\
\hline\hline
\langsem{\ol{\Pred}'(\str{X})}_{w^R}: & \FBA(\Obs) \to 2^1 & \text{ in } \BA\\
\hline\hline
\langsem{\ol{\Pred}'(\str{X})}_{w^R}: & \Obs \to 2^1 & \text{ in } \Set\\
\ea\]
}


\subsection{Language Semantics and Trace Logic}\label{sec:trace-logic}

In this section, we give a general condition on the output sets that ensures
that we can link trace logic with the full modal logic via an adjunction.
This places trace logic in the general picture. 
In \cite{BKP:WoLLIC}, it was shown in each of the concrete examples that trace logic
is equally expressive as the full modal logic.
The results of this section give a general explanation of this fact.

Assume that the category $\catD$ is monadic over $\Set$
with adjunction
$
\begin{tikzcd}[sep=1.5em, cramped]
    \FreeD : \catD
          \arrow[r, shift left=1.7pt]
      & \Set : \ForgD
          \arrow[l, shift left=1.7pt]
\end{tikzcd}
$
This adjoint situation allows us to relate the $\Set$-based language semantics to
the final $F_\catC$-coalgebra semantics as we will show now.

Consider the functor $G \colon\Set\to\Set$ defined as
$G(X) = \Obs + \Alph \cdot X = \Obs + \Alph \times X$ where $\Obs$ is a finite 
set of observations.
Then the set $\Alph^*\Obs$ is an initial $G$-algebra with algebra structure
$\Obs + \Alph \times (\Alph^*\Obs) \to \Alph^*\Obs$
given by prefixing $\obs\in\Obs$ with the empty word
$\obs \mapsto \emptyword \obs$
and concatenation $(a,w) \mapsto aw$.
Let $\Phi_\catD \dashv U_\catD$ be an adjunction between $\Set$ and $\catD$.
Then we can compose with the dual adjunction $\Spc \dashv \Pred$
to obtain a dual adjunction between $\catC$ and $\Set$ as follows:
\begin{equation}\label{eq:comp-adj}
\xymatrix@C=0.8em@R=4em{
  \ar@(dl,ul)^-{F_\catC^\op} \catC^\op \ar@/^1pc/[rrrr]^-{\Pred} \ar@{}|{\top}[rrrr] &&&& 
  \ar@(dl,dr)_-{G_\catD} \catD \ar@/^1pc/[llll]^-{\Spc}   \ar@/^1pc/[rrrr]^-{\ForgD} \ar@{}|{\top}[rrrr] &&&& 
  \ar@(dr,ur)_-{G} \Set \ar@/^1pc/[llll]^-{\FreeD}
} 
\end{equation}

\begin{lemma}\label{lem:comp-adj}
Assume we have the situation in \eqref{eq:comp-adj}, and that $F_\catC$, $G_\catD$, $G$ are defined by:
\[
F_\catC(C) = \Spc\FreeD(\Omega) \times C^\Alph, \qquad
G_\catD(D) = \FreeD(\Omega) + \Alph \cdot D, \qquad
G(X) = \Obs + \Alph \cdot X.
\]
\rem{
\[\ba{llcr}
G \colon \Set \to \Set, & G(X) &=& \Obs + \Alph \cdot X
\\
G_\catD \colon \catD \to \catD, & G_\catD(D) &=& \FreeD(\Omega) + \Alph \cdot D
\\
F_\catC \colon \catC \to \catC, & F_\catC(C) &=& \Spc\FreeD(\Omega) \times C^\Alph
\ea
\]
}
Then \eqref{eq:comp-adj} lifts to
\begin{equation}\label{eq:comp-adj-lifted}
\xymatrix@C=0.8em@R=4em{
    \Coalg_\catC(F_\catC)^\op \ar@/^1pc/[rrrr]^-{\ol{\Pred}} \ar@{}|{\top}[rrrr] 
    &&&& \Alg_\catD(G_\catD) \ar@/^1pc/[llll]^-{\ol{\Spc}}   \ar@/^1pc/[rrrr]^-{\ol{\ForgD}}  \ar@{}|{\top}[rrrr] 
    &&&& \Alg_\Set(G) \ar@/^1pc/[llll]^-{\ol{\FreeD}} 
\rem{
\\
  \ar@(dl,ul)^-{F_\catC^\op} \catC^\op \ar@/^1pc/[rrrr]^-{\Pred} \ar@{}|{\top}[rrrr] &&&& 
  \ar@(dl,dr)_-{G_\catD} \catD \ar@/^1pc/[llll]^-{\Spc}   \ar@/^1pc/[rrrr]^-{\ForgD} \ar@{}|{\top}[rrrr] &&&& 
  \ar@(dr,ur)_-{G} \Set \ar@/^1pc/[llll]^-{\FreeD}
}
} 
\end{equation}
\end{lemma}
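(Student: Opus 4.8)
The plan is to notice that the left-hand adjunction in \eqref{eq:comp-adj-lifted} is already available, so that only the right-hand one is new. Indeed, taking the output object of Section~\ref{sec:liftcoalg} to be $O := \FreeD(\Obs)$ turns $F_\catC(C) = \Spc\FreeD(\Obs)\times C^\Alph$ and $G_\catD(D) = \FreeD(\Obs) + \Alph\cdot D$ into exactly the functors treated there, and the natural isomorphism $\xi\colon F_\catC\Spc \natIso \Spc G_\catD$ required for the lifting is produced as in \eqref{eq:xi} because $\Spc$, being a dual adjoint, sends the colimits defining $G_\catD$ to the limits defining $F_\catC$. Hence $\ol{\Spc} \dashv \ol{\Pred}$ between $\Coalg_\catC(F_\catC)^\op$ and $\Alg_\catD(G_\catD)$ comes for free, and the whole of \eqref{eq:comp-adj-lifted} will follow once I exhibit the companion adjunction $\ol{\FreeD} \dashv \ol{\ForgD}$ between $\Alg_\catD(G_\catD)$ and $\Alg_\Set(G)$, since adjunctions compose.

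For that companion adjunction, the first step is to build the isomorphism that drives the lifting. As $\FreeD$ is a left adjoint it preserves all colimits, in particular the binary coproduct and the copower occurring in $G$, so
\[
\FreeD\,G(X) \;=\; \FreeD(\Obs + \Alph\cdot X) \;\cong\; \FreeD(\Obs) + \Alph\cdot\FreeD(X) \;=\; G_\catD\,\FreeD(X)
\]
naturally in $X$, giving $\kappa\colon \FreeD\,G \natIso G_\catD\,\FreeD$. This is the exact covariant counterpart of $\xi$: there $\Spc$ carried the colimits of $G_\catD$ to the limits of $F_\catC$, here $\FreeD$ carries the colimits of $G$ to those of $G_\catD$.

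The second step is to feed $\kappa$ into the same lifting machinery used in Section~\ref{sec:liftcoalg} (\cite[Cor.~2.15]{HermidaJacobs98}, \cite[Thm.~2.5]{KerKoeWes14}), now applied to the covariant adjunction $\FreeD \dashv \ForgD$. Concretely, $\kappa^{-1}\colon G_\catD\FreeD \To \FreeD G$ lifts the left adjoint by $\ol{\FreeD}(X,a) = (\FreeD X,\, \FreeD a \o \kappa_X^{-1})$, while the mate $\bar\kappa\colon G\ForgD \To \ForgD G_\catD$ of $\kappa^{-1}$ under $\FreeD \dashv \ForgD$ lifts the right adjoint by $\ol{\ForgD}(D,d) = (\ForgD D,\, \ForgD d \o \bar\kappa_D)$; on morphisms both functors act as $\FreeD$ and $\ForgD$ do. The adjunction $\ol{\FreeD} \dashv \ol{\ForgD}$ is then obtained, exactly as the hom-set isomorphism $\ol{\theta}$ for $\ol{\Spc} \dashv \ol{\Pred}$ was in Section~\ref{sec:liftcoalg}, by restricting the hom-set bijection of $\FreeD \dashv \ForgD$ to the algebra morphisms on each side.

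The argument is essentially formal and I expect no real obstacle; the single point needing care is the verification that this restriction is well defined, i.e.\ that a $\catD$-map $f\colon \FreeD X \to D$ is a $G_\catD$-algebra morphism $\ol{\FreeD}(X,a)\to(D,d)$ precisely when its transpose $X \to \ForgD D$ is a $G$-algebra morphism $(X,a)\to\ol{\ForgD}(D,d)$. Unfolding the structure maps $\FreeD a \o \kappa_X^{-1}$ and $\ForgD d \o \bar\kappa_D$ and using the definition of $\bar\kappa$ as the mate of $\kappa^{-1}$, this equivalence collapses to naturality of $\kappa$ together with the triangle identities for the unit and counit of $\FreeD \dashv \ForgD$, so the whole verification is routine diagram chasing of the same kind as in the proof of Theorem~\ref{thm:liftaut}.
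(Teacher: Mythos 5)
Your proposal is correct and follows essentially the same route as the paper: the left adjunction is recognised as the instance of Section~\ref{sec:liftcoalg} with output object $\FreeD(\Obs)$, the isomorphism $\kappa\colon \FreeD G \natIso G_\catD\FreeD$ is obtained from preservation of colimits by the left adjoint $\FreeD$, and the lifting theorem of Hermida--Jacobs (resp.\ Kerstan--K\"onig--Westerbaan) is invoked to produce $\ol{\FreeD}\dashv\ol{\ForgD}$ with $\ol{\FreeD}(X,a)=(\FreeD X,\FreeD a\circ\kappa_X^{-1})$. The only difference is cosmetic: you spell out the mate construction and the well-definedness of the restricted hom-set bijection, while the paper additionally records that the composite adjunction $\Spc\FreeD\dashv\ForgD\Pred$ lifts via $\xitrc=\Spc\kappa\circ\xi\FreeD$, a fact it needs later for trace logic but which is not required for the lemma as stated.
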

\begin{proof}
The dual adjunction on the left lifts because of a special case of \eqref{eq:xi}.
For similar reasons, the adjunction on the right lifts, because
there is a natural isomorphism $\kappa: \FreeD G \natIso G_\catD \FreeD$ that can
be obtained as follows
\begin{equation}\label{eq:kappa}
\kappa \colon \FreeD G X = \FreeD (\Obs + \Alph\cdot X) \cong \FreeD (\Obs) + \Alph\cdot \FreeD(X)
= G_\catD \FreeD(X),
\end{equation}
since $\FreeD$ (being a left adjoint) preserves colimits.
By \cite[Thm.~2.14]{HermidaJacobs98}, $\FreeD \dashv \ForgD$ lifts to an adjunction between
 $\ol{\FreeD} \dashv \ol{\ForgD}$  between $\Alg_\catD(G_\catD)$ and $\Alg_\Set(G)$ where the
 functor $\ol{\FreeD}$ maps a $G$-algebra $(X,\alpha)$ to the $G_\catD$-algebra
 $(\FreeD(X),\FreeD \alpha \circ \kappa^{-1})$.

By composition of adjunctions, also $\Spc\FreeD \dashv \ForgD\Pred$ lifts.
This could also be verified by noticing that for all sets $X$, there is natural isomorphism
\begin{equation}\label{eq:xi-trace}
  \xitrc \coloneqq  \Spc \kappa \circ \xi\FreeD : F_\catC\Spc\FreeD \natIso  \Spc\FreeD G
\end{equation}
where $\xi\colon F_\catC \Spc \natIso \Spc G_\catD$ from \eqref{eq:xi}    
is the mate of the modal logic $(G_\catD,\rho)$.
Hence by \cite[Thm.~2.14,Cor.~2.15]{HermidaJacobs98} (see also \cite[Thm.~2.5]{KerKoeWes14}),
the adjunction $\Spc\FreeD \dashv \ForgD\Pred$ lifts to one between $\Coalg_\catC(F_\catC)^\op$ and $\Alg_\Set(G)$.
\end{proof}

Letting $\rhotrc\colon G \ForgD\Pred \To \ForgD\Pred F_\catC$ be the mate of
$\xitrc$ from \eqref{eq:xi-trace}, then $(G,\rhotrc)$ is a modal logic for $F_\catC$-coalgebras. 
Since its formulas are the elements of the intial $G$-algebra of traces,
we refer to $(G,\rhotrc)$ as a trace logic.

\begin{lemma}\label{lem:trc-coincidence}
The theory maps $\th{G}$ and $\th{G_\catD}$ of the logics  $(G,\rhotrc)$ and  $(G_\catD,\rho)$ coincide.
\end{lemma}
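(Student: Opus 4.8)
The plan is to reduce everything to a statement about adjoint transposes, using the structure set up in the proof of Lemma~\ref{lem:comp-adj}. There, the composite dual adjunction $\Spc\FreeD \dashv \ForgD\Pred$ is lifted to coalgebras and algebras, and this lift is the composite of the two individual lifts, i.e.\ $\ol{\Spc}\,\ol{\FreeD}$ on one side and $\ol{\ForgD}\,\ol{\Pred}$ on the other, with $\ol{\FreeD}\dashv\ol{\ForgD}$ an adjunction between $\Alg_\Set(G)$ and $\Alg_\catD(G_\catD)$. Since by definition (Section~\ref{sec:base}) the theory map of each logic is the adjoint transpose of its semantics map --- under $\Spc\dashv\Pred$ for $(G_\catD,\rho)$ and under $\Spc\FreeD\dashv\ForgD\Pred$ for $(G,\rhotrc)$ --- it suffices to compare the two semantics maps. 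First I would identify the initial algebras: the initial $G$-algebra is $\Alph^*\Obs$, and as a left adjoint $\ol{\FreeD}$ preserves initial objects, so $\ol{\FreeD}(\Alph^*\Obs)$, whose carrier is $\FreeD(\Alph^*\Obs)$, is the initial $G_\catD$-algebra $(A_0,\alpha_0)$. Thus $A_0\cong\FreeD(\Alph^*\Obs)$, and the two theory maps share a codomain $\Spc(A_0)\cong\Spc\FreeD(\Alph^*\Obs)$ up to this canonical isomorphism, which is the sense in which they coincide.

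Next I would show that $\sm{G}$ and $\sm{G_\catD}$ are adjoint transposes under $\FreeD\dashv\ForgD$. By initiality, $\sm{G_\catD}$ is the unique $G_\catD$-algebra morphism $(A_0,\alpha_0)\to\ol{\Pred}(\gamma)$, and $\sm{G}$ is the unique $G$-algebra morphism $\Alph^*\Obs\to\ol{\ForgD}\,\ol{\Pred}(\gamma)$. The adjunction $\ol{\FreeD}\dashv\ol{\ForgD}$ supplies a bijection
\[
\Alg_\Set(G)\bigl(\Alph^*\Obs,\ \ol{\ForgD}\,\ol{\Pred}(\gamma)\bigr)
\ \cong\
\Alg_\catD(G_\catD)\bigl(\ol{\FreeD}(\Alph^*\Obs),\ \ol{\Pred}(\gamma)\bigr),
\]
and since both hom-sets are singletons, $\sm{G}$ must correspond to $\sm{G_\catD}$. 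Because the Hermida--Jacobs lift $\ol{\FreeD}\dashv\ol{\ForgD}$ acts on underlying carriers exactly as $\FreeD\dashv\ForgD$, this correspondence says precisely that the $\FreeD\dashv\ForgD$-transpose of $\sm{G}$ is the underlying $\catD$-morphism of $\sm{G_\catD}$ (under $A_0\cong\FreeD(\Alph^*\Obs)$).

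Finally I would take $\Spc\dashv\Pred$-transposes. The composite-adjunction transpose defining $\th{G}$ factors as: transpose under $\FreeD\dashv\ForgD$, then under $\Spc\dashv\Pred$. By the previous step the intermediate map is $\sm{G_\catD}$, whose $\Spc\dashv\Pred$-transpose is $\th{G_\catD}$ by definition. Hence $\th{G}=\th{G_\catD}$.

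I expect the only real work to be bookkeeping, not a genuine obstacle: one has to check that the transpose of the composite adjunction $\Spc\FreeD\dashv\ForgD\Pred$ genuinely decomposes into the two successive transposes (standard for composites of adjunctions), and that the lifted adjunction $\ol{\FreeD}\dashv\ol{\ForgD}$ has the same transpose as $\FreeD\dashv\ForgD$ on underlying objects --- which holds because $\ol{\ForgD}$ commutes with the forgetful functors down to $\catD$ and $\Set$, the defining feature of the Hermida--Jacobs lifting. The compatibility of the mates, via $\xitrc=\Spc\kappa\circ\xi\FreeD$ from \eqref{eq:xi-trace}, is what ensures the two logics encode the same one-step behaviour and hence that the liftings above are the correct ones; but with the adjunction-level argument in hand, no explicit computation involving $\rho$ or $\rhotrc$ is required.
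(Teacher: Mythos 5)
Your argument is correct and is essentially the one the paper gives in Appendix~\ref{app:theory-maps}: identify the initial $G_\catD$-algebra as $\ol{\FreeD}$ of the initial $G$-algebra (left adjoints preserve initial objects), observe that the semantics maps correspond under the chain of hom-set bijections by uniqueness of initial morphisms, and conclude that the theory maps agree because the transpose along the composite adjunction $\Spc\FreeD\dashv\ForgD\Pred$ factors as the two successive transposes. The paper also sketches an alternative verification via the mates $\xi$ and $\xitrc$ and the universal property of the theory map, but your adjunction-level route matches its primary argument.
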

\begin{proof}
Due to the adjunctions in \eqref{eq:comp-adj},
the intial $G$-algebra $\Trc$ of traces is mapped by $\ol{\FreeD}$ to an initial $G_\catD$ algebra,
which in turn is mapped by $\ol{\Spc}$ to a final $F_\catD$-coalgebra.
The coincidence of the theory maps follows from them being adjoints of the initial maps.
A more detailed argument is given in Appendix~\ref{app:theory-maps}.
\end{proof}

\rem{
\begin{example}
In the case where $\catD = \BA$, the fact that $\FreeD(\Trc)$ is an initial $G_\catD$-algebra implies that elements of the
Lindenbaum-Tarski algebra of $\LLang{G_\catD}$ are equivalent to a Boolean combination of trace formulas.
The elements of ${\FreeD}(\Trc)$ can therefore be seen as normal forms.
\end{example}
}


Since the mates $\xi$ and $\xitrc$ are both natural isomorphisms,
it follows from \cite{Kli07,JacSok10} (and $\catC$ having a suitable factorisation system, cf.~Theorem~\ref{thm:min-algo})
that the full modal logic $(G_\catD,\rho)$ and trace logic $(G,\rhotrc)$ are both
expressive for $F_\catC$-coalgebras.
In other words, the propositional connectives from $\catD$-structure in the logic language $\LLang{G_\catD}$
do not add any epxressive power to $\LLang{G}=\Trc$.
In summary, we arrive at the following proposition.
\begin{proposition}\label{prop:trc-expressive}
With the above assumptions, the trace logic $(G,\rhotrc)$
and the full logic $(G_\catD,\rho)$ are equally expressive over $F_\catC$-coalgebras,
meaning that for all $F_\catC$-coalgebras $\gamma\colon C \to F_\catC(C)$,
and all states $c_1, c_2$ in $C$ (recall that $\catC$ is a concrete category),
$c_1$ and $c_2$ are logically equivalent for $(G,\rhotrc)$
iff they are logically equivalent for $(G_\catD,\rho)$. 
\end{proposition}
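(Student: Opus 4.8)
The plan is to reduce the claimed equivalence of logical equivalences to the \emph{coincidence of theory maps} established in Lemma~\ref{lem:trc-coincidence}, together with the standard fact that logical equivalence for a coalgebraic modal logic is precisely the kernel of its theory map. Concretely, for a logic $(G_\catD,\rho)$, two states $c_1,c_2$ in a coalgebra $(C,\gamma)$ are logically equivalent iff they satisfy the same formulas, i.e.\ iff $\sem{\phi}(c_1)=\sem{\phi}(c_2)$ for all formulas $\phi$ in the initial $G_\catD$-algebra. Since $\catC$ is concrete, this is exactly the condition that $\th{G_\catD}(c_1)=\th{G_\catD}(c_2)$, because the theory map $\th{G_\catD}\colon C \to \Spc(A_0)$ bundles together the values of all formulas at a state (it is the adjoint transpose of the semantics map $\sm{G_\catD}\colon A_0 \to \Pred(C)$). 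The same statement holds verbatim for the trace logic $(G,\rhotrc)$ with its own theory map $\th{G}$.

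First I would make precise the reduction ``logical equivalence $=$ kernel of the theory map.'' For $(G_\catD,\rho)$, the semantics assigns to each formula $\phi \in A_0$ a predicate $\sem{\phi} = \sm{G_\catD}(\phi) \in \Pred(C)$, and because the dual adjunction arises from homming into a dualising object $\dua$, the value $\sem{\phi}(c)$ is recovered by evaluation, so $c_1,c_2$ agree on all formulas iff the induced maps $A_0 \to \dua$, $\phi \mapsto \sem{\phi}(c_i)$, coincide. These two maps are exactly $\th{G_\catD}(c_1)$ and $\th{G_\catD}(c_2)$ viewed as elements of $\Spc(A_0)$, since $\th{G_\catD}$ is defined as the adjoint of $\sm{G_\catD}$. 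Thus $c_1 \icong{(G_\catD,\rho)} c_2$ iff $\th{G_\catD}(c_1)=\th{G_\catD}(c_2)$, and identically $c_1 \icong{(G,\rhotrc)} c_2$ iff $\th{G}(c_1)=\th{G}(c_2)$.

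The conclusion is then immediate: by Lemma~\ref{lem:trc-coincidence} the two theory maps $\th{G}$ and $\th{G_\catD}$ are literally the same $\catC$-morphism out of $C$, so their kernels (as relations on the underlying set of $C$) coincide. Hence $c_1$ and $c_2$ are $(G,\rhotrc)$-logically equivalent iff they are $(G_\catD,\rho)$-logically equivalent, which is exactly the claim.

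The main obstacle I anticipate is not the final two-line argument but pinning down the first step cleanly, namely that logical equivalence coincides with the kernel of the theory map in this abstract dual-adjunction setting. One must verify that the elements of the initial $G_\catD$-algebra genuinely serve as ``all formulas,'' that $\sm{G_\catD}$ obtained by initiality assigns each such formula its expected predicate, and that passing to the adjoint $\th{G_\catD}$ preserves exactly the information of ``which formulas hold where.'' This hinges on the dualising-object description of the adjunction (so that evaluation computes truth values) and on the naturality used to identify $\sem{\phi}(c)$ with the $\phi$-component of $\th{G_\catD}(c)$. Once this dictionary is in place, Lemma~\ref{lem:trc-coincidence} does all the remaining work, and no further computation with $\rho$ or $\rhotrc$ is required.
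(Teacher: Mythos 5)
Your proof is correct, but it reaches the proposition by a different route than the paper. You reduce everything to Lemma~\ref{lem:trc-coincidence}: logical equivalence for each logic is the kernel of its theory map --- which holds here because the adjunctions arise from a dualising object, so elements of $\Spc\FreeD(\Trc)$ are genuine $\dua$-valued functions on formulas and, by freeness of $\FreeD(\Trc)$, are determined by their values on the generating trace formulas --- and since $\th{G}=\th{G_\catD}$ the two kernels coincide. The paper instead derives the proposition from \emph{expressivity}: since the mates $\xi$ and $\xitrc$ are both natural isomorphisms, the results of \cite{Kli07,JacSok10} (together with a suitable factorisation system on $\catC$) show that both logics are expressive, so each logical equivalence coincides with behavioural equivalence and hence with the other. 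Your argument is more elementary and self-contained (no appeal to the cited expressivity theorems or to the factorisation system) and yields exactly the stated biconditional; the paper's route costs more hypotheses but buys the stronger fact that both equivalences agree with behavioural equivalence, which is what is actually used downstream, e.g.\ in the bisimulation-quotient corollary for deterministic Kripke models. Both ingredients appear in the paper's build-up to the proposition, so your derivation is a legitimate --- arguably cleaner --- proof of the statement as literally formulated; the only step worth making fully explicit is the freeness argument identifying the kernel of $\th{G}$ with agreement on trace formulas.
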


By the uniqueness of final coalgebras up to isomorphism, 
it follows that there is an isomorphism
$\sigma\colon \Spc\FreeD(\Obs)^{\Alph^*} \stackrel{\sim}{\to} \Spc\,\FreeD(\Trc)$
which links the language semantics in the automata/coalgebraic sense
with trace logic semantics given by initiality.
\begin{proposition}\label{prop:lang-trc}
For all $F_\catC$ coalgebras $\gamma$, its language semantics defined
as the unique morphism into the final $F_\catC$-coalgebra
$\Spc\FreeD(\Obs)^{\Alph^*}$
corresponds to the trace theory map $\th{G}$ into the final $F_\catC$-coalgebra
$\ol{\Spc}\,\ol{\FreeD}(\Trc)$,
(and with the theory map $\th{G_\catD}$)
via the isomorphism $\sigma$.
\end{proposition}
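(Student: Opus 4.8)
The plan is to prove the statement purely from the universal property of final coalgebras, by recognising that both the language semantics $!_\gamma$ and the trace theory map $\th{G}$ are the unique $F_\catC$-coalgebra morphisms from $\gamma$ into two presentations of the \emph{same} final $F_\catC$-coalgebra. The isomorphism $\sigma$ is then forced to be the canonical coalgebra isomorphism between them, and the claimed correspondence becomes a matter of uniqueness.

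First I would confirm that both codomains are indeed final. The carrier $\Spc\FreeD(\Obs)^{\Alph^*}$ carries the standard final coalgebra structure for the functor $F_\catC = \Spc\FreeD(\Obs)\times(-)^\Alph$ (as recalled before Theorem~\ref{thm:liftaut}), and $!_\gamma$ is by definition the unique morphism into it. For the other codomain I invoke Lemma~\ref{lem:comp-adj}: the composite $\ol{\Spc}\,\ol{\FreeD}$ is the dual-adjoint functor $\ol{\Spc}$, which, being part of a dual adjunction, sends initial $G_\catD$-algebras to final $F_\catC$-coalgebras, precomposed with the genuine left adjoint $\ol{\FreeD}$, which sends the initial $G$-algebra $\Trc$ to an initial $G_\catD$-algebra. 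Hence $\ol{\Spc}\,\ol{\FreeD}(\Trc)$, whose carrier is $\Spc\FreeD(\Trc)$, is again a final $F_\catC$-coalgebra.

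The crux is then to identify $\th{G}$ with the finality morphism into $\ol{\Spc}\,\ol{\FreeD}(\Trc)$. By definition $\th{G}$ is the adjoint transpose of the semantics map $\sm{G}\colon\Trc\to\ForgD\Pred(C)$, and $\sm{G}$ is the unique $G$-algebra morphism out of the initial algebra $\Trc$ into $\ol{\ForgD}\,\ol{\Pred}(\gamma)$. I would unwind this transpose through the lifted composite adjunction $\ol{\Spc}\,\ol{\FreeD}\dashv\ol{\ForgD}\,\ol{\Pred}$ of Lemma~\ref{lem:comp-adj} and check that it yields exactly the coalgebra morphism from $\gamma$ into $\ol{\Spc}\,\ol{\FreeD}(\Trc)$ --- this is the general fact that, under a dual adjunction between algebras and coalgebras, the image of the initial algebra is the final coalgebra and the transpose of the initiality map is the finality map. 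With this in hand the proposition is immediate: $\sigma$ is the unique $F_\catC$-coalgebra isomorphism between the two finals, so $\sigma\circ{!_\gamma}$ is a coalgebra morphism from $\gamma$ into $\ol{\Spc}\,\ol{\FreeD}(\Trc)$ and therefore equals $\th{G}$ by uniqueness; the agreement with $\th{G_\catD}$ is then precisely Lemma~\ref{lem:trc-coincidence}.

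The only step that is not purely formal is this identification of $\th{G}$ with the finality morphism: it requires carefully transporting the initiality-based definition of the theory map through the unit and counit of the lifted adjunction and verifying that the resulting $\catC$-map is genuinely an $F_\catC$-coalgebra morphism landing in the final coalgebra. Everything else is bookkeeping with the universal properties of initial $G$-algebras and final $F_\catC$-coalgebras, so I expect this transpose computation to be the main obstacle.
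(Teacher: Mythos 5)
Your proposal is correct and matches the paper's (largely implicit) argument: the paper also establishes that $\ol{\Spc}\,\ol{\FreeD}(\Trc)$ is a final $F_\catC$-coalgebra and that $\th{G}$ is the unique morphism into it (this is exactly the content of the appendix proof of Lemma~\ref{lem:trc-coincidence}, via the Hom-set correspondence through the lifted adjunctions), after which the proposition follows by uniqueness of final morphisms and of the isomorphism $\sigma$. The step you flag as the main obstacle --- identifying $\th{G}$ with the finality morphism by transposing the initiality map --- is precisely where the paper also concentrates its effort.
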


We remark that it is straightforward to extend $\ol{\FreeD} \dashv \ol{\ForgD}$
to an adjunction of automata by taking adjoints of additional output maps to the algebras.
We omit the details.

Finally, we show that trace logic expressiveness can be extended to coalgebras for
what we can think of as subfunctors of $F_\catC$. This will be needed for the
topological automata in section~\ref{sec:top-gelfand}.

\begin{remark}\label{rem:subF}
  Let $F'_\catC$ be a functor on $\catC$ which preserves monos and
  such that there is a natural transformation
  $\tau \colon F'_\catC \To F_\catC$ which is abstract mono,
  i.e., all components are mono.
  Assume that $\catC$ has factorisation system $(E,M)$ with $E \subseteq Epi$
  and $M \subseteq Mono$.
  Defining  $\xi' = \xitrc \o \tau_\Spc$,
  then $\xi'\colon F'_\catC \Spc\FreeD \To \Spc\FreeD G$ defines semantics of trace formulas
  over $F'_\catC$-coalgebras which is essentially the same as the semantics over
  $F_\catC$-coalgebras.
  Since $\tau$ is abstract mono and $\xitrc$ is a natural iso,
  it follows that $\xi'$ is abstract mono, and hence the asscoiated logic
  is expressive~\cite{Kli07,JacSok10}.
\end{remark}

\subsection{Reachability and Observability }
\label{sec:reach-obs}

A main point emphasised in \cite{BBHPRS:ACM-TOCL} is that reachability is an algebraic concept, and
observability is a coalgebraic concept, and both concepts apply to automata
as they are both coalgebras and algebras.
We will call an algebra \emph{reachable} if it has no proper subalgebras,
and a coalgebra is \emph{observable} if it has no proper quotients.

Both \cite{BKP:WoLLIC} and \cite{BBHPRS:ACM-TOCL} use that a reachable algebra dualises to an
observable coalgebra, only the perspectives differ.
Note that in \cite{BKP:WoLLIC}, observable coalgebras are referred to as minimal automata.
%
In \cite{BKP:WoLLIC}, the reachable part of a $G_\catD$-algebra is defined as its least subalgebra,
and its existence was ensured by assuming that $\catC$ is wellpowered. 
In \cite{BBHPRS:ACM-TOCL}, automata were generally considered as automata over $\Set$,
and the reachable part of an automaton was  defined as the image of the
initial $G$-algebra inside the automaton (using its $G$-algebra structure, after possibly forgetting $\catD$-structure).
In Appendix~\ref{app:coincidence-reach}, we show that the two reachability notions in
\cite{BKP:WoLLIC} and \cite{BBHPRS:ACM-TOCL} coincide when conditions for both are satisfied.

In \cite{BKP:WoLLIC}, 
the least $G_\catD$-algebra of the dual $G_\catD$-algebra $\ol{\Pred}(\gamma)$
was characterised as the subalgebra $\alpha_\mathrm{Def}$ of
$\LLang{G_\catD}$-definable subsets of $C$
(or more abstractly $\dua$-valued predicates on $C$).
It was observed that $\alpha_\mathrm{Def}$ is generated by subsets
definable by trace formulas on $(C,\gamma)$.
The general statement of this fact is Proposition~\ref{prop:trc-expressive}.
Hence to compute $\alpha_\mathrm{Def}$, it suffices to compute trace logic definable subsets.
In the case of classic DFA, these subsets are precisely the reachable states
(in the usual transition-sense) of $\ol{\Pred}(\gamma)$.


%

\rem{
As mentioned, observability is a coalgebraic concept.
A coalgebra is observable if it has no proper quotients,
hence a coalgebra can be made observable by taking a largest quotient.
In \cite{BKP:WoLLIC}, largest quotients of $F_\catC$-coalgebras were ensured to exist by assuming that
$\catC$ is co-wellpowered which entails that $\Coalg_\catC(F_\catC)$ is co-wellpowered.
As with initial $G_\catD$-algebras, the final $F_\catC$-coalgebra generally did not exist. 
In \cite{BBHPRS:ACM-TOCL}, the final $F_\catC$-coalgebra always existed, and
observability was defined as the final morphism being injective.
It is straightforward to formulate and prove a dual version of Lemma~\ref{lem:wellp-initial}.
We omit the details due to lack of space.
}

The general setup described in Lemma~\ref{lem:comp-adj} is most closely related to
that of \cite{BBHPRS:ACM-TOCL},
as we have an initial $G$-algebra and an initial $G_\catD$-algebra. The latter is mapped by $\ol{\Spc}$ to a final $F_\catC$-coalgebra 
since the dual adjoint functors turn colimits into limits.
For this reason, $\ol{\Spc}$ maps epis to monos, but monos are not necessarily mapped to epis.
In particular,
we cannot argue that a least subalgebra of $\ol{\Pred}(\gamma)$
is mapped by $\ol{\Spc}$ to a largest quotient of $\gamma$. 
But using factorisation and existence of an initial $G_\catD$-algebra,
we obtain that the reachable part of $\ol{\Pred}(\gamma)$ is mapped by $\ol{\Spc}$
to an observable coalgebra.

\begin{proposition}\label{prop:reach-D}
Under the assumptions of Lemma~\ref{lem:comp-adj},
and assuming further that 
$\catD$ has a factorisation system $(E,M)$
such that $E \subseteq Epi$ and $M \subseteq Mono$,
we then have:

For all $(D,\delta) \in \Alg_\catD(G_\catD)$, let $\reach{D,\delta}$ be
the reachable part of $(D,\delta)$ obtained
by $(E,M)$-factorisation of the initial morphism:
$$
\ol{\FreeD}(\Trc,\alpha) \stackrel{e}{\surj} \reach{D,\delta} \stackrel{m}{\subto} (D,\delta).
$$
Then  $\ol{\Spc}(\reach{D,\delta})$ is an observable $F_\catC$-coalgebra.
\end{proposition}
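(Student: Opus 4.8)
The plan is to show that the unique $F_\catC$-coalgebra morphism from $\ol{\Spc}(\reach{D,\delta})$ into the final $F_\catC$-coalgebra is a monomorphism, which is exactly what observability amounts to in the present setting. The first ingredient is the identification of the final $F_\catC$-coalgebra. By Lemma~\ref{lem:comp-adj}, $\ol{\FreeD}$ is a left adjoint and therefore preserves the initial object, so $\ol{\FreeD}(\Trc,\alpha)$ is the initial $G_\catD$-algebra. Since the dual adjoint $\ol{\Spc}$ turns colimits into limits, it sends this initial $G_\catD$-algebra to a final $F_\catC$-coalgebra; thus $\ol{\Spc}\,\ol{\FreeD}(\Trc,\alpha)$ is final, a fact already recorded in the proofs of Lemma~\ref{lem:trc-coincidence} and Proposition~\ref{prop:lang-trc}.

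Next I would apply the contravariant functor $\ol{\Spc}$ to the given factorisation $\iota = m \o e$ of the initial morphism. This reverses the arrows and yields
\[
\ol{\Spc}(D,\delta) \sgoes{\ol{\Spc}(m)} \ol{\Spc}(\reach{D,\delta}) \sgoes{\ol{\Spc}(e)} \ol{\Spc}\,\ol{\FreeD}(\Trc,\alpha),
\]
so that $\ol{\Spc}(e)$ is an $F_\catC$-coalgebra morphism from $\ol{\Spc}(\reach{D,\delta})$ into the final $F_\catC$-coalgebra. By finality such a morphism is unique, so $\ol{\Spc}(e)$ is precisely the final morphism of $\ol{\Spc}(\reach{D,\delta})$. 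It therefore suffices to show that $\ol{\Spc}(e)$ is mono. Now $e$ lies in $E$, which consists of epis, and $\ol{\Spc}$ maps epis to monos: concretely, the base dual adjoint $\Spc = \catD(-,\dua)$ sends the epi $e$ to the precomposition map $g \mapsto g \o e$, which is injective and hence mono, and since $\ol{\Spc}(e) = \Spc(e)$ on underlying $\catD$-morphisms (cf.~\eqref{eq:ol-LR}) this passes to the lift. Consequently the final morphism of $\ol{\Spc}(\reach{D,\delta})$ is mono, and the coalgebra is observable.

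The genuinely load-bearing step is the clause ``$\ol{\Spc}$ maps epis to monos'', the asymmetry of the dual adjunction already emphasised in the discussion preceding the proposition: monos need \emph{not} be sent to epis, so the argument works only in this one direction, which is exactly why we start from the reachable \emph{subalgebra} rather than dualising a quotient. Two secondary points I would treat as routine and draw from the hypotheses: first, that the $(E,M)$-factorisation on $\catD$ lifts to $\Alg_\catD(G_\catD)$ so that $e,m$ really are $G_\catD$-algebra morphisms and $\reach{D,\delta}$ carries a $G_\catD$-algebra structure, which holds because $G_\catD(D) = \FreeD(\Omega) + \Alph \cdot D$ is built from coproducts and copowers and hence preserves $E$; and second, that ``final morphism mono'' delivers the official definition ``no proper quotients'', since any quotient $q$ of $\ol{\Spc}(\reach{D,\delta})$ factors the final morphism, so monicity of the latter forces $q$ to be mono and hence, being also epi, an iso under the standing factorisation assumptions.
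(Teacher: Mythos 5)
Your proof is correct and follows essentially the same route as the paper: identify $\ol{\Spc}\,\ol{\FreeD}(\Trc,\alpha)$ as the final $F_\catC$-coalgebra, observe that the contravariant adjoint $\ol{\Spc}$ sends the epi $e$ to a mono into it, and conclude observability. The extra details you supply (finality of the codomain, why epis go to monos, the lift of the factorisation to $\Alg_\catD(G_\catD)$, and the passage from ``final morphism mono'' to ``no proper quotients'') are all points the paper delegates to the surrounding discussion and to Lemma~\ref{lem:factor-sys}, so nothing is missing or different in substance.
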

\begin{proof}
The epimorphism $e\colon \ol{\FreeD}(\Trc,\alpha) \surj \reach{D,\delta}$
is mapped by $\ol{\Spc}$ to a monomorphism
 $$\ol{\Spc}(e) \colon \ol{\Spc}(\reach{D,\delta}) \subto \ol{\Spc}\ol{\FreeD}(\Trc,\alpha).$$
 Since $\ol{\Spc}\ol{\FreeD}(\Trc,\alpha)$ is a final $F_\catC$-coalgebra, we can conclude that 
  $\ol{\Spc}(\reach{D,\delta})$ is an observable $F_\catC$-coalgebra.
\end{proof}

The above proposition thus tells us how to obtain an observable $F_\catC$-coalgebra, and hence also
an observable $\catC$-automaton, by taking the reachable part on the dual side.

Extending the notion of reachable part to $\catD$-automata is done simply by taking the reachable part of their $G_\catD$-algebraic part and restricting the output map.
Brzozowski's algorithm produces a minimal $\catC$-automaton by taking the reachable part
of the resulting observable $\catC$-automaton, that is, with respect to the
algebraic structure of $\catC$-automata given by $G_\catC = \I + \Alph\cdot(-)$.
In order to do so, we need that also $\catC$ has an suitable factorisation system.


\subsection{Abstract minimisation algorithms}
\label{sec:abstract-brz}


We now put everything together into one diagram with which we can describe both approaches from
\cite{BKP:WoLLIC} and \cite{BBHPRS:ACM-TOCL} including the role of trace logic.

\begin{equation}\label{eq:comp-adj-all}
\ba{c}
\xymatrix@C=0.7em@R=4em{
    \left(\Aut_\catC^{\I,\Spc\FreeD(\Obs)}\right)^\op \ar@/^1pc/[rrrr]^-{\ol{\Pred}'} \ar@{}|{\top}[rrrr] \ar[d] &&&&
    \Aut_\catD^{\FreeD(\Obs),\Pred(I)} \ar@/^1pc/[llll]^-{\ol{\Spc}'} \ar@/^1pc/[rrrr]^-{\ol{\ForgD}'} \ar@{}|{\top}[rrrr] \ar[d] &&&&
    \Aut_\Set^{\Obs,\ForgD\Pred(I)} \ar@/^1pc/[llll]^-{\ol{\FreeD}'} \ar[d]
\\
    \Coalg_\catC(F_\catC)^\op \ar@/^1pc/[rrrr]^-{\ol{\Pred}} \ar@{}|{\top}[rrrr] \ar[d]
    &&&& \Alg_\catD(G_\catD) \ar@/^1pc/[llll]^-{\ol{\Spc}}   \ar@/^1pc/[rrrr]^-{\ol{\ForgD}}  \ar@{}|{\top}[rrrr] \ar[d]
    &&&& \Alg_\Set(G) \ar@/^1pc/[llll]^-{\ol{\FreeD}} \ar[d]
\\
\ar@(dl,ul)^-{F_\catC^\op} \catC^\op \ar@/^1pc/[rrrr]^-{\Pred} \ar@{}|{\top}[rrrr] &&&& 
  \ar@(dl,dr)_-{G_\catD} \catD \ar@/^1pc/[llll]^-{\Spc}   \ar@/^1pc/[rrrr]^-{\ForgD} \ar@{}|{\top}[rrrr] &&&& 
  \ar@(dr,ur)_-{G} \Set \ar@/^1pc/[llll]^-{\FreeD}
}
\\
F_\catC(C) = \Spc\FreeD(\Omega) \times C^\Alph, \quad
G_\catD(D) = \FreeD(\Omega) + \Alph \cdot D, \quad
G(X) = \Obs + \Alph \cdot X.
\ea
\end{equation}

\begin{theorem}\label{thm:min-algo}
Let $\catC, \catD$ be concrete categories, both having products and coproducts, and
both having factorisation systems $(E,M)$  such that $E \subseteq Epi$ and $M \subseteq Mono$.
Let $\Obs$ be a finite set (of observations), and $\I$  an (initialisation) object in $\catC$, and
assume that we have the adjoint situation between $\catC$, $\catD$, $\Set$ and functors
as described at the bottom level of \eqref{eq:comp-adj-all}.
Then the lower adjunctions lift to the upper two levels in \eqref{eq:comp-adj-all}
as shown in sections~\ref{sec:liftcoalg},~\ref{sec:liftaut}~and~\ref{sec:trace-logic},
and we have the following abstract algorithms:
\begin{description}
\item[Algo1]
  Given an $F_\catC$-coalgebra $\gamma$, compute $\ol{\Spc}(\reach{\ol{\Pred}(\gamma)})$ which will be an observable $F_\catC$-coalgebra.
\rem{\noindent\textbf{Algo1b}
  Given an $F_\catC$-coalgebra $\gamma$, compute $\ol{\Spc}\ol{\FreeD}(\reach{\ForgD\ol{\Pred}(\gamma)})$ which will be an observable $F_\catC$-coalgebra.\\
}
\item[Algo2]
  Given a $\catC$-automaton $(\gamma,i)$, compute $\reach{\ol{\Spc}'(\reach{\ol{\Pred}'(\gamma,i)})}$, which will be a reachable and observable (i.e., minimal)  $\catC$-automaton.\\
  \rem{
    \noindent\textbf{Algo2b}
  Given a $\catC$-automaton $(\gamma,i)$, compute $\reach{\ol{\Spc}'\,\ol{\FreeD}'(\reach{\ol{\ForgD}'\,\ol{\Pred}'(\gamma,i)})}$, which will be a reachable and observable (i.e., minimal)  $\catC$-automaton.\\
}
\end{description}

  \rem{
and assume there are contravariant adjoint functors
$
\begin{tikzcd}[sep=1.5em, cramped]
    \Pred : \catC
          \arrow[r, shift left=1.7pt]
      & \catD : \Spc
          \arrow[l, shift left=1.7pt]
\end{tikzcd}
$
and covariant adjoint functors
$
\begin{tikzcd}[sep=1.5em, cramped]
    \ForgD : \catD
          \arrow[r, shift left=1.7pt]
      & \Set : \FreeD
          \arrow[l, shift left=1.7pt]
\end{tikzcd}
$
Let $\Obs$ is a finite set of observations, and let $\I$ be an (initialisation) object in $\catC$,
and assume the following functor definitions:
$F_\catC(C) = \Spc\FreeD(\Omega) \times C^\Alph$,
$G_\catD(D) = \FreeD(\Omega) + \Alph \cdot D$,
$G(X) = \Obs + \Alph \cdot X$.
}
\end{theorem}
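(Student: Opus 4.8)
The plan is to assemble the pieces established in the previous subsections: under the stated hypotheses the theorem is essentially a bookkeeping result combining the three liftings with the correctness statements for the two algorithms, so most of the effort is to check that nothing in \eqref{eq:comp-adj-all} requires more than products, coproducts, the factorisation systems and the cited results. First I would establish the two upper rows of adjunctions. The middle row is exactly Lemma~\ref{lem:comp-adj}: the dual adjunction lifts to $\ol{\Spc} \dashv \ol{\Pred}$ via the natural isomorphism $\xi$ of \eqref{eq:xi} (using that $\catC$ has products and $\catD$ has coproducts), and $\FreeD \dashv \ForgD$ lifts to $\ol{\FreeD} \dashv \ol{\ForgD}$ via the isomorphism $\kappa$ of \eqref{eq:kappa}. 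The top row is obtained from Theorem~\ref{thm:liftaut} for $\ol{\Spc}' \dashv \ol{\Pred}'$ and from the remark following Proposition~\ref{prop:lang-trc} for $\ol{\FreeD}' \dashv \ol{\ForgD}'$; in each case one only adds the appropriate compatibility condition on the initialisation/output maps and checks it respects the transpose.

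The correctness of the first algorithm is then immediate: applying Proposition~\ref{prop:reach-D} to the $G_\catD$-algebra $(D,\delta) := \ol{\Pred}(\gamma)$ shows directly that $\ol{\Spc}(\reach{\ol{\Pred}(\gamma)})$ is an observable $F_\catC$-coalgebra.

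For the second algorithm I would proceed along the outer track of \eqref{eq:comp-adj-all}. Starting from $(\gamma,i)$, apply $\ol{\Pred}'$ to obtain a $\catD$-automaton and take its reachable part $\reach{\ol{\Pred}'(\gamma,i)}$ with respect to the $G_\catD$-algebra structure; as the $M$-image of the initial $G_\catD$-algebra $\ol{\FreeD}(\Trc,\alpha)$ this automaton is reachable by construction. Applying $\ol{\Spc}'$ and invoking Proposition~\ref{prop:reach-D} on its underlying $G_\catD$-algebra shows that the resulting $\catC$-automaton $\str{Y} := \ol{\Spc}'(\reach{\ol{\Pred}'(\gamma,i)})$ has observable $F_\catC$-coalgebra part. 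Finally, using the factorisation system on $\catC$, form the reachable part of $\str{Y}$ with respect to the $\catC$-algebraic structure $G_\catC = \I + \Alph\cdot(-)$; this yields a reachable $\catC$-automaton, and it remains to check that observability survives.

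The main obstacle is exactly this last step: showing that passing to the reachable part preserves observability. The reachable part is a mono $m \colon R \subto Y$ in $\catC$ that is a $G_\catC$-algebra morphism, hence commutes with all transition maps $t_a$ and with the initialisation; defining the output on $R$ by restriction along $m$ makes $m$ a $\catC$-automaton morphism and, in particular, an $F_\catC$-coalgebra morphism. By finality the final map of $R$ factors as $!_R = !_Y \circ m$, and since $\str{Y}$ is observable $!_Y$ is mono, so the composite $!_R$ is mono and $R$ is observable. Hence $\reach{\ol{\Spc}'(\reach{\ol{\Pred}'(\gamma,i)})}$ is both reachable and observable, i.e.\ minimal, as claimed. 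I would take particular care to spell out that the reachable-part construction on $\catC$ genuinely lands in a subcoalgebra, since this is the one place where the algebra and coalgebra structures on an automaton interact and where the argument is not a direct citation of an earlier result.
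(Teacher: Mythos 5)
Your proposal is correct and follows the route the paper intends: the paper gives no explicit proof of Theorem~\ref{thm:min-algo}, relying instead on the pointers in its statement (the liftings of Sections~\ref{sec:liftcoalg}, \ref{sec:liftaut} and \ref{sec:trace-logic} for the diagram, and Proposition~\ref{prop:reach-D} for \textbf{Algo1}), and your assembly of these matches that plan exactly. The one place where you go beyond citation --- checking that the final reachability step of \textbf{Algo2} preserves observability, by noting that the $G_\catC$-subalgebra inclusion $m\colon R \subto Y$ carries a restricted output and is therefore an $F_\catC$-coalgebra morphism, so that $!_R = !_Y \circ m$ is mono whenever $!_Y$ is --- is a genuine gap in the paper's exposition that your argument correctly fills (consistent with the paper's own identification, in the proof of Proposition~\ref{prop:reach-D}, of observability with the final map being mono, which is legitimate here since $E$ consists of surjections and bijective morphisms in $\EM{T}$ are isomorphisms).
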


Of course, the abstract algorithms only become actual algorithms, when all structures involved have
finite representations.
Furthermore, we note that the one could consider an algorithm that uses the horizontally composed adjunction in \eqref{eq:comp-adj-all}, i.e.,
compute  $\ol{\Spc}\ol{\FreeD}(\reach{\ForgD\ol{\Pred}(\gamma)})$.
Although, the result will be an observable coalgebra, this is, however, not a good choice in general, because the reachable part is now computed over $\Set$, and this may yield an infinite coalgebra/automaton whereas it might have been finitely generated as a coalgebra/automaton over $\catD$. An example where this happens is found in Example~8.3 of  \cite{BBHPRS:ACM-TOCL}.

Although \cite{BKP:WoLLIC} does not describe a concrete algorithm,
the implicit abstract algorithm is essentially \textbf{Algo1},
since the conceptual emphasis is placed on computing the least $G_\catD$-subalgebra of $\ol{\Pred}(\gamma)$
as the subalgebra of $(G_D,\rho)$-definable subsets/predicates.
The characterisation of this $G_\catD$-subalgebra as being freely generated by the least $G$-subalgebra of
$\ol{\ForgD}\ol{\Pred}(\gamma)$
(i.e. the reachable part in the usual set-theoretic sense) can be viewed as an optimisation: to determine
the reachable part of a given $G_\catD$-algebra it suffices to compute the part that can be ``reached''/defined via trace formulas. 
This is the information contained in the right-hand side of the diagram.

In comparison,
Brzozowski's algorithm and its generalisation to weighted automata in Section~\ref{sec:weighted}
are instances of \textbf{Algo2} as they use initial states.
Classic Brzozowski is the case $\catC = \catD = \Set$, $G_\catD = G$, and $\Obs = \I = 1$.
The set-based algorithm for weighted automata in \cite{BBHPRS:ACM-TOCL} is neither of the above algorithms, but it can be described as constructing  $\reach{\ol{\ForgD}'\,\ol{\Pred}'(\gamma,i)}$, and then dualise back (without going through $\SMod$) to get a $\Set$-based Moore automaton. As mentioned above, this may result in the reachable part of the reversed automaton being infinite. 

In the case where the dual adjunction is a full duality, the initial state is easily found back in the observable coalgebra resulting from \textbf{Algo1} as its language equivalence class, so the extension to \textbf{Algo2} seems almost trivial. In case the dual adjunction is not a full duality, the transformation of the initial state goes via the adjunction, and factorisation on the dual side, and this is what Theorem~\ref{thm:liftaut} formalises.

We end this section by observing that the requirements regarding products, coproducts and
factorisation systems hold in all our examples, since $\catC$ and $\catD$ are monadic over $\Set$
meaning that they are equivalent to an
Eilenberg-Moore category $\EM{T}$ for a $\Set$-monad $T$.
For such a category $\EM{T}$, we know that it is
complete, cocomplete and exact \cite[Thm~4.3.5]{BorceuxII}.
W.r.t factorisation systems, $(Epi,Mono)$ is generally not a factorisation system for $\EM{T}$, rather $(RegEpi,Mono)$ is. Using that regular epis are the surjective morphisms, and monos are the injective morphisms, one can prove that in $\Coalg_\catC(F_\catC)$ and $\Alg_\catD(G_\catD)$ the surjective and injective morphisms form a factorisation system. We refer to Lemma~\ref{lem:factor-sys}.

\rem{
In particular, $\KHaus$ is the Eilenberg-Moore category of the ultrafilter monad \cite{Manes69}.
The monadicity of complex-valued commutative, unital $C^*$-algebras was proved in \cite{Negrepontis71},
see also \cite[Sec 4-c]{PorTho91}.
This specialises to the real-valued case.
In order to view  $\KHaus \cong\CCStar^\op$ as a concrete duality it seems necessary to enlarge
$\catC$ to e.g. locally compact Hausdorff spaces and $\CCStar$,
since $\bbR$ with the standard topology is not compact, but it is locally compact.
\blue{[Need to look closer at \cite[Sec 4-c]{PorTho91}.]}
}

\rem{
  \bi
  \setlength\itemsep{0pt}
  \item $\catC, \catD$ are concrete categories.
  \item  (dual) adjunctions and functors as indicated above.
  \item $\Obs$ is a finite set of observations.
  \item $F_\catC(C) = \Spc\FreeD(\Obs)) \times C^\Alph$ ($\catC$ has products).
  \item $G_\catD(D) = \FreeD(\Omega) + \Alph \cdot D$ ($\catD$ has coproducts).
  \item $G(X) = \Omega + \Alph \cdot X \cong \Omega + \Alph\times X$. 
  \item $\catC, \catD$ have (Epi,Mono)-factorisation systems.
  \item For Brzozowski (to get a minimal automaton, not only observable coalgebra): Object of initialisation $\I$ in $\catC$.
  \ei
}

\rem{
\begin{center}
\begin{tabular}{|l|c|c|c|c|}
\hline
automata & $\catC$, $\catD$ & $\dua$ & $\Obs$ \\
\hline\hline
 Brzozowski DFA & $\catC=\catD=\Set$ & 2 & 1 \\
\hline
Brzozowski Moore aut & $\catC=\catD=\Set$ & $B$ & 1\\
\hline
 PODFA/DKM & $\catC=\Set$, $\catD=\BA$ & 2 & $\Obs$ \\
 \hline
 Nondeterm. aut ($\Powf$-aut) & $\catC=\catD=\EM{\Powf} = \cat{JSL}$  & $\Two$ & 1  \\
 \hline
 Weighted aut ($\V$-aut) & $\catC=\catD=\SMod = \EM{\V}$ & $\S$ & 1 \\
 \hline
Linear Weighted aut ($\mathcal{V}_\k$-aut) & $\catC=\catD=\Veck = \EM{\mathcal{V}_\k}$ & $\k$ & 1 \\
 \hline
 Probabilistic aut ($\D$-aut) & $\catC=\KHaus$, $\catD=\CCStar$ & $\Real$ & 1\\
 \hline
 Alternating aut ($\N$-aut) & $\CABA = \EM{\N}$, $\Set$ & $2$ & 1\\
 \hline
\end{tabular}
\end{center}
}

\rem{
\[\xymatrix{
\cat{Aut}_\catC(\Lang) \ar[r]^-{\ol{\L}'}  & 
\cat{Aut}_{\catD^\op}(\rev(\Lang))^\op \ar[d]^-{\blue{reach^\op}} 
\\
\cat{Aut}_\catC(\Lang) \ar[d]_-{\blue{reach}}& 
\cat{Aut}_{\catD^\op}(\rev(\Lang))^\op \ar[l]_-{\ol{\Spc}'}
\\
\blue{\cat{Aut}_\catC(\Lang)} & 
}\]
}

\takeout{
\[\xymatrix{
\Coalg(FT)(\Lang) \text{ over }\Set \ar[d]^-{\det{(-)}} 
\\
\Aut_{\EM{T}}^{I,R(O)}(\Lang) \ar[r]^-{\ol{\L}'}  & 
\cat{Aut}_{\catD^\op}^{L(I),O}(\rev(\Lang))^\op \ar[d]^-{\blue{reach^\op}} 
\\
\cat{Aut}_{\EM{T}}^{I,R(O)}(\Lang) \ar[d]_-{\blue{reach}}& 
\cat{Aut}_{\catD^\op}^{L(I),O}(\rev(\Lang))^\op \ar[l]_-{\ol{\Spc}'}
\\
\blue{\cat{Aut}_{\EM{T}}^{I,R(O)}(\Lang)} & 
}\]
}

\rem{
\blue{Running Example: Brzozowski for Moore automata (includes classic DFA):}
\[\ba{lc}
& \xymatrix@C=1em@R=4em{
    \ar@(ul,ur)^-{F} \Set \ar@/^1pc/[rrrr]^-{\cPo} \ar@{}|{\bot}[rrrr] &&&& 
  \ar@(ul,ur)^-{G^\op} \Set^\op \ar@/^1pc/[llll]^-{\cPo^\op}
  }
 \\
 &
 \ba{lcl}
 F(X) = B \times X^\Alph & \qquad & G(X) = 1 + \Alph\cdot X
 \\
I = 1  & \qquad &  \cPo^\op(1) = B^1 \cong B 
 \ea
 \ea\]
 What is the modal logic? The logic of tests of the form: $\Alph^*p$ where $p$ is $B$-valued predicate?
}

\rem{
Overview of examples:
\rem{
\begin{center}
\begin{tabular}{|l|c|c|c|c|}
\hline
automata & $\catC$ & $\catD$ & dua.obj & restricts to duality?\\
\hline\hline
 DFA & $\Set$ & $\Set$ & 2 & No \\
\hline
 Moore aut & $\Set$ & $\Set$ & $B$ & No\\
\hline
 PODFA/DKM & $\Set$ & $\BA$ & 2 & $\cat{FSet} \cong \cat{FBA}^\op$\\
 \hline
 Nondeterm. aut ($\Powf$-aut) & $\EM{\Powf} = \cat{JSL}$ & $\EM{\Powf} = \cat{JSL}$  & $\Two$ &
 $\cat{JSL} \cong\cat{JSL}^\op$\\
 \hline
 Weighted aut ($\V$-aut) & $\SMod = \EM{\V}$ & $\SMod= \EM{\V}$ & $\S$ & ? (maybe fin.dim) \\
 \hline
Linear Weighted aut ($\mathcal{V}_\k$-aut) & $\Veck = \EM{\mathcal{V}_\k}$ & $\Veck = \EM{\mathcal{V}_\k}$  & $\S$ & $\FVec \cong \FVec^\op$ (fin.dim.)\\
 \hline
 Probabilistic aut ($\D$-aut) & $\KHaus$  & $\CCStar$ & $\Real$ & Gelfand: $\KHaus \cong\CCStar^\op$\\
 \hline
 Alternating aut ($\N$-aut) & $\CABA = \EM{\N}$ & $\Set$ & $2$ & $\CABA \cong \Set^\op$\\       
 \hline
\end{tabular}
\end{center}
}
\begin{center}
\begin{tabular}{|l|c|c|c|}
\hline
automata & $\catC$, $\catD$ & dua.obj & restricts to duality?\\
\hline\hline
 DFA & $\catC=\catD=\Set$ & 2 & No \\
\hline
 Moore aut & $\catC=\catD=\Set$ & $B$ & No\\
\hline
 PODFA/DKM & $\catC=\Set$, $\catD=\BA$ & 2 & $\cat{FSet} \cong \cat{FBA}^\op$\\
 \hline
 Nondeterm. aut ($\Powf$-aut) & $\catC=\catD=\EM{\Powf} = \cat{JSL}$  & $\Two$ &
 $\cat{JSL} \cong\cat{JSL}^\op$\\
 \hline
 Weighted aut ($\V$-aut) & $\catC=\catD=\SMod = \EM{\V}$ & $\S$ & ? (maybe fin.dim) \\
 \hline
Linear Weighted aut ($\mathcal{V}_\k$-aut) & $\catC=\catD=\Veck = \EM{\mathcal{V}_\k}$ & $\S$ & $\FVec \cong \FVec^\op$ (fin.dim.)\\
 \hline
 Probabilistic aut ($\D$-aut) & $\catC=\KHaus$, $\catD=\CCStar$ & $\Real$ & Gelfand: $\KHaus \cong\CCStar^\op$\\
 \hline
 Alternating aut ($\N$-aut) & $\CABA = \EM{\N}$, $\Set$ & $2$ & $\CABA \cong \Set^\op$\\       
 \hline
\end{tabular}
\end{center}
}


\section{Revisiting Examples}
\label{sec:old-ex}


\subsection{Deterministic Kripke Models}
\label{sec:deterministic}

A central example from~\cite{BKP:WoLLIC}
are deterministic Kripke models (in {\em loc.cit} referred
to as PODFAs, i.e., partially observable DFAs).
We will first recall the definitions of deterministic Kripke models and
their dual Boolean algebras with operators corresponding to a
modal logic of tests.
After that we will see how this duality can be seen
as a special case of our general duality picture, which has as
immediate corollary a minimisation algorithm for the case of finite models.
In addition, results from Section~\ref{sec:trace-logic} entail that
the modal test language without propositional  operators 
is sufficiently expressive to specify deterministic Kripke models up to bisimulation 
and to compute their observable quotient.

\begin{definition}
We define \emph{deterministic Kripke models} 
to be quintuples $\cS = (S,\Alph,\Obs,\gamma:S\to
2^{\Obs},\delta:S \to S^\Alph)$ where $S$ is a finite set of \emph{states}, $\Alph$ is a finite set
of \emph{actions}, $\Obs$ is a finite set of \emph{observations}, $\delta$
is a \emph{transition function} and $\gamma$ is an \emph{observation function}. 
A function $f:S_1 \to S_2$ is a morphism between Kripke models
$(S_1,\Alph,\Obs,\gamma_1,\delta_1)$
and $(S_2,\Alph,\Obs,\gamma_2,\delta_2)$
if for all $s \in S_1$ and all $a \in \Alph$ we have
$\gamma_1(s) =\gamma_2(s)$ and $f(\delta_1(s)(a)) = \delta_2(f(s))(a)$.
We let $\cat{DKM}$ denote the category of deterministic Kripke models.
\end{definition} 
In other words, deterministic Kripke models are Kripke models where for
each action $a \in \Alph$ the corresponding relation is the graph of a (total) function.
It is well-known that there is a duality between $\mathbf{DKM}$ and a suitable category $\cat{BAO}$ of
Boolean algebras. We will now recall the definition of $\cat{BAO}$ and some known facts
concerning this duality.

\begin{definition}
  The category $\cat{BAO}$ of (deterministic) \emph{Boolean algebras with
    operators} (BAOs) has as objects Boolean algebras $B$ with the
  usual operators $\land$ and $\neg$ with a greatest element $\top$ and
  least element $\bot$ together with unary operators $(a):B \to B$, for
  each action $a \in \Alph$, such that $(a)$ is a Boolean homomorphism.  For each
  observation $\omega \in \Obs$, we also have constants
  $\underline{\omega}$.  We denote an object of $\cat{BAO}$ by
 $$\cB =
  (B,\{(a)|a\in\Alph\},\{\underline{\omega}|\omega\in\Obs\},\top,
  \land,\neg).$$ 
  The $\cat{BAO}$ morphisms  are the usual Boolean homomorphisms
  preserving, in addition, the constants and commuting with the unary operators. 
  Finally we denote by $\cat{FBAO}$ the category of {\em finite} Boolean algebras with
  operators.
 \end{definition}

The following fact is well-known (cf.~e.g.~\cite{BdRV01, Givant09}).

\begin{fact}
    There is a dual adjunction between $\cat{Set}$ and $\cat{BA}$ as depicted in Figure~\ref{fig:DKM}
 given by the contravariant functor $\bbP$ 
 that maps a set to its Boolean algebra of subsets
 and the functor  $\Uf\mathrel{:=}\Hom(-,\bf{2})$, ie., the contravariant functor the maps a 
 Boolean algebra to its collection of ultrafilters.
 This adjunction restricts to a dual equivalence between the category  $\cat{FSet}$ of finite
 sets and the category $\cat{FBA}$ of finite Boolean algebras.
\end{fact}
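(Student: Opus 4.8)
The plan is to realise this as an instance of the dualising-object construction from Section~\ref{sec:base}, taking $\dua = \mathbf{2}$ as an object living simultaneously in $\Set$ (the two-element set) and in $\BA$ (the two-element Boolean algebra), so that $\bbP = \Set(-,\mathbf{2})$ and $\Uf = \BA(-,\mathbf{2})$. First I would check the functors are well defined on morphisms. For $f\colon X \to Y$ in $\Set$, the inverse-image map $\bbP(f) = f^{-1}\colon 2^Y \to 2^X$ preserves $\land,\neg,\top,\bot$ because the Boolean operations on a powerset are computed pointwise, so $\bbP(f)$ is a $\BA$-morphism; and for a Boolean homomorphism $h\colon B \to C$, precomposition sends a homomorphism $C \to \mathbf{2}$ to one $B \to \mathbf{2}$, giving $\Uf(h)\colon \Uf(C) \to \Uf(B)$. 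Identifying an ultrafilter $u\subseteq B$ with its characteristic homomorphism $\chi_u\colon B \to \mathbf{2}$ ($\chi_u(b)=1$ iff $b\in u$) makes $\Uf = \Hom(-,\mathbf{2})$ as stated.

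The core of the adjunction is the natural bijection $\theta_{X,B}\colon \Set(X,\Uf B) \cong \BA(B, \bbP X)$, obtained simply by swapping arguments. Given $g\colon X \to \Uf(B)$, each $g(x)$ is a homomorphism $B \to \mathbf{2}$, so $(x,b)\mapsto g(x)(b)$ is a function $X \times B \to \mathbf{2}$ that is a Boolean homomorphism in $b$ for each fixed $x$; transposing yields a map $B \to 2^X = \bbP(X)$ which, because the operations on $2^X$ are pointwise, is itself a Boolean homomorphism. The inverse sends $h\colon B \to \bbP X$ to $x \mapsto (b \mapsto h(b)(x))$. Naturality in $X$ and $B$ is the routine verification that these transposes commute with pre- and post-composition by $\bbP(f)$ and $\Uf(k)$, exactly as recorded for dualising objects in Section~\ref{sec:base}; I would state it and omit the diagram chase. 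This establishes the dual adjunction $\Uf \dashv \bbP$.

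For the restriction to a dual equivalence I would show that the two units are isomorphisms on finite objects. The unit at a Boolean algebra $B$ is the Stone map $\eta_B\colon B \to \bbP\Uf(B)$, $b \mapsto \{u \in \Uf(B)\mid b \in u\}$, and the unit at a set $X$ is the principal-ultrafilter map $\coun_X\colon X \to \Uf\bbP(X)$, $x \mapsto \{U \subseteq X \mid x \in U\}$. For finite $X$, every ultrafilter $u$ on $2^X$ is principal: being closed under the finite intersections available in a finite collection, $\bigcap u \in u$, hence is nonempty, and maximality forces it to be a singleton $\{x\}$ (if it contained two points, splitting on a singleton would contradict $\bigcap u$ lying inside every member), so $\coun_X$ is a bijection and thus an iso in $\Set$. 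For finite $B$, I would invoke the finite case of Stone representation: the assignment sending an atom $a$ to the principal ultrafilter ${\uparrow}a = \{b \mid a \le b\}$ is a bijection between atoms and ultrafilters, and under it $\eta_B$ becomes $b \mapsto \{a \text{ atom}\mid a \le b\}$, identifying $B$ with the powerset of its atoms.

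The main obstacle is exactly this last step, the finite Stone representation showing $\eta_B$ is an isomorphism; everything else is either formal (the hom-set bijection and its naturality) or an easy finiteness argument (principal ultrafilters for $\coun_X$). The key lemma is that the atoms of a finite Boolean algebra are in bijection with its ultrafilters via $a \mapsto {\uparrow}a$, from which injectivity and surjectivity of $\eta_B$ follow because a finite Boolean algebra is atomic: every element is the join of the atoms below it, and every set of atoms is the set of atoms below its join. Finally I would note that $\bbP$ and $\Uf$ do restrict to $\cat{FSet}$ and $\cat{FBA}$ — a finite powerset is a finite Boolean algebra, and a finite Boolean algebra has finitely many ultrafilters, one per atom — so the equivalence is between the finite categories as claimed.
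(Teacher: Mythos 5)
Your proof is correct. The paper offers no proof of this Fact---it is stated as well known and cited to the literature---but your argument is the standard one and is exactly in the spirit of the paper's own framework: you realise $\Uf \dashv \bbP$ as homming into the dualising object $\mathbf{2}$ (as in Section~\ref{sec:base} and Example~\ref{exm:self-dual-powerset}), and the restriction to $\cat{FSet}\cong\cat{FBA}^\op$ via principal ultrafilters and the atoms-ultrafilters bijection for finite Boolean algebras is the usual finite Stone representation. No gaps.
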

%
We are now going to show how this example fits into our general
framework. As a corollary we obtain a minimisation procedure for finite
deterministic Kripke models.  

\begin{proposition}\label{prop:DKM(co)alg}
    We have the following equivalences:
    \begin{enumerate}
    \item $  \cat{DKM} \cong \Coalg_\Set(F) $ for $F = 2^\Obs \times X^\Alph$ 
    \item $\cat{BAO} \cong \Alg_\BA(G_\cat{BA})$ for $G_\cat{BA} =  \Phi_{\cat{BA}}(\Obs) + \Alph\cdot X$ 
    \end{enumerate}
\end{proposition}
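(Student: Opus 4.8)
The plan is to establish both statements as isomorphisms of categories (indeed the claimed equivalences are isomorphisms) that act as the identity on carriers, respectively on underlying Boolean algebras, by unwinding the relevant universal properties and checking that the two notions of morphism coincide exactly.

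For part (1), I would send a deterministic Kripke model $\cS = (S,\Alph,\Obs,\gamma,\delta)$ to the $F$-coalgebra $(S,\langle\gamma,\delta\rangle)$, where $\langle\gamma,\delta\rangle\colon S \to 2^\Obs \times S^\Alph$ pairs the observation and transition maps. Since $\Alph$ and $\Obs$ are fixed throughout, the universal property of the product in $\Set$ makes this assignment a bijection between Kripke structures on a state set $S$ and $F$-coalgebra structures on $S$: every map into $2^\Obs \times S^\Alph$ is uniquely a pair, and $S^\Alph = \prod_\Alph S$ recovers $\delta$. For morphisms I would use $Fh = \id_{2^\Obs} \times h^\Alph$ and unfold the coalgebra condition $\langle\gamma_2,\delta_2\rangle \o h = Fh \o \langle\gamma_1,\delta_1\rangle$ componentwise: the first projection yields $\gamma_2 \o h = \gamma_1$ (preservation of observations), and the second yields $\delta_2(h(s))(a) = h(\delta_1(s)(a))$ for all $a$, which are precisely the defining conditions of a $\cat{DKM}$-morphism.

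For part (2) I would proceed analogously but dually, using the coproduct in $\BA$ together with the free-forgetful adjunction $\FBA \dashv \UBA$. By the universal property of the coproduct, a $G_\cat{BA}$-algebra structure $\alpha\colon \FBA(\Obs) + \Alph\cdot B \to B$ is a copair $[\hat c,\,[(a)]_{a\in\Alph}]$ of a homomorphism $\hat c\colon \FBA(\Obs) \to B$ and a homomorphism $[(a)]_{a\in\Alph}\colon \Alph\cdot B \to B$. The latter is, again by the coproduct universal property, exactly an $\Alph$-indexed family of Boolean endomorphisms $(a)\colon B \to B$, and the former corresponds, via $\FBA \dashv \UBA$, to a function $\Obs \to \UBA B$, i.e.\ an $\Obs$-indexed family of constants $\underline{\omega}\in B$. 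This is precisely the extra data of a BAO on top of $B$, and the passage is visibly invertible, so the assignment is a bijection on objects. For morphisms I would unfold $h \o \alpha = \beta \o G_\cat{BA}(h)$ with $G_\cat{BA}(h) = \id_{\FBA(\Obs)} + \Alph\cdot h$ by precomposing with the two coprojections: the $\Alph\cdot B$ summand gives $h \o (a)_B = (a)_C \o h$ for each $a$ (commuting with the operators), while the $\FBA(\Obs)$ summand gives, after transposing along the adjunction, $h(\underline{\omega}_B) = \underline{\omega}_C$ (preservation of constants). These are exactly the $\cat{BAO}$-morphism conditions.

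I expect the only step requiring care to be the adjunction-mediated identification in part (2): one must check that a Boolean homomorphism out of the free algebra $\FBA(\Obs)$ is freely determined by the constants assigned to generators, and that the morphism condition on the $\FBA(\Obs)$-summand transposes correctly to constant preservation. Both follow immediately from the triangle identities of $\FBA \dashv \UBA$ and the naturality of the coproduct injections, so no genuine obstacle arises; the proposition is really an assembly of these universal properties. Functoriality of both correspondences (and their inverses) is automatic, since each acts as the identity on underlying carriers and underlying functions.
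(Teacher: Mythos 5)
Your proposal is correct and takes essentially the same route as the paper, which simply records that both equivalences are an immediate consequence of the definitions; you have spelled out the product/coproduct universal properties and the $\FBA \dashv \UBA$ transposition that make this immediate. No gaps — the morphism-condition unwindings on both sides match the definitions of $\cat{DKM}$- and $\cat{BAO}$-morphisms exactly.
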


\begin{figure}
\[\ba{lc}
&   \xymatrix@C=1em@R=4em{
    \ar@(ul,ur)^-{F^\op} \cat{Set}^\op \ar@/^1pc/[rrrr]^-{\bbP} \ar@{}|{\top}[rrrr] &&&& 
  \ar@(ul,ur)^-{G_\cat{BA}} \cat{BA} \ar@/^1pc/[llll]^-{\Uf}
  }
 \\
 &
 \ba{lcl}
 F(X) = 2^\Obs \times X^\Alph & \qquad &
 G_\cat{BA}(X) = \Phi_{\cat{BA}}(\Obs) + \Alph\cdot X
 \\
 I = 1 & \qquad &\Uf(\Phi_{\cat{BA}}(\Obs)) \cong 2^\Obs
 \ea
 \ea\]
 \caption{Functors and base adjunction for deterministic Kripke frames}
 \label{fig:DKM}
 \end{figure}
  
Both equivalences are an immediate consequence of the definitions. In the sequel,
we will make no distinction between $F$-coalgebras and deterministic Kripke models and, likewise,
between $G_{\cat{BA}}$-algebras and BAOs.
As a consequence of the proposition we obtain the following duality results by  applying our general framework.
\begin{proposition}
   The dual adjunction $\Uf \dashv \bbP$ lifts to a dual adjunction 
   between $\cat{DKM}$ and $\cat{BAO}$ and to an adjunction between 
   $\Aut_{\Set}^{1,2^\Obs}$ and $\Aut_{\cat{\BA}}^{2,F\Phi_{\cat{BA}}(\Obs)}$.
   If we start with the dual equivalence $\cat{FSet} \cong \cat{FBA}$, both liftings are dual equivalences as well.
   \end{proposition}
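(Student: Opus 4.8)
The plan is to derive all three claims as direct instances of the general lifting machinery of Sections~\ref{sec:liftcoalg} and~\ref{sec:liftaut}, specialised to $\catC=\Set$, $\catD=\BA$, $\Spc=\Uf$, $\Pred=\bbP$, with output object $O=\Phi_{\cat{BA}}(\Obs)$ and initialisation object $I=1$. First I would check the hypotheses of Section~\ref{sec:liftcoalg}: $\Set$ has products, $\BA$ has coproducts, and the two functors have exactly the required shape, since $F(X)=\Uf\Phi_{\cat{BA}}(\Obs)\times X^\Alph$ and $G_{\cat{BA}}(D)=\Phi_{\cat{BA}}(\Obs)+\Alph\cdot D$. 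The one concrete identification to make is $\Uf\Phi_{\cat{BA}}(\Obs)\cong 2^\Obs$ --- the ultrafilters of the free Boolean algebra on $\Obs$ are exactly the maps $\Obs\to 2$ --- which is the isomorphism recorded in Figure~\ref{fig:DKM} and which makes $F$ agree with the coalgebra functor of Proposition~\ref{prop:DKM(co)alg}(1). Given this, the natural isomorphism $\xi\colon F\,\Uf\natIso\Uf\,G_{\cat{BA}}$ of~\eqref{eq:xi} is immediate, because $\Uf$, being a dual adjoint, sends the coproduct $\Phi_{\cat{BA}}(\Obs)+\Alph\cdot D$ to the product $\Uf\Phi_{\cat{BA}}(\Obs)\times\Uf(D)^\Alph$.

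With $\xi$ in hand, the dual-adjunction statement is purely formal: the cited lifting result (\cite[Cor.~2.15]{HermidaJacobs98}, see also \cite[Thm.~2.5]{KerKoeWes14}) produces a lifted dual adjunction $\ol{\Uf}\dashv\ol{\bbP}$ between $\Coalg_\Set(F)$ and $\Alg_\BA(G_{\cat{BA}})$, and transporting along the equivalences $\cat{DKM}\cong\Coalg_\Set(F)$ and $\cat{BAO}\cong\Alg_\BA(G_{\cat{BA}})$ of Proposition~\ref{prop:DKM(co)alg} yields the dual adjunction between $\cat{DKM}$ and $\cat{BAO}$. For the automata level I would invoke Theorem~\ref{thm:liftaut} with $I=1$ and $O=\Phi_{\cat{BA}}(\Obs)$; it gives a dual adjunction between $\Aut_\Set^{1,\Uf\Phi_{\cat{BA}}(\Obs)}$ and $\Aut_\BA^{\Phi_{\cat{BA}}(\Obs),\bbP(1)}$, which under the identifications $\Uf\Phi_{\cat{BA}}(\Obs)\cong 2^\Obs$ and $\bbP(1)\cong 2$ is precisely the adjunction between $\Aut_\Set^{1,2^\Obs}$ and the displayed category of $\BA$-automata.

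For the equivalence claim I would show that the base dual equivalence $\cat{FSet}\cong\cat{FBA}$ restricts the whole construction to the finite world. Two verifications are needed: that $F$ preserves finiteness, which is clear since $2^\Obs\times X^\Alph$ is finite whenever $X,\Obs,\Alph$ are; and that $G_{\cat{BA}}$ preserves finiteness, which holds because $\Phi_{\cat{BA}}(\Obs)$ is a finite Boolean algebra (free on a finite set) and a finite coproduct of finite Boolean algebras is again finite (its atoms are tuples of atoms). Thus $F$ and $G_{\cat{BA}}$ restrict to endofunctors on $\cat{FSet}$ and $\cat{FBA}$. Since the lifting carries the base units $\eta,\coun$ to the lifted units unchanged on underlying objects, and $\cat{FSet}\cong\cat{FBA}$ says exactly that $\eta_B$ and $\coun_X$ are isomorphisms for finite $B,X$, the lifted units are isomorphisms and both liftings are dual equivalences; the same remains true after the further lift to automata because $I=1$, $O=\Phi_{\cat{BA}}(\Obs)$, $\bbP(1)$ and $\Uf\Phi_{\cat{BA}}(\Obs)$ are all finite. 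The only real work here, beyond invoking the general theorems, is pinning down the isomorphisms $\Uf\Phi_{\cat{BA}}(\Obs)\cong 2^\Obs$ and $\bbP(1)\cong 2$ that align these objects with the abstract $\Spc O$ and $\Pred I$, together with the finiteness bookkeeping for $G_{\cat{BA}}$; the lifting itself contributes no obstacle.
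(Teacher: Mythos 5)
Your proposal is correct and follows essentially the same route as the paper: identify $\cat{DKM}$ and $\cat{BAO}$ with coalgebra and algebra categories via Proposition~\ref{prop:DKM(co)alg}, use $\Uf\Phi_{\cat{BA}}(\Obs)\cong 2^\Obs$ and $\bbP(1)\cong 2$ to match the functor shapes required by the general lifting results of Sections~\ref{sec:liftcoalg} and~\ref{sec:liftaut}, and observe that everything restricts along $\cat{FSet}\cong\cat{FBA}$. The only difference is that you spell out the finiteness bookkeeping and the unit-isomorphism argument that the paper dismisses as routine checking.
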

\begin{proof}
   For the dual adjunction between $\cat{DKM}$ and $\cat{BAO}$ recall from Proposition~\ref{prop:DKM(co)alg}
   that both categories are equivalent to categories of $F$-coalgebras and $G_\cat{BA}$-algebras for certain functors $F$ and $G_\cat{BA}$, respectively.
   Furthermore, we have $\Uf(\Phi_{\cat{BA}}(\Obs)) \cong 2^\Obs$, which follows
   from the well-known fact that the set of homomorphisms of type $\Phi_{\cat{BA}}(\Obs) \to \bf{2}$
   (i.e., ultrafilters)
 is in one-one correspondence with the set of functions of type $\Obs \to 2$.
   Therefore the functors $F$ and $G_\cat{BA}$ have the shape required by our general 
   lifting result from Section~\ref{sec:liftcoalg} and 
    we obtain functors $\ol{\bbP}: \Coalg(F)^\op \to \Alg(G_\cat{BA})$ and $\ol{Uf}:\Alg(G_\cat{BA}) \to \Coalg(F)^\op$
    with $\ol{Uf} \dashv \ol{\bbP}$.
    
    To extend the adjunction $\ol{Uf} \dashv \ol{\bbP}$ between $\Coalg(F)$ and $\Alg(G_\cat{BA})$ further to a dual adjunction $\ol{Uf}' \dashv \ol{\bbP}'$ between 
    $\Aut_{\Set}^{1,2^\Obs}$ and $\Aut_{\cat{\BA}^\op}^{2,\Phi_{\cat{BA}}(\Obs)}$
    - the latter is a slight extension of the former by adding a initial state
    to deterministic Kripke models and by viewing BAOs as some kind of automata with acceptance predicate - it suffices to note that
    $\bbP 1 \cong 2$ such that the result follows from the general theorem in Section~\ref{sec:liftaut}.
    
    The fact that the obtained adjunctions restrict to equivalences when we replace the base categories 
    $\cat{Set}$ and $\cat{BA}$ with $\cat{FSet}$ and $\cat{FBA}$, respectively, is a matter of routine checking.
\end{proof}

%

This shows, in particular, that we get a duality between finite deterministic Kripke models
and \cat{FBAO}s.  This is the key for obtaining a minimal
realization via logical theories.

\begin{definition}
Consider the language $\LLang{G_\cat{BA}}$:
\[ \varphi ::=  \top \mid \hat{\omega}, \omega \in \Obs \mid \boxmod{a} \varphi, a \in \Alph \mid \varphi_1\wedge \varphi_2
\mid \neg \varphi. \]
with semantics defined as below \eqref{eq:rho-concrete} on page~\pageref{eq:rho-concrete}. 
For a given automaton $\cS = (S,\gamma,\delta)$  we say that a subset $U$ of $S$ is \emph{definable by} $\LLang{G_\cat{BA}}$ if  $U=\sem{\varphi}$
for some $\varphi \in\LLang{G_\cat{BA}}$ where we identify the predicate $\sem{\varphi}: S \to 2$ with the set $\{ s \in S \mid \sem{\varphi}(s) =1\}$ .
We let $\mathrm{Def}(\cS) = (\mathrm{Def}(S),\{(a)_\cS \}_{a \in \Alph}, \{\sem{\hat{\omega}}\}_{\omega \in \Obs})$ be the BAO 
based on the definable subsets of $\cS$, where $(a)_\cS(\lsem \varphi \rsem) = \lsem \boxmod{a} \varphi \rsem$.

\end{definition}
In other words, the modal logic is the modal logic with (deterministic) 
$\Alph$-indexed modalities and atomic propositions from $\Obs$. 
For a given deterministic Kripke model, the algebra of definable subsets yields the reachable part (=zero generated subalgebra) of the dual automaton (=algebra).

\begin{proposition}
  Let $\cS=(S,\gamma,\delta) \in \cat{DKM}$ be a deterministic Kripke model, $s_I \in S$ an initial state and let $\mathrm{Def}(\cS)  \in \cat{BAO}$ the dual algebra
  of definable subset. 
  Then $\mathrm{Def}(\cS) \in \cat{BAO}$ is isomorphic to the reachable part  $\reach{\ol{\bbP} (\cS)}$ of 
  $\ol{\bbP} (\cS) \in \cat{BAO}$.
 \end{proposition}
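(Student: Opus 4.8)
The plan is to identify the reachable part $\reach{\ol{\bbP}(\cS)}$ concretely as the image of the semantic interpretation map, and then to observe that this image, equipped with its inherited operations, is exactly $\mathrm{Def}(\cS)$. First I would unwind the dual algebra: by \eqref{eq:ol-LR} the $G_\cat{BA}$-algebra $\ol{\bbP}(\cS)$ has underlying Boolean algebra $\bbP(S)=2^S$, its constants $\sem{\hat{\omega}}$ arise from the $\Phi_{\cat{BA}}(\Obs)$-component, and for each $a\in\Alph$ its modal operator is $(a)=\bbP\gamma\o\rho_S$ restricted to the $a$-summand. Using the concrete modal semantics computed in Section~\ref{sec:liftcoalg} (specialised in Example~\ref{exa:dfa}, but now with observations $2^\Obs$ in place of the single acceptance bit), this operator is the deterministic box $(a)(U)=\{s\in S\mid \delta(s)(a)\in U\}$, so that $(a)(\sem{\varphi})=\sem{\boxmod{a}\varphi}$ for every formula $\varphi\in\LLang{G_\cat{BA}}$.

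Next I would invoke initiality. Viewing the elements of the initial $G_\cat{BA}$-algebra $(A_0,\alpha_0)$ as formulas of $\LLang{G_\cat{BA}}$ up to provable equivalence, the semantics defined in Section~\ref{sec:base} is precisely the unique $G_\cat{BA}$-algebra morphism $\sem{-}\colon (A_0,\alpha_0)\to \ol{\bbP}(\cS)$ sending $\varphi$ to $\sem{\varphi}\in 2^S$. By Proposition~\ref{prop:reach-D} the reachable part $\reach{\ol{\bbP}(\cS)}$ is obtained by $(E,M)$-factorising this initial morphism in $\Alg_\BA(G_\cat{BA})$; since in this category $(E,M)$ is the (surjective, injective) system (Lemma~\ref{lem:factor-sys}), the underlying set of $\reach{\ol{\bbP}(\cS)}$ is exactly the image $\{\sem{\varphi}\mid \varphi\in\LLang{G_\cat{BA}}\}$, which is by definition the family $\mathrm{Def}(S)$ of definable subsets.

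It then remains to match the two algebra structures. The reachable part carries the sub-BAO structure inherited from $\ol{\bbP}(\cS)$: being a subalgebra it is closed under $\wedge,\neg$ and the operators $(a)$, it contains the constants $\sem{\hat\omega}$, and its modal operators are the restrictions of the $(a)$. By the computation of the first paragraph we have $(a)(\sem{\varphi})=\sem{\boxmod{a}\varphi}$, which is exactly the clause $(a)_\cS(\sem{\varphi})=\sem{\boxmod{a}\varphi}$ defining $\mathrm{Def}(\cS)$; the constants and Boolean operations agree likewise. Hence the identity on definable subsets is a $\cat{BAO}$-isomorphism $\mathrm{Def}(\cS)\cong\reach{\ol{\bbP}(\cS)}$. (The chosen initial state $s_I$ plays no role on either side of this statement; it enters only when one passes to the automata level of \eqref{eq:comp-adj-all}.)

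The main obstacle I anticipate is the bookkeeping in the first paragraph: one must trace $\ol{\bbP}(\gamma)=\bbP\gamma\o\rho_S$ through the mate formula \eqref{eq:mate} to confirm that the lifted operator really is the deterministic box, and hence that it intertwines with the semantic map as required. Once the identification of $(a)$ with $\sem{\boxmod{a}(-)}$ is secured, the rest is the routine observation that the homomorphic image produced by the (surjective, injective) factorisation coincides with the carved-out subalgebra of $\LLang{G_\cat{BA}}$-definable sets.
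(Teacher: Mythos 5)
Your proposal is correct and follows essentially the same route as the paper: both identify the reachable part as the (surjective, injective) factorisation of the initial $G_{\cat{BA}}$-algebra morphism into $\ol{\bbP}(\cS)$ (via Proposition~\ref{prop:reach-D} and Lemma~\ref{lem:factor-sys}), and observe that this initial morphism is the semantic map whose image, with its inherited sub-BAO structure, is precisely $\mathrm{Def}(\cS)$. You merely spell out in more detail the intertwining $(a)(\sem{\varphi})=\sem{\boxmod{a}\varphi}$ that the paper leaves implicit.
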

\begin{proof}
    The result follows from the fact that  the unique morphism  $i_\mathrm{Def}(\cS)$ from the initial $G_{\cat{BA}}$-algebra $\cI$ to $\mathrm{Def}(\cS)$ 
    together with the embedding $m: \mathrm{Def}(\cS) \to \ol{\bbP} (\cS)$ is the
    image-factorisation of 
    $i_{\ol{\bbP} (\cS)}:\cI \to \ol{\bbP} (\cS)$,
    which is the $(E,M)$-factorisation obtained from the factorisation system of
    surjective and injective Boolean homomorphisms
    (cf.~Lemma~\ref{lem:factor-sys}).
\end{proof}

\begin{definition}
    We call a formula $\varphi$ of the form $ \boxmod{a_1}\dots \boxmod{a_n} \omega$ for some $n \in \bbN$
    and $a_i \in \Alph$ for $i \in \{1,\dots,n\}$ a trace formula. For a DKM $\cS$ we denote
    by $\mathrm{Def}^*(\cS)$ the collection of trace-definable subsets of $\cS$, i.e, the collection
    of subsets definable by a trace formula.    
\end{definition}

\begin{proposition}
    For all DKMs $\cS$ we have
    $$\ol{\Uf}(\mathrm{reach}(\ol{\bbP}(\cS))) \cong (\ol{\Uf}(\mathrm{Def}(\cS)) \cong (\ol{\Uf}(\ol{\Phi}_\cat{BA}(\mathrm{Def}^*(\cS))).$$
\end{proposition}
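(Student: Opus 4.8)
The statement packages two isomorphisms, and I would prove them separately, the second being the real content.

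For the first isomorphism $\ol{\Uf}(\reach{\ol{\bbP}(\cS)}) \cong \ol{\Uf}(\mathrm{Def}(\cS))$ there is essentially nothing to compute: the proposition identifying $\mathrm{Def}(\cS)$ with $\reach{\ol{\bbP}(\cS)}$ already gives an isomorphism $\mathrm{Def}(\cS) \cong \reach{\ol{\bbP}(\cS)}$ in $\cat{BAO}$, and any functor preserves isomorphisms, so applying $\ol{\Uf}$ to it is all that is needed.

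For the second isomorphism I would first set up a canonical comparison morphism in $\cat{BAO}$. The trace-definable subsets $\mathrm{Def}^*(\cS)$ are precisely the least $G$-subalgebra of $\ol{\ForgD}\,\ol{\bbP}(\cS)$, obtained from the observation constants by iterating the modal operators $[a]$; thus the inclusion $\mathrm{Def}^*(\cS) \hookrightarrow \ol{\ForgD}\,\mathrm{Def}(\cS)$ is a $G$-algebra map, and transposing along the lifted free--forgetful adjunction $\ol{\FreeD} \dashv \ol{\ForgD}$ of Lemma~\ref{lem:comp-adj} yields a $\cat{BAO}$-morphism $q \colon \ol{\Phi}_\cat{BA}(\mathrm{Def}^*(\cS)) \to \mathrm{Def}(\cS)$. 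Equivalently, $q$ is the epi part of the factorisation of the initial $G_\cat{BA}$-algebra morphism into $\ol{\bbP}(\cS)$, pushed through $\ol{\Phi}_\cat{BA}(\mathrm{Def}^*(\cS))$ using that $\ol{\Phi}_\cat{BA}$ preserves the initial object and epimorphisms. By Proposition~\ref{prop:trc-expressive} trace logic is as expressive as the full modal logic $(G_\cat{BA},\rho)$, so every definable predicate is a Boolean combination of trace-definable ones; hence $\mathrm{Def}^*(\cS)$ generates $\mathrm{Def}(\cS)$ as a BAO and $q$ is an epimorphism. Dualising, $\ol{\Uf}$ sends epis to monos, giving a monomorphism $\ol{\Uf}(q) \colon \ol{\Uf}(\mathrm{Def}(\cS)) \subto \ol{\Uf}(\ol{\Phi}_\cat{BA}(\mathrm{Def}^*(\cS)))$.

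The main obstacle is upgrading this monomorphism to an isomorphism, i.e. showing $\ol{\Uf}(q)$ is surjective, and this is the step where the reading of $\ol{\Phi}_\cat{BA}(\mathrm{Def}^*(\cS))$ genuinely matters. Taken literally as the free Boolean functor, $\ol{\Uf}(\ol{\Phi}_\cat{BA}(\mathrm{Def}^*(\cS)))$ carries one ultrafilter per $\{0,1\}$-assignment to the trace generators, and such an assignment need not be realised by any state of $\cS$, so one must rule out these spurious dual states. The way I would make the step go through is to identify $\ol{\Phi}_\cat{BA}(\mathrm{Def}^*(\cS))$ with the sub-BAO of $\ol{\bbP}(\cS)$ generated by the trace-definable subsets; by the trace-expressiveness invoked above (together with the recorded fact that $\mathrm{Def}(\cS)$ is exactly this generated subalgebra) the comparison $q$ is then already an isomorphism in $\cat{BAO}$, and the claim follows by applying $\ol{\Uf}$. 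Conceptually I would anchor this identification in Lemma~\ref{lem:trc-coincidence}: the coincidence of the trace theory map $\th{G}$ and the full theory map $\th{G_\cat{BA}}$ ensures that routing through trace formulas records exactly the same observational content as the full logic, so that the free presentation collapses onto the generated subalgebra once we pass to the dual side. Verifying precisely that this collapse occurs after $\ol{\Uf}$, with no unrealised states surviving, is the crux of the argument and the place where I would spend the most care.
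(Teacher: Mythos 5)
Your proposal is correct and follows essentially the same route as the paper: the paper's proof likewise invokes the trace-logic results of Section~\ref{sec:trace-logic} (that $\Phi_\cat{BA}\Trc$ is the Lindenbaum algebra, hence the reachable part is the image of the initial morphism into $\ol{\bbP}(\cS)$) and then simply asserts that this image "can be easily checked to be $\Phi_\cat{BA}(\mathrm{Def}^*(\cS))$". Your explicit comparison morphism $q$, and in particular your observation that the claim only holds when $\ol{\Phi}_\cat{BA}(\mathrm{Def}^*(\cS))$ is read as the sub-$\cat{BAO}$ of $\ol{\bbP}(\cS)$ generated by the trace-definable subsets rather than as the literal free algebra (whose dual would carry spurious ultrafilters realised by no state), makes precise exactly the identification the paper leaves implicit.
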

\begin{proof}
 By the results in Section~\ref{sec:trace-logic}, we have
 $\Phi_\cat{BA} \Trc$ is isomorphic to the Lindenbaum algebra of  $\LLang{G_\cat{BA}}$.
 Therefore the reachable part of $\ol{\bbP}(\cS)$ is obtained as
 the image of $\Phi_\cat{BA} \Trc$  under the initial morphism
 from $\Phi_\cat{BA} \Trc$ to $\ol{\bbP}(\cS)$ which can be easily checked to be $\Phi_\cat{BA}(\mathrm{Def}^*(\cS))$.
\end{proof}

We finish with a key observation from~\cite{BKP:WoLLIC} that allows to compute quotients of finite DKMs
via duality.

\begin{corollary}
    Given a finite DKM $\cS$, the quotient of $\cS$ modulo bisimulation is isomorphic to   $(\ol{\Uf}(\ol{\Phi}_\cat{BA}(\mathrm{Def}^*(\cS)))$.
\end{corollary}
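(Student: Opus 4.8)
The plan is to read the statement off the chain of isomorphisms already established and then to identify the resulting coalgebra concretely as a bisimulation quotient. By the preceding proposition we already know
\[
\ol{\Uf}(\ol{\Phi}_{\cat{BA}}(\mathrm{Def}^*(\cS)))
\;\cong\;
\ol{\Uf}(\reach{\ol{\bbP}(\cS)}),
\]
so it suffices to show that the right-hand side is isomorphic to the quotient of $\cS$ modulo bisimulation. First I would use finiteness: since $\cS$ is a finite DKM, $\ol{\bbP}(\cS)$ is a finite BAO, and the lifted adjunction $\ol{\Uf}\dashv\ol{\bbP}$ restricts to the dual equivalence between finite DKMs and finite BAOs arising from $\cat{FSet}\cong\cat{FBA}$. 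On finite objects the unit is an isomorphism, so $\ol{\Uf}\,\ol{\bbP}(\cS)\cong\cS$.

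Next I would turn the subalgebra inclusion into a quotient of $\cS$. Writing $m\colon \reach{\ol{\bbP}(\cS)}\hookrightarrow\ol{\bbP}(\cS)$ for the reachable subalgebra, the dual adjoint $\ol{\Uf}$ sends this injective BAO-morphism to a surjective DKM-morphism (a mono on the algebra side goes to an epi on the coalgebra side; here this is genuine because of the full finite duality), yielding
\[
\ol{\Uf}(m)\colon \cS\;\cong\;\ol{\Uf}\,\ol{\bbP}(\cS)\twoheadrightarrow\ol{\Uf}(\reach{\ol{\bbP}(\cS)}).
\]
This exhibits $\ol{\Uf}(\reach{\ol{\bbP}(\cS)})$ as a coalgebra quotient of $\cS$. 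Moreover, by \textbf{Algo1} of Theorem~\ref{thm:min-algo} (equivalently Proposition~\ref{prop:reach-D}) this coalgebra is observable, i.e.\ it has no proper quotients.

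It remains to identify an observable quotient of $\cS$ with the bisimulation quotient $\cS/{\sim}$, and this is the only step carrying genuine (if standard) content. Here $F = 2^{\Obs}\times(-)^{\Alph}$ on $\Set$ is a polynomial functor preserving weak pullbacks, so coalgebraic behavioural equivalence coincides with Kripke bisimilarity; equivalently, the modal logic $(G_{\cat{BA}},\rho)$ is expressive (Section~\ref{sec:trace-logic}, Proposition~\ref{prop:trc-expressive}), so logically equivalent states are exactly bisimilar states. Hence an $F$-coalgebra is observable precisely when its unique morphism into the final $F$-coalgebra is injective, i.e.\ when distinct states are never bisimilar. Since $\ol{\Uf}(\reach{\ol{\bbP}(\cS)})$ is a quotient of $\cS$ on which the final morphism is injective, it is the image of $\cS$ in the final coalgebra, which is by definition $\cS/{\sim}$. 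Combining this with the isomorphism from the preceding proposition gives the claim. The main obstacle is thus conceptual rather than computational: one must guarantee that the categorically produced observable coalgebra is literally a \emph{quotient} of the original $\cS$ (which is exactly what finiteness and the resulting full duality provide) and that ``observable'' coincides with ``bisimulation-collapsed'' (which is what expressiveness together with weak-pullback preservation provide).
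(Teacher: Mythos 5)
Your proposal is correct and follows essentially the same route as the paper's own (very terse) proof: observability comes from Proposition~\ref{prop:reach-D} via the chain of isomorphisms in the preceding proposition, and the fact that the result is the \emph{maximal} quotient (i.e.\ the bisimulation quotient) comes from the dual \emph{equivalence} $\cat{FSet}\cong\cat{FBA}$, which turns the reachable-subalgebra inclusion into a genuine surjective quotient of $\cS$. You simply spell out the two steps the paper leaves implicit --- that the counit/unit isomorphisms and the mono-to-epi transfer make $\ol{\Uf}(\reach{\ol{\bbP}(\cS)})$ a quotient of $\cS$, and that observable quotients coincide with the bisimulation collapse --- which is a faithful elaboration rather than a different argument.
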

\begin{proof}
    By Prop~\ref{prop:reach-D} we have that $(\ol{\Uf}(\mathrm{Def}(\cS))$ and thus $(\ol{\Uf}(\ol{\Phi}_\cat{BA}(\mathrm{Def}^*(\cS)))$ is observable.
    As $\cat{FSet}$ and $\cat{FBA}$ are dually {\em equivalent}, we get
    that $(\ol{\Uf}(\mathrm{Def}(\cS))$ and thus $(\ol{\Uf}(\ol{\Phi}_\cat{BA}(\mathrm{Def}^*(\cS)))$ are the maximal quotient of $\cS$.
\end{proof}


%

%

\subsection{Weighted Automata}
\label{sec:weighted}


\subsubsection{Semirings and semimodules} We need some basic definitions on semirings and semimodules to present the example of weighted automata. 

Recall that a \emph{semiring} is a tuple $(\S, +, \cdot, 0,1)$ where $(\S, + , 0)$ and $(\S,\cdot, 1)$ are monoids, the former of which is commutative, and multiplication  distributes over finite sums:
\[
\begin{array}{cccccc}
r \cdot 0 = 0 = 0 \cdot r 
& \qquad & 
r\cdot(s + t) = r \cdot s + r \cdot t
& \qquad & 
(r + s)\cdot t = r \cdot t + s \cdot t
\end{array}
\]
We just write $\S$ to denote a semiring. Examples of semirings are:  every field, the Boolean semiring $2$, the semiring $(\Nat,+, \cdot, 0, 1)$ of natural numbers, and the tropical semiring $(\Nat \cup \{\infty\}, \min, +, \infty, 0)$.  All these semirings are examples of {\em commutative} semirings, as the $\cdot$ operation is also commutative. 

For a semiring $\S$, an \emph{$\S$-semimodule} is a commutative monoid $(M, +, 0)$ with a left-action $\S \times M \to M$ denoted by juxtaposition $rm$ for $r \in \S$ and $m \in M$, such that for every $r,s \in \S$ and every $m, n \in M$ the following laws hold:
\[
\begin{array}{rcl@{\qquad}rcl}
  (r+s)m & = & rm + sm & r(m+n) & = & rm + rn \\
  0m &  = & 0 & r0 & = & 0 \\
  1m &  = & m & r(sm) & = & (r \cdot s) m
\end{array}
\]
Every semiring $\S$ is an $\S$-semimodule, where the action is taken to be just the semiring multiplication.  Semilattices are another example of semimodules (for the Boolean semiring $\S$).

An $\S$-semimodule homomorphism is a monoid homomorphism $h\colon M_1 \to M_2$ such that
$h(rm) = rh(m)$ for each $r \in \S$ and $m \in M_1$.  $\S$-semimodule homomorphisms are also
called $\S$-\emph{linear maps} or simply \emph{linear maps}.  The set of all linear maps from a $\S$-semimodule $M_1$ to $M_2$
is denoted by $\SMod(M_1,M_2)$.

\emph{Free $\S$-semimodules} over a set $X$ exist and can be built using
the functor $\V \colon \Set \to \Set$ defined on sets $X$ and maps $h\colon X \to Y$ 
as follows:
\[
\begin{array}{lcl}
  \label{eq:V}
  \V (X) & = &  \{\, \varphi\colon X \to \S \mid \text{$\varphi$ has finite support}\,\},
  \\[1em]
  \V (h(\varphi)) & = & \big(y \mapsto \sum_{x \in h^{-1}(y)} \varphi(x)\big),
\end{array}
\]
where a function $\varphi\colon X \to \S$ is said to have finite support if $\varphi(x) \neq 0$ holds 
only for finitely many elements $x \in X$. $\V(X)$ is the free $\S$-semimodule on $X$ when equipped with the following pointwise 
$\S$-semimodule structure:
\[
(\varphi_1 + \varphi_1 )(x) = \varphi_1(x) + \varphi_2(x)
\;\;\;\;
(s\varphi_1)(x) = s \cdot \varphi_1(x) \,.
\]
We sometimes  write the elements of $\V(X)$ as formal sums $s_1x_1 + \ldots + s_nx_n$ with $s_i \in \bbS$ and $x_i \in X$.
$\V(X)$ is a monad and the category of Eilenberg-Moore algebras is $\SMod$, 
the category of $\bbS$-semimodules and $\bbS$-linear maps. As usual, free $\bbS$-semimodules enjoy the following universal property: for every function $h\colon X \to M$ from a set $X$ to a semimodule $M$, there exists a unique linear map $h^{\sharp}\colon \V(X) \to M$ that is called the \emph{linear extension} of $h$.

We can define for an $\S$-semimodule $M$ its \emph{dual space} $M^\star$ to be the set $\SMod(M,\S)$ of all linear maps 
between $M$ and $\S$, endowed with the $\S$-semimodule structure obtained
by taking pointwise addition and monoidal action: $(g + h)(m) = g(m) + h(m)$, and 
$(sh)(m) = s \cdot h(m)$.  Note that $\S \cong V(1)$ and that $\S^\star = \SMod(\S,\S) \cong \S$.

\subsubsection{Weighted automata and weighted languages}

A \emph{weighted automaton} with finite input alphabet $\Sigma$ and weights over a semiring
$\S$ is given by a set of states $X$, a function $t\colon X \to \V(X)^\Sigma$ (encoding
the transition relation in the following way: the state $x \in X$ can make a transition to $y \in X$
with input $a \in \Sigma$ and weight $s \in \S$ if and only if $t(x)(a)(y) = s$), a final state
function $f\colon X \to \S$ associating an output weight with every state, and an initial
state function $i\colon 1 \to \V(X)$. Diagrammatically: 

  \[
  \xymatrix@C=.5cm{
 1\ar[rd]^{i}  & & {\bbS}&  && 
\\
 & X \ar[ru]^f \ar[d]^t & &&
 & 
\\
&  \V(X)^\Alph &&&
}
\]

The function $t\colon X \to \V(X)^\Sigma$ can be inductively extended to words $w\in \Sigma^*$: 
\begin{align*}
t(x)(\varepsilon) &= 1.x \\
t(x)(aw) &= v_1 t (x_1)(w) + \cdots +v_n t (x_n)(w) \text{, where }  t(x)(a)=v_1x_1+ \cdots +v_nx_n
\end{align*}

Weighted automata recognise functions in $\S^{\Sigma^*}$, or \emph{formal power series over $\S$}. 
More precisely, the formal power series recognised by a weighted automaton $\str{X} = (X,t,i,f)$
is the function $L(\str{X})\colon \Sigma^* \to \S$ that maps $w \in \Sigma^*$ to $f(t(i)(w)) \in \S$. More concretely, the value $L(\str{X})(w)$, for $w=a_1a_2\cdots a_n$, is the  sum of all $v_1 \cdot \ldots \cdot v_{n} \cdot f(x_{n+1})$ over all paths $p_w = x_1\xrightarrow{a_1,v_1} \ldots \xrightarrow{a_{n},v_{n}} x_{n+1}$ labelled by $w$. The value of $L(\str{X})(w)$ can be easily computed using the usual matrix representation of linear maps: the initial state function $i$ is then a column vector, the final state function $f$ is a row vector, and the transition relation $t$ can be represented as a $\Sigma$-indexed collection of $X\times X$-matrices $t_a$ where $t_a(y,x) = t(x)(a)(y)$ for all $x,y \in X$. $L(\str{X})(w)$ is then obtained by the following matrix multiplication $f \times t_{a_n} \times \ldots \times t_{a_0} \times i$.

Observe that $\bbS$ is (isomorphic to) the carrier of the free Eilenberg-Moore $\V$-algebra
on one generator $\V(1)$.
Hence, as described in Section~\ref{sec:det}, we can determinise a weighted automaton $\left <f,t\right >\colon X \to \bbS \times (\V X)^\Alph$ into a Moore automaton over $\SMod$.

$$\xymatrix@C=.5cm{
 1\ar[rd]^{i}  & & {\bbS}\ar@{-->}[rrrr]&  && & {\bbS}
\\
 & X \ar[ru]^f \ar[d]^t \ar[r]^\eta&V(X)\ar[ld]^{t^\sharp}\ar[u]^{f^\sharp}\ar@{-->}[rrr]
& &&\bbS^{\Alph^*} \ar[ru]^{o} \ar[d]^{d} 
 & 
\\
&  V(X)^\Alph \ar@{-->}[rrrr]&&&& (\bbS^{\Alph^*})^\Alph 
}
$$

The unique map into final Moore automaton of weighted languages gives precisely the language semantics concretely given above.

\subsubsection{Brozowski for Weighted Automata}

There is self-dual adjunction of $\SMod$ obtained by taking dual space: ${(-)^* = \SMod(-,\bbS)}$.
A special case is the self-dual adjunction of vector spaces in case $\S$ is a field,
which restricts to a duality between finite-dimensional vector spaces.
This duality was used in \cite{BKP:WoLLIC} to obtain observable Moore automata over vector spaces. 

We lift the base adjunction to one between Moore automata in $\SMod$ using Theorem~\ref{thm:liftaut}.
Let $\catC=\catD=\SMod=\EM{\V}$  and
$F_\SMod(X) = \bbS \times X^\Alph$ and $G_\SMod(X) = \bbS + \Alph\cdot X$.
Since $\S^* \cong \S$,
the conditions for Theorem~\ref{thm:liftaut} hold, and the adjunction lifts,
as illustrated here:
\[\xymatrix@C=1em@R=4em{
  \left(\Aut_{\SMod}^{\S,\S}\right)^\op \ar@/^1pc/[rrrr]^-{\ol{(-)^*}'} \ar@{}|{\top}[rrrr] \ar[d] &&&& 
  \Aut_{\SMod}^{\S,\S} \ar@/^1pc/[llll]^-{\ol{(-)^*}'} \ar[d]
  \\
  \ar@(dl,dr)_-{F_\SMod} \SMod^\op \ar@/^1pc/[rrrr]^-{{(-)}^*} \ar@{}|{\top}[rrrr] &&&& 
  \ar@(dl,dr)_-{G_\SMod^\op} \SMod \ar@/^1pc/[llll]^-{{(-)}^*}
  }
\]
We can now give the Brzozowski algorithm for weighted automata
by instantiating \textbf{Algo2} of Theorem~\ref{thm:min-algo} for the determinised automaton.
Start with a weighted automaton in $\Set$, determinise it into a Moore automaton in $\Aut_\SMod^{\S,\S}$ (to have a canonical representative of the accepted language), reverse and determinise, take the reachable part (w.r.t $G_\SMod$-structure over $\SMod$), reverse and determinise, take the reachable part again. Diagramatically, \textbf{Algo2} is (putting $^\op$ on the right-hand side to start and end in $\Aut_{\SMod}^{\S,\S}$):
\[\xymatrix{
{\cat{WAut} \text{ over }\Set} \ar[d]^-{(-)^\sharp} \ar@/^1pc/[dr]
\\
\Aut_{\SMod}^{\S,\S} \ar[r]^-{\ol{(-)^*}'}  & 
\left(\Aut_{\SMod}^{\S,\S}\right)^\op\ar[d]^-{{reach^\op}} 
\\
\cat{Aut}_\SMod^{\S,\S} \ar[d]_-{{reach}}& 
\left(\Aut_{\SMod}^{\S,\S}\right)^\op \ar[l]_-{\ol{(-)^*}'}
\\
\Aut_{\SMod}^{\S,\S}
&
}\]
\rem{
\[\xymatrix{
\left(\Aut_{\SMod}^{\S,\S}\right)^\op \ar[r]^-{\ol{(-)^*}'}  & 
\cat{Aut}_\SMod^{\S,\S} \ar[d]^-{{reach}} 
\\
\left(\Aut_{\SMod}^{\S,\S}\right)^\op \ar[d]_-{{reach^\op}}& 
\cat{Aut}_\SMod^{\S,\S}\ar[l]_-{\ol{(-)^*}'}
\\
\left(\Aut_{\SMod}^{\S,\S}\right)^\op   
&
}\]
}
At this point we have built a minimal Moore automaton accepting the same language as the weighted automaton we started with and, moreover, the state space is a subsemimodule of the semimodule generated by the original state space. 

The last step missing is to recover a weighted automaton in $\Set$, with as state space the generators of the state space of the  minimal Moore automaton resulting after applying our Brzozowski algorithm. 
Unfortunately, subsemimodules of free, finitely generated semimodules are not necessarily free and finitely generated. Therefore our construction does not guarantee, in general, that the resulting automaton is actually a weighted automaton in $\Set$. Fortunately, we know from a result of Tan~\cite{tan16} that for a commutative semiring $\S$, every nonzero subsemimodule $N$ of a finitely generated free $\S$-semimodule $M$ is free if and only if $\S$ is a principal ideal domain (\cite[Theorem 4.3]{tan16}). Furthermore, because $N$ is free, it follows that it is also finitely generated and of rank smaller than that of $M$ [(\cite[Theorem 4.3]{tan16}). In other words, the minimal weighted automaton over a principal ideal domain exists and has a state space smaller or equal than that of the original automaton if the latter is finite. 

Recall that a principal ideal domain is an integral domain in which every ideal is principal, i.e., can be generated by a single element. Examples include any Euclidean domain, thus any field, the ring of integers, the ring of polynomials in one variable with coefficients in a field, and the ring of formal power series over a field and one variable. The ring of polynomials in two or more variables and the ring of polynomials with integer coefficients are not principal ideal domains.


\subsection{Topological Automata via Gelfand Duality}
\label{sec:top-gelfand}


A very popular model heavily used in reforcement learning is the
\emph{partially observable Markov decision process} (POMDP).  The idea is
that one can only see the observations and not exactly which state the
system is in.  Many algorithms in machine learning deal with this situation
by constructing a new automaton called the \emph{belief automaton}.  The
state space of this automaton is the set of probability distributions on
$S$.  When seeking to minimize this using duality~\cite{BKP:WoLLIC},
the original idea was to exploit the fact that the state space of the
belief automaton is a compact Hausdorff space and use Gelfand duality.
However, we have since felt that convex duality is a better match for this
situation.  Nevertheless, the notion of a topological automaton is
interesting in its own right and may be the basis for later extensions and
examples.  This section, therefore develops Gelfand duality and its
application to topological automata.

Given a finite set $X$ we write \(\SubD{X}\) for the set of discrete
\emph{sub}distributions on $X$ endowed with the relative topology when
viewed as a subset of \([0,1]^X\).  This is a compact Hausdorff space.
\begin{definition}
A \textbf{compact Hausdorff automaton} is a $5$-tuple
\[ \mathcal{H} = (S,t:S\x\Sigma\to S, f:S\to\SubD{\Obs}) \]
where $S$ is a compact Hausdorff space, $\Alph$ is a finite set of actions
or inputs, $\Obs$ is a finite set of observations, $t$ is a
\emph{continuous} transition function and $f$ is a continuous
observation function.   
\end{definition}

We recall a few basic facts about $C^*$-algebras, and refer to
\cite{Arveson76,Blackadar06,Joh82,Sakai71} for further information.
Usually $C^*$-algebras are considered as algebras over the complex field.
Here, we are concerned with probabilistic computation, and therefore we
consider $C^*$-algebras over the field $\bbR$ of real numbers.

A \emph{(real-valued) Banach algebra} $A$ is Banach space (complete normed
real vector space) equipped with an associative multiplication such that
\( \norm{xy} \leq \norm{x}\norm{y}\)
for all $x,y$.  This requirement makes multiplication continuous in the
norm topology.  A \emph{(real) $C^*$-algebra} is a Banach algebra together
with an \emph{involution} $(-)^*$ which is a linear, norm-preserving map on
$A$ such that $(xy)^* = y^*x^*$ and $(x^*)^*$, and which in addition
satisfies the $C^*$-axiom: $\|x^*x\| = \|x\|^2$ for all $x \in A$.  A
$C^*$-algebra $A$ is \emph{unital} if it has a multiplicative unit $1$
whose norm is $1 \in \bbR$, and $A$ is \emph{commutative} if the
multiplication is commutative.

A \emph{homomorphism of $C^*$-algebras} is a bounded, linear map that
preserves the multiplication and the involution.  A homomorphism of unital
$C^*$-algebras is additionally required to preserve the unit.  We denote by
$\CCStar$ the category of unital, commutative, real-valued $C^*$-algebras
and their homomorphisms.

In \cite{Negrepontis71} it was shown~\footnote{Strictly speaking, she
  showed it for complex-valued $C^*$-algebras, but the result also holds
  for real-valued ones.} 
that $U$ has a left adjoint
$M \colon \Set \to \CCStar$ given by
\begin{eqnarray}\label{eq:free-cstar}
M(X) & = & C([0,1]^X) = \{ f\colon [0,1]^X \to \bbR \mid f \text{ continuous}\}\\
M(g\colon X \to Y) & = & f (v \circ g) \quad \text{ where } f\in C([0,1]^X), v \in [0,1]^Y. 
\end{eqnarray}
where $[0,1]^X$ is equipped with the product topology.

We denote by $\KHaus$ the category of compact Hausdorff spaces and
continuous maps.  Given a compact Hausdorff space $X$, the hom-set
$\Cont(X)= \Hom_\KHaus(X,\bbR)$ becomes a commutative, unital, real-valued
$C^*$-algebra by defining operations pointwise. In particular, the unit is
the constantly 1 map, and for $f \in \Cont(X)$, and the norm is
$\|f\| = \sup\{|f(x)| \mid x \in X\}$; recall that for a compact space and
a continuous function the supremum is attained.  For a morphism
$g \colon X \to Y$ in $\KHaus$, defining $\Cont(g)(h) = h \circ g$ makes
$\Cont(-)$ a functor from $\KHaus$ to $\CCStar^\op$.

Conversely, for $A \in \CCStar$, the set $\hat{A} = \Hom_\CCStar(A,\bbR)$
becomes a compact Hausdorff space (called the \emph{spectrum of $A$}) by
equipping it with the weak $^*$-topology $\tau$ which is generated by the
sets $O_x = \{ \Phi \in \hat{A} \mid \Phi(x) \neq 0 \}$ for all $x \in A$.
We define $\Spec(A) = (\hat{A},\tau)$.  For a morphism $h\colon A \to B$ in
$\CCStar$, defining $\Spec(h)(\Phi) = \Phi \circ h$ makes $\Spec$ a functor
from $\CCStar^\op$ to $\KHaus$.

The functors $\Cont$ and $\Spec$ establish a dual equivalence between
$\KHaus$ and $\CCStar$ known as Gelfand duality
\begin{equation}\label{eq:gelfand-dua}
\xymatrix@R=4em{
   \KHaus \ar@/^1pc/[r]^-{\Cont} \ar@{}|{\cong}[r] & 
   \CCStar^\op \ar@/^1pc/[l]^-{\Spec}
}
\end{equation}

For the purposes of this paper, we only need a dual adjunction.  We will
take $\Cont$ to be the right adjoint.  As this dual adjunction is in fact a
dual equivalence, the unit and the counit of this adjunction are natural
isomomorphisms.  The unit $\eta_A\colon A \to \Cont(\Spec(A))$ is
known as the \emph{Gelfand transform}, and is given by
$\eta_A(x)(\Phi) = \Phi(x)$.  For all $A \in \CCStar$, $\epsilon_A$ is
an isometric isomorphism in $\CCStar$.

We will lift the base dual adjunction between $\KHaus$ and $\CCStar$ to an
adjunction between the category $\cat{CHA}$ of compact Hausdorff automata
and $\cat{CAO}$ of $\CCStar$-automata. These are obtained from
$F$-coalgebras and $G$-algebras, respectively, where 
\begin{equation}\label{eq:gelfand-fctrs}
\ba{lll}
  F \colon \KHaus \to \KHaus,   & F(X) = \SubD(\Obs) \times X^\Sigma \\
  G \colon \CCStar \to \CCStar, & G(A) = M(\Obs)/J + \Sigma\cdot A
\ea\end{equation}
where $\SubD(\Obs)$ is the set of discrete, subdistributions on the set
$\Obs$ of observations equipped with the relative topology viewed as a
subset $\SubD(\Obs) \subseteq [0,1]^\Obs$. This makes $\SubD(\Obs)$ a
compact Hausdorff space. 
 Recall from \eqref{eq:free-cstar} that $M$ is the
left adjoint of the unit interval functor $U$.  Finally, $J \subseteq M(\Obs)$ is an
ideal of the $\CCStar$-algebra $M(\Obs)$ which we describe in a moment.
Note that $\CCStar$ has coproducts.  This can readily be seen from the fact
that $\KHaus$ has products and using Gelfand duality.

 In order to lift the base dual adjunction $\Spec \dashv \Cont$
 to a dual adjunction between $\Coalg(F)$ and $\Alg(G)$ as
 in section~\ref{sec:liftcoalg},
 we need to show that $\Spec(M(\Obs)/J) \cong \SubD(\Obs)$.
First, we define the ideal $J$.
Fix a finite set $Y$ and consider the $C^*$-algebra $M(Y) \in \CCStar$
defined by \([0,1]^Y\).
For each $y \in Y$, we have a projection map $\pi_y \in M(Y)\to \Real$
given by $\pi_y(v) = v(y)$.  Let $\pi = \sum_{y \in Y}\pi_y$.
Then $\pi\colon [0,1]^Y \to \Real$ is linear and $\pi \in  M(Y)$.
We will take $J$ to be the ideal corresponding to the congruence generated
by the equality obtained by rewriting $\pi \pleq 1$ as an equality as
follows: 
\[\begin{array}{rcl}
\pi \pleq 1 & \iff & \pi \lor 1 = 1 \\
 & \iff & \tfrac{1}{2}(\pi+1) + \tfrac{1}{2}|\pi+1| = 1\\
 & \iff & |1 - \pi| = 1 - \pi
\end{array}\]

\begin{definition}\label{def:cong-J}
  We define the ideal $J$ of $M(Y)$ as the principal ideal generated by the
  element $(|\pi^-| - \pi^-)$ where  $\pi^- := 1-\pi$. That is, 
  \[
  J = \{ m \in M(Y) \mid \exists k \in M(Y): m = k(|\pi^-| - \pi^-)\}.
  \]
  The congruence relation  $\icong{J}$ on $M(Y)$ arising from the ideal $J$
  is then defined standardly as follows: 
For $m, n \in M(Y)$, $m \icong{J} n \quad\text{ if }\quad  m-n \in J.$
We write $M(Y)/J$ for the quotient of $M(Y)$ with respect to $\icong{J}$. 
\end{definition}

The rather technical proof of the following isomorphism lemma is in Appendix~\ref{app:topo-aut-iso}.

\begin{lemma}\label{lem:topo-aut-iso}
For any set $Y$,  $\SubD(Y) \cong \Spec(M(Y)/J)$ in $\KHaus$.
\end{lemma}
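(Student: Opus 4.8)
The plan is to run the claim through Gelfand duality \eqref{eq:gelfand-dua} and reduce it to a computation of the zero set of the generator of $J$. Recall that for a compact Hausdorff space $K$ the characters of $\Cont(K)$ are exactly the point evaluations $\mathrm{ev}_k \colon h \mapsto h(k)$, so that $\Spec(\Cont(K)) \cong K$ (this is the relevant unit/counit of the Gelfand equivalence being an isomorphism). Since $M(Y) = \Cont([0,1]^Y)$, and since $Y$ is the finite set for which $J$ was defined (finiteness is what makes $\pi = \sum_{y \in Y}\pi_y$ a genuine continuous function on the compact space $[0,1]^Y$), we get $\Spec(M(Y)) \cong [0,1]^Y$, with the weak-$*$ topology corresponding to the product topology.

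Next I would identify which characters survive the quotient. A character of $M(Y)/J$ is the same as a character of $M(Y)$ whose kernel contains $J$; under the identification above it is $\mathrm{ev}_v$ for a unique $v \in [0,1]^Y$, and by multiplicativity $\mathrm{ev}_v$ vanishes on the principal ideal $J = \{\, k\,(|\pi^-| - \pi^-) \mid k \in M(Y)\,\}$ if and only if it vanishes on the generator, i.e.\ $(|\pi^-| - \pi^-)(v) = 0$. Hence
\[
\Spec(M(Y)/J) \;\cong\; \{\, v \in [0,1]^Y \mid (|\pi^-| - \pi^-)(v) = 0 \,\}
\]
as a subspace of $\Spec(M(Y)) \cong [0,1]^Y$. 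It then remains to compute this zero set. Writing $\pi^-(v) = 1 - \sum_{y \in Y} v(y)$ and using that for a real number $t$ one has $|t| - t = 0$ iff $t \geq 0$ (and $|t| - t = -2t > 0$ otherwise), the generator vanishes at $v$ exactly when $\pi^-(v) \geq 0$, that is, when $\sum_{y \in Y} v(y) \leq 1$. So the zero set is precisely $\{\, v \in [0,1]^Y \mid \sum_{y} v(y) \leq 1 \,\} = \SubD(Y)$ with its relative topology, which is exactly how $\SubD(Y)$ was defined. Composing with the Gelfand homeomorphism, which restricts to a homeomorphism on any closed subspace, yields $\SubD(Y) \cong \Spec(M(Y)/J)$ in $\KHaus$.

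The hard part is a point of care rather than of difficulty: the \emph{algebraic} principal ideal $J$ need not be norm-closed. Indeed, for a nonnegative generator vanishing only to first order at the boundary of its zero set, the principal ideal behaves like the ideal generated by the identity function in $\Cont([0,1])$ and fails to be closed; so one must check that forming $M(Y)/J$ is legitimate as a $\CCStar$-object and that its spectrum depends on $J$ only through the closed ideal $\overline{J} = \{\, h \mid h\!\restr{}{\SubD(Y)} = 0 \,\}$ it generates. This is harmless for the character count, since every character is continuous and therefore vanishes on $J$ iff it vanishes on $\overline{J}$; the genuinely technical verification, deferred to the appendix, is that $\overline{J}$ is exactly the vanishing ideal of $\SubD(Y)$ — equivalently that the rewriting $\pi \pleq 1 \Iff |1-\pi| = 1 - \pi$ faithfully captures the subdistribution constraint $\sum_y v(y) \leq 1$.
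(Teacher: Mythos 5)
Your proof is correct, but it takes a genuinely different route from the paper's. The paper argues by hand: it defines $\alpha\colon \SubD(Y) \to \Spec(M(Y)/J)$ by evaluation, $\alpha(\phi)(m) = m(\phi)$, and $\beta\colon \Spec(M(Y)/J)\to\SubD(Y)$ by $\beta(\Phi)(y) = \Phi(\pi_y)$, checks well-definedness of $\alpha$ on the quotient and that $\beta(\Phi)$ is a subdistribution, verifies $\beta\circ\alpha = \id$ directly, and then proves $\alpha\circ\beta=\id$ by a Stone--Weierstrass density argument showing a character of $M(Y)$ is determined by its values on the coordinate projections. You instead route everything through the Gelfand isomorphism $\Spec(\Cont(K))\cong K$, identify characters of the quotient with characters of $M(Y)$ killing the single generator $|\pi^-|-\pi^-$ (multiplicativity makes vanishing on the generator equivalent to vanishing on the whole principal ideal), and reduce the lemma to the elementary computation that the zero set of $|\pi^-|-\pi^-$ in $[0,1]^Y$ is exactly $\{v \mid \sum_y v(y)\leq 1\} = \SubD(Y)$. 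What your version buys is brevity and modularity: the Stone--Weierstrass step of the paper is really a re-derivation of the special case of Gelfand duality for $\Cont([0,1]^Y)$, which the paper already assumes wholesale, so invoking it directly is legitimate and cleaner. You also surface, and correctly dispatch, a subtlety the paper passes over silently: the algebraic principal ideal $J$ need not be norm-closed, so $M(Y)/J$ as a $\CCStar$-object deserves a remark; for the spectrum this is harmless since every character is continuous and hence kills $J$ iff it kills $\overline{J}$, and your character count never needs the stronger identification of $\overline{J}$ with the full vanishing ideal of $\SubD(Y)$. What the paper's version buys is the explicit formula $\beta(\Phi)(y)=\Phi(\pi_y)$ for the inverse, which is the concrete description one wants when tracking how the coalgebra structure transports across the isomorphism. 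Both arguments implicitly use that $Y$ is finite (so that $\pi=\sum_y\pi_y$ lies in $M(Y)$), which you note explicitly and which the lemma's phrasing ``for any set $Y$'' obscures.
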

\rem{
\begin{proof}
  Let $\alpha\colon \SubD(Y) \to \Spec(M(Y)/J)$ be defined by
  $\alpha(\phi)(m) = m(\phi)$ where $\phi \in \SubD(Y)$ and $m \in M(Y)$.
  Note that this is well defined.  If $m \icong{J} m'$ then their
  difference lies in $J$ which means that
  \(m(\phi) - m'(\phi) =k(\phi)(|\pi^-|-\pi^-)(\phi)\).
  This condition is equivalent to \(1-\pi\)
  is positive and hence $|\pi^-|$ and $\pi^-$ are equal and hence the
  second term is $0$, whence \(m(\phi) = m'(\phi)\).
  Note that the topology of \(\Spec(M(Y)/J)\)
  is generated by the taking as the closed sets, sets of maximal ideals
  that contain a fixed element $\phi$ of $M(Y)$.  Any maximal ideal
  consists of the functions that vanish at a point $y$, call this $m_y$.
  So if we fix such an $\phi$ for it to be in a maximal ideal $m_y$, we
  have $\phi(y) =0$.  This means that $\alpha^{-1}$ of a closed set is the
  set of subdistributions that assign $0$ to a particular element $y$; this
  is a closed set so $\alpha$ is continuous.

Let $\beta\colon \Spec(M(Y)/J) \to \SubD(Y)$ be defined by
$\beta(\Phi)(y) = \Phi(\pi_y)$ where $\pi_y \in M(Y)/J$ projects onto $y$.
We check that $\beta(\Phi)$ is a subdistribution:
\[\sum_{y\in Y}\beta(\Phi)(y)
= \sum_{y\in Y}\Phi(\pi_y)
= \Phi\left(\sum_{y\in Y} \pi_y \right)
\pleq \Phi(1) = 1
\]
The second and last identity hold because $\Phi$ is a
$\CCStar$-homomorphism (hence linear and unital); 
the inequality holds since we are in $M(Y)/J$ (which says that $\pi\pleq
1$) and $\Phi$ is monotone.

We now show that for $\phi \in \SubD(Y)$,
$\beta(\alpha(\phi)) = \phi$. Let $y \in Y$, we then have:
$
\beta(\alpha(\phi))(y) = \alpha(\phi)(\pi_y) = \pi_y(\phi) = \phi(y)
$
We show that for $\Phi \in \Spec(M(Y)/J)$, $\alpha(\beta(\Phi)) = \Phi$.
Let $m\in M(Y)$, we then have
$
\alpha(\beta(\Phi))(m) = m(\beta(\Phi)).
$
By the Stone-Weierstrass theorem, the polynomials on the compact Hausdorff
space $[0,1]^Y$ are dense in $\Cont([0,1]^Y)$. 
Since $m \colon [0,1]^Y \to \bbR$ is continuous, 
it therefore suffices to show that for all polynomials $p$ on $[0,1]^Y$ we
have that $p(\beta(\Phi)) = \Phi(p)$.  

\emph{Case $p = 1$:} $p(\beta(\Phi)) = 1 = \Phi(1) = \Phi(p)$.
\emph{Case $p = r \in \bbR$:} $p(\beta(\Phi)) = r = \Phi(r) = \Phi(p)$.
\emph{Case $p = \sum_{y\in Y} r_y \pi_y$:}
$\Phi(p) = \Phi(\sum_{y\in Y} r_y \pi_y) = \sum_{y \in Y}r_y\Phi(\pi_y) = \sum_{y \in Y} r_y\beta(\Phi)(y) = p(\beta(\Phi))$. 
Finally, let $p = \sum_{y\in Y} r_y \pi_y$ and $q = \sum_{y\in Y} s_y \pi_y$.
Then
$
\Phi(pq) = \Phi(p)\Phi(q) = p(\beta(\Phi))q(\beta(\Phi)) = (pq)(\Phi).
$
It follows that $p(\beta(\Phi)) = \Phi(p)$ holds for all polynomials
$p$\footnote{Clearly all polynomials can be expressed as sums of products
  of lower degree polynomials.}
and we have now shown that $\alpha \colon \SubD(Y) \to \Spec(M(Y)/J) $ is a
bijection with inverse $\beta$.  
\end{proof}
}

From Lemma~\ref{lem:topo-aut-iso} and section~\ref{sec:liftcoalg},
it follows that the base dual adjunction
lifts to one between $F$-coalgebras and $G$-algebras.
\[\xymatrix@R=4em{
   \cat{CHA}^\op \ar@/^1pc/[r]^-{\ol{\Cont}} \ar@{}|{\top}[r] \ar[d] & 
   \quad \cat{CAO} \ar@/^1pc/[l]^-{\ol{\Spec}} \ar[d] \\
   \KHaus^\op \ar@/^1pc/[r]^-{\Cont} \ar@{}|{\top}[r] & 
   \CCStar \ar@/^1pc/[l]^-{\Spec}
}
\]

The abstract algorithm \textbf{Algo1} applies
since $\KHaus$ and $\CCStar$ are monadic over $\Set$
(cf.~Section~\ref{sec:abstract-brz})
In particular, $\KHaus$ is the Eilenberg-Moore category of
the ultrafilter monad \cite{Manes69}.
In order to show that the associated trace logic is expressive
we need an extra argument, since the functor $F$
defined in \eqref{eq:gelfand-fctrs} does not have the shape required
by Lemma~\ref{lem:comp-adj} and Theorem~\ref{thm:min-algo}.
However, we can apply Remark~\ref{rem:subF} after observing
the following.
Let $F' := \Spec(M(\Omega)) \times (-)^\Alph$.
Then 
the associated natural isomorphism
${\xitrc}' \colon F' \Spec M \To \Spec M G$ 
specifies semantics of trace logic over $F'$-coalgebras.
To obtain a suitable $\tau \colon F \To F'$ note that
quotienting with $J$ in $\CCStar$ yields an epi
$e: M(O) \surj M(O)/J$
from which we get a mono in $\KHaus$
$\Spec(e): \Spec(M(O)/J) \inj \Spec(M(O))$.
Pre-composing $\Spec(e)$ with the isomorphism
$h: SubD(O) \stackrel{\sim}{\to} Spec(M(O)/J)$ given
by Lemma~\ref{lem:topo-aut-iso}
and defining
$\tau := (Spec(e) \circ h) \times id$,
it follows that 
$\tau \colon F \To F'$ has all components mono in $\KHaus$. 
It now follows that trace logic is also expressive for
$F$-coalgebras, i.e., for compact Hausdorff automata.

\begin{remark}
In order to view Gelfand duality \eqref{eq:gelfand-dua}
as a concrete dualty obtained from a dualising object,
we need to expand the setting a bit,
since $\bbR$ is not a compact Hausdorff space. 
This can be done by considering the dual adjunction
between locally compact Hausdorff spaces and
not-necessarily unital commutative $C^*$-algebras.
Gelfand duality is a restriction of this dual adjunction.
\end{remark}


\section{Alternating Automata}
\label{sec:alternating}




\emph{Alternating finite automata} (aka \emph{Boolean automata} or \emph{parallel automata}) were first studied in \cite{K76a,CS76,CKS81a,Leiss81,BrzozowskiLeiss80} as a finite-state analog of alternating Turing machines. Let $\Sigma$ be a fixed finite
\emph{input alphabet}. An \emph{alternating finite automaton} (AFA) over $\Sigma$ is a tuple $\A = (X,\iota,\delta,F)$, where
\begin{itemize}
\item
$X$ is a finite set of \emph{states},
\item
$F\subs X$ are the \emph{final states},
\item
$\delta : \Sigma \to X \to 2^X \to 2$ is the \emph{transition function}, and
\item
$\iota : 2^X \to 2$ is the \emph{acceptance condition}.
\end{itemize}
Intuitively, the machine $\A$ operates as follows. Let $k=\len X$.
Initially $k$ processes are started, each assigned to a different state, reading the first symbol
of the input word $w\in\Sigma^*$.
In each step, a process at state $s$ reads the next input symbol $a$ and
spawns $k$ child processes, each of which moves to a different state and continues in the same fashion,
while the parent process at $s$ waits for the child processes to report back a Boolean value. In this way
a $k$-branching computation tree is generated. When the end of the input word is reached,
a process at state $s$ reports 1 back to its parent if $s\in F$, 0 otherwise.
A non-leaf process waiting at state $s$, having read input symbol $a$, collects the $k$-tuple
$b\in 2^X$ of Boolean values reported by its children, computes $\delta asb$, and
reports that Boolean value back to its parent. When the initial processes have all received
values, say $c\in 2^X$, the machine accepts if $\iota c=1$, otherwise it rejects.

Alternating automata accept all and only regular sets. It was shown in \cite{K06a} by 
combinatorial means that a language $L\subs\Sigma^*$ is accepted by a $k$-state AFA iff its reverse
$\set{w^R}{w\in L}$ is accepted by a $2^k$ state deterministic finite automaton (DFA).

Our purpose in this section is to recast this result in the framework of our general duality principle.
The duality involves \emph{complete atomic Boolean algebras} (\CABA) and \emph{discrete spaces} (\Set), which underlie \emph{powerset Boolean algebras}.


\subsection{\CABA, \EMN, and \Setop}
\label{sec:CABA-EMN}

\subsubsection{\CABA}

A \emph{complete Boolean algebra} (\CBA) is a structure $(B,\neg,\bigvee,\bigwedge,0,1,{\le})$, where $B$ is a set, $\neg$ is a unary operation on $B$, $\bigvee$ and $\bigwedge$ are infinitary operations on the powerset of $B$, $0$ and $1$ are constants, and $\le$ is a partial order on $B$, such that
\begin{itemize}
\item
$(B,\neg,\vee,\wedge,0,1,{\le})$ is an ordinary Boolean algebra (\BA), where $\vee$ and $\wedge$ are the restrictions of $\bigvee$ and $\bigwedge$, respectively, to two-element sets; and
\item
$\bigvee A$ and $\bigwedge A$ give the supremum and infimum of $A$, respectively, with respect to $\le$.
\end{itemize}
The morphisms of \CBA are \BA homomorphisms that preserve $\bigvee$ and $\bigwedge$.

An \emph{atom} of a \BA is a $\le$-minimal nonzero element. A \BA is \emph{atomic} if every nonzero element has an atom $\le$-below it. A \emph{complete atomic Boolean algebra} (\CABA) is an atomic \CBA. The morphisms of \CABA are just the morphisms of \CBA.


It is known that every \CABA is isomorphic to the powerset Boolean algebra on its atoms,
thus every element is the supremum of the atoms below it.
{\CBA}s and {\CABA}s satisfy infinitary de Morgan and distributive laws:
\begin{align*}
\neg\bigvee a &= \bigwedge\set{\neg x}{x\in a}
& (\bigvee a) \wedge x &= \bigvee \set{y\wedge x}{y\in a}\\
\neg\bigwedge a &= \bigvee\set{\neg x}{x\in a}
& (\bigwedge a) \vee x &= \bigwedge \set{y\vee b}{y\in a}
\end{align*}
as well as other useful infinitary properties such as commutativity, associativity, and idempotence of $\bigvee$ and $\bigwedge$. The free \CABA on generators $X$ is the powerset \CABA $(\TTwo X,\bigcup,\bigcap,\setcompl,\emptyset,2^X)$.
See \cite{Halmos74,GivantHalmos09,MonkBonnet89,Sikorski66} for further information on the theory of {\CBA}s and {\CABA}s.

%

\rem{
\subsubsection{$\cPo$ and $\Qop$}
\label{sec:Q-and-Qop}

The \emph{contravariant powerset functor} $\cPo :\Set\to\Setop$ maps objects $X$ to their powerset $2^X$ and morphisms $f:X\to Y$ to
\begin{align}
& \cPo f = f^{-1}:2^Y\to 2^X & \cPo f(b) = f^{-1}(b) &= \set{x\in X}{f(x)\in b}.\label{eq:Qdef}
\end{align}
The functor $\cPo $ is dually self-adjoint with $\cPo \dashv\Qop$.
The natural bijection between $\Set(X,2^Y)$ and $\Setop(2^X,Y)$ in both directions is given by the \emph{exponential transpose} $f\mapsto \transp f = \lam{yx}{fxy}$, an involution that simply switches the order of the arguments.
\begin{align}
\begin{array}c
\begin{tikzpicture}[->, >=stealth', node distance=28mm, auto]
\small
  \node (NW) {$2^X$};
  \node (NE) [right of=NW] {$Y$};
  \node (SW) [below of=NW, node distance=12mm] {$X$};
  \node (SE) [below of=NE, node distance=12mm] {$2^Y$};
  \path (NW) edge[<-] node {$\transp f=\lam{yx}{fxy}$} (NE);
  \path (NE) edge node {$\Qop$} (SE);
  \path (SW) edge node {$\cPo $} (NW)
             edge node[swap] {$f = \trtr f$} (SE);
  \node (NL) [left of=NW, node distance=8mm, anchor=east] {$\Setop$:};
  \node (SL) [left of=SW, node distance=8mm, anchor=east] {$\Set$:};
\end{tikzpicture}
\end{array}
\label{eq:adjoint}
\end{align}
In terms of sets, $\transp f = \lambda y.\set{x\in X}{y\in f(x)}$.
The unit $\eta_X : X \to \TTwo X$ and counit $\eps_X = \eta_X^\op$ are opposites but represent the same set function $\eta_X(x)=\set{a}{x\in a}$.

}

\subsubsection{\EMN}

The self-dual adjunction $\cPo^\op\dashv\cPo$ of the contravariant powerset functor (Example~\ref{exm:self-dual-powerset}) gives rise to a $\Set$-monad $N = \cPo \o \cPo^\op$, where for $X$ a set and $f:X\to Y$ a set function,
\begin{align*}
N X &= \cPo \cPo^\op X = 2^{2^X} & Nf &= (f^{-1})^{-1}:\TTwo X\to\TTwo Y
\end{align*}
The unit and multiplication are
\begin{align*}
\eta_X(x) = \set a{x\in a},\quad\quad 
\mu_X(H) = \set a{\eta_{\cPo{X}}(a)\in H} = \eta_{\cPo X}^{-1}(H).
\end{align*}
This is called the \emph{double powerset} or \emph{neighborhood monad}. The category of Eilenberg-Moore algebras of $N$ is denoted \EMN.

\subsubsection{Equivalence of \CABA, \Setop, and \EMN}
\label{sec:Setop=to-EM}

It is known that the Eilenberg-Moore algebras of the double powerset monad $N$ are exactly the {\CABA}s. These two categories are also dually equivalent to \Set, that is, equivalent to \Setop, as observed in \cite{Taylor02}.

The equivalence of the three categories can be shown via the composition of three faithful functors that are injective on objects:
\begin{align}
\begin{array}c 
\begin{tikzpicture}[->, >=stealth', node distance=20mm, auto]
\small
  \node (A) {\Setop};
  \node (B) [right of=A] {\EMN};
  \node (C) [right of=B] {\CABA};
  \node (D) [right of=C] {\Setop.};
  \path (A) edge node {$J$} (B);
  \path (B) edge node {$\EMNtoCABA$} (C);
  \path (C) edge node {$\At$} (D);
\end{tikzpicture}
\end{array}
\label{eq:threefunctors}
\end{align}
Here $J$ is the Eilenberg-Moore comparison functor \cite{AdaHerStr90:ACC,MacLane71}. Concretely, $J$ sends a set $X$ to the \CABA $2^X$ and a function $f: X \to Y$ to its inverse image map. That is, $J = \Set(-,2)$ with Boolean structure.
The functor $\At$ takes a \CABA to its set of atoms and a \CABA morphism $f:A\to B$ to $\At f:\At B\to\At A$, where $\At\!f\,(b)$ is the unique atom $a$ of $A$ such that $\up a = f^{-1}(\up b)$ and $\up a$ and $\up b$ are the principal ultrafilters on atoms $a$ and $b$, respectively.
In a \CABA, there is a bijection between principal ultrafilters and atoms, and we have that $\At \cong \CABA(-,2)$. In other words, the equivalence given by $J$ and $\At$ is a concrete duality with dualising object $2$.

Although the equivalence between \EMN and \CABA is fairly well known, the details are rarely provided. We therefore describe the functor $\EMNtoCABA$ that produces a \CABA from an \EMN-algebra $(X,\alpha)$. Let $TX$ be the term monad for \CABA terms over indeterminates $X$.\footnote{$TX$ consists of \CBA terms with the arity of the infinitary operations bounded by $2^{2^{\len X}}$. There can be no such bound for \CBA in general, as there are {\CBA}s of arbitrarily large cardinality generated by $X$; thus there is no term monad for \CBA. However, {\CABA}s generated by $X$ are of cardinality at most double exponential in $\len X$, and we can bound arities accordingly.} Let $\EMNtoCABA(X,\alpha) = (X,\EMNtoCABA\alpha)$, where
\begin{align}
& \EMNtoCABA\alpha:TX\to X & \EMNtoCABA\alpha &= \alpha\circ(\tau N\circ T\eta)_X,\label{def:alphap}
\end{align}
where $T\eta_X:TX\to TNX$ substitutes $\eta_X(x)$ for $x\in X$ in a term and $\tau_{NX}:TNX\to NX$ is the evaluation map of the powerset \CABA $(\TTwo X,\bigcup,\bigcap,\setcompl,\emptyset,2^X)$.
In particular (and in more conventional notation), this gives the following definitions of the Boolean operations:
\begin{gather}
\begin{array}c
\bigvee_n x_n = \alpha(\bigcup_n \eta_X(x_n)) \qquad\quad
\bigwedge_n x_n = \alpha(\bigcap_n \eta_X(x_n))\\[8pt]
\neg x = \alpha(\setcompl{\eta_X(x)}) \qquad\quad 0 = \alpha(\emptyset) \qquad\quad 1 = \alpha(2^X).
\end{array}
\label{eq:BAdefs}
\end{gather}
The action of $\EMNtoCABA$ on morphisms is the identity.

The natural transformation $\tau N\circ T\eta:T\to N$ in \eqref{def:alphap} relating \CABA terms and double powerset is invertible up to \CABA equivalence. Consider the natural transformation
\begin{align*}
& \upsilon : N\to T & \upsilon_X(A) &= \bigvee_{a\in A} (\bigwedge_{x\in a} x \wedge \bigwedge_{x\not\in a} \neg x),\ \ A\in \TTwo X.
\end{align*}
It can be shown that
\begin{align*}
\tau N\circ T\eta\circ\upsilon &= \id_{N} & \upsilon\circ\tau N\circ T\eta &\equiv \id_{T}.
\end{align*}
By the latter we mean that for any term $t\in TX$,
$\upsilon_X(\tau_{NX}(T\eta_X(t))) \equiv t$ modulo the axioms of \CABA.
This essentially says that there is a disjunctive normal form for \CABA terms.

\subsection{Language acceptance of alternating automata}

\rem{ 
Given an $\EMN$-algebra $(A,\alpha)$ and a set function $f:X\to A$,
we denote the free homomorphic extension of $f$ by $\fext{f} = \eps_A \circ N f : (N X,\mu_X) \to (A,\alpha)$.
For a function $f: X \to Z^Y$, we denote the exponential transpose (in both directions) of $f$ by $\transp{f}: Y \to Z^X$.
Thus $f=\trtr f$.
}
Let $\A = (X,\delta,f,\iota)$ be an AFA with states $X$ and components
\begin{align*}
& \iota:1\to\TTwo X && \delta_a:X\to\TTwo X,\ a\in\Sigma && f:X\to 2
\end{align*}
where $\iota$ is the (transposed) acceptance condition, $\delta_a$ are the transitions, and $f : X \to 2$ is the characteristic function for the subset $F$ of accepting states.

The language accepted by $\A$ is $\Lang(\A) \eqdef \set{w\in\Sigma^*}{\iota(\delta'_w(F))=1}$, where
\begin{align*}
& \delta'_w : 2^X \to 2^X
& \delta'_\eps(A) &= A
& \delta'_{aw}(A)(s) &= \delta_a(s)(\delta'_w(A)).
\end{align*}

As constructed in \cite{K06a}, the associated DFA for the reverse language is $\A'$ with states $2^X$ and components
\begin{align*}
& \transp f:1\to 2^X && \transp\delta_a:2^X\to 2^X,\ a\in\Sigma && \transp\iota:2^X\to 2.
\end{align*}
This is a deterministic automaton, that is, a coalgebra for the functor $F = 2\times(-)^\Sigma$ with start state $\transp f$, transitions $\transp\delta_a$, and accept states $\transp\iota$. The language accepted by $\A'$ is $\Lang(\A') \eqdef \set{w\in\Sigma^*}{\transp\iota(\transp\delta_w(\transp f)) = 1}$, where
\begin{align*}
\transp\delta_\eps &= \id_{2^X} & \transp\delta_{wa} &= \transp\delta_a\circ\transp\delta_w.
\end{align*}
The combinatorial construction of \cite{K06a} amounts to recurrying the components of the automata. Denoting the reverse of a string $w$ by $w^R$ and using the fact that $\transp\delta_a = \delta'_a$, it can be shown inductively that $\transp\delta_w = \delta'_{w^R}$, therefore the language accepted by $\A'$ is the reverse of the language accepted by $\A$:
\begin{align*}
\Lang(\A') &= \set{w}{\transp\iota(\transp\delta_w(\transp{f})) = 1}
= \set{w}{\iota(\delta'_{w^R}(f)) = 1}
= \set{w}{w^R\in \Lang(\A)}.
\end{align*}

\subsection{Alternating automata as $\EMN$-automata}

We now show how the relationship between $\A$ and $\A'$ comes about as an instance of a dual adjunction of automata as described in Section~\ref{sec:dual-adj-lift}, in particular Section~\ref{sec:liftaut}.
We use the base equivalence between $\EMN$ and $\Setop$ described in \Sec~\ref{sec:CABA-EMN}.
For the sake of uniformity with the general setup in Section~\ref{sec:dual-adj-lift}, we take
$R$ as the right adjoint (hence we put the $\op$ on $\EMN$), and consider $R$ and $J$ as contravariant functors.
\begin{equation}
\label{eq:dual-aut-AA}
\begin{array}c
\xymatrix@C=3em@R=1.5cm{
 \left(\Aut_{\EMN}^{N(1),2}\right)^\op \ar@{->}@/^1pc/[r]^-{\bar{R}} \ar@{}|{\top}[r] \ar[d]
 & 
 \Aut_{\Set}^{1,2} \ar@{->}@/^1pc/[l]^-{\bar{J}} \ar[d]
\\
 \EMN^\op \ar@{->}@/^1pc/[r]^-{R} \ar@{}|{\cong}[r] 
 & 
 \Set \ar@{->}@/^1pc/[l]^-{J} 
}
\end{array}
\end{equation}

More precisely,
we show that $\A' = \bar{R}(\detA)$, where $\detA$ is the deterministic automaton over $\EMN$ obtained by applying the determinisation construction from Section~\ref{sec:det} for $N$ to $\A$.
The functor $R$ is the composition $R = \At\circ \EMNtoCABA$ (see \eqref{eq:threefunctors}). 

Recall from  Section~\ref{sec:det} that determinisation for $N$ takes free extensions of  the transition function and output function. 
That is, given an alternating automaton
$\A$ with states $X$ and components
\begin{align*}
& \iota : 1 \to \TTwo{X} && \delta_a: X \to \TTwo{X},\ a \in \Sigma && f : X\to 2
\end{align*}
over $\Set$,
we have a deterministic automaton $\detA$ with
\begin{align*}
& \fext{\iota} : \TTwo 1 \to \TTwo{X} && \fext\delta_a: \TTwo{X} \to \TTwo{X},\ a \in \Sigma && \fext f : \TTwo{X} \to 2
\end{align*}
over $\EMN$, using the \CABA structure on 2.
In $\detA$, we leave algebraic structure on $\TTwo{X}$ and $2$ implicit. Formally, they are the powerset {\CABA}s on $\TTwo X$ and $2$, respectively; these are isomorphic to the free \EMN-algebras $(NX,\mu_X)$ and $(N\emptyset,\mu_\emptyset)$ on generators $X$ and $\emptyset$, respectively.

We easily see that $\tup{\TTwo X,\fext{f},\fext{\delta},\fext{\iota}}$ instantiates the definition
from Section~\ref{sec:aut-alg-coalg} of an $\EMN$-automaton with initialisation in $\TTwo 1$
and output in 2, i.e., $\detA$ is in $\Aut_{\EMN}^{N(1),2}$.
For ease of notation, we will sometimes write the initialisation morphism $\fext{\iota}$
as its corresponding $\Set$-function $\iota$.


A dual automaton in $\Aut_\Set^{1,2}$ (with states $X$) is a coalgebra for $F = 2 \times (-)^\Alph$
together with an initial state $j: 1 \to X$, or equivalently an algebra for $G = 1 +\Alph\times(-)$
with output $f: X \to 2$. 
It is easy to check that the conditions for Theorem~\ref{thm:liftaut} hold.
First note that $I=\TTwo 1$ and $O=1$.
We then easily verify that $F_{\EMN} \cong J(1)\times(-)^\Alph$
by noting that $J(1) = 2^1 \cong 2$.
Similarly, to see that $G \cong R(\TTwo 1) +\Alph\cdot(-)$, 
we note that $R(\TTwo 1) = \At(\TTwo 1) = 2^1 \cong 2$.
Hence the base dual adjunction $J \dashv R$ lifts to $\bar{J} \dashv \bar{R}$ between automata categories, and the lifted adjoints are given by \eqref{eq:ol-LR} and Theorem~\ref{thm:liftaut}.
We describe the reversal functor $\bar{R}$ a bit more concretely as a contravariant functor
from $\Aut_{\EMN}^{N(1),2}$ to $\Aut_{\Set}^{1,2}$.
The base adjunction of \eqref{eq:dual-aut-AA} gives us a bijection of homsets:
\begin{align*}
\theta: \EMN((A,\alpha), JX) \to \Set(X, R(A,\alpha))
\end{align*}
natural in $(A,\alpha)$ and $X$.
Given an automaton in $\Aut_{\EMN}^{N(1),2}$
\begin{align*}
& \iota:1\to(A,\alpha) && \delta_a:(A,\alpha)\to(A,\alpha) && f:(A,\alpha)\to 2
\end{align*}
(again, we leave the algebraic structure on $2$ implicit), $\bar{R}$ produces the deterministic automaton over $\Set$ 
\begin{align}
& \theta f:1\to R(A,\alpha) && R(\delta_a):R(A,\alpha)\to R(A,\alpha) && R\fext\iota:R(A,\alpha)\to 2.\label{eq:detautom}
\end{align}
%
Applying $\bar{R}$ to $\detA$, which is
\begin{align*}
& \iota:1\to\TTwo X && \fext\delta_a:\TTwo X\to\TTwo X && \fext f:\TTwo X\to 2
\end{align*}
we get the reversed, deterministic automaton $\bar{R}(\detA)$ (over $\Set$):
\begin{align*}
& \theta\fext f:1\to 2^X && \bar R\fext\delta_a:2^X\to 2^X && R\fext\iota:2^X\to 2.
\end{align*}

\begin{theorem}
\label{thm:alternating}
For any alternating automaton
\begin{align*}
\A = (X, \{\delta_a: X \to \TTwo{X} \mid a \in \Sigma\},\iota : 1 \to \TTwo{X},f : X \to 2),
\end{align*}
$\A' \cong \bar{R}(\detA)$. 
\end{theorem}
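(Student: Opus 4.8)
The plan is to exhibit the isomorphism $\A'\cong\bar R(\detA)$ as the canonical identification of state spaces and then verify it is an automaton morphism by matching the three structure maps. Both automata live over $2^X$: the automaton $\A'$ has carrier $2^X$ by construction, while $\bar R(\detA)$ has carrier $R(\TTwo X)$, and I would first fix the identification $R(\TTwo X)\cong 2^X$. Since $R\cong\EMN(-,2)$ as a contravariant functor (the concrete duality with dualising object $2$ from Section~\ref{sec:CABA-EMN}), and $\TTwo X=NX$ is, as a \CABA, the powerset $\TTwo X$ whose atoms are the singletons $\{b\}$ with $b\in 2^X$, a homomorphism $\phi\colon\TTwo X\to 2$ corresponds to the unique $b\in 2^X$ with $\phi(A)=1$ iff $b\in A$; write $\phi_b$ for it, so $R(\TTwo X)=\set{\phi_b}{b\in 2^X}\cong 2^X$. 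On morphisms $R$ is precomposition: for an \EMN-morphism $h\colon NZ\to NY$, $R(h)(\phi_b)=\phi_b\o h$.

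Next I would recall the two descriptions to compare. By the determinisation construction (Section~\ref{sec:det}), the components $\fext\iota,\fext{\delta_a},\fext f$ of $\detA$ are the free \EMN-extensions of $\iota\colon 1\to\TTwo X$, $\delta_a\colon X\to\TTwo X$ and $f\colon X\to 2$, i.e.\ the unique homomorphisms with $\fext g\o\eta=g$. By the explicit formula for $\bar R$ stated just before the theorem, $\bar R(\detA)$ has initial state $\theta\fext f\colon 1\to R(\TTwo X)$, transitions $R\fext{\delta_a}\colon R(\TTwo X)\to R(\TTwo X)$, and output $R\fext\iota\colon R(\TTwo X)\to 2$. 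Thus it remains to check, under the identification above, the three equalities $\theta\fext f=\transp f$, $R\fext{\delta_a}=\transp{\delta_a}$, and $R\fext\iota=\transp\iota$, where $\transp{(\cdot)}$ is the exponential transpose used to define $\A'$.

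All three follow from one lemma: for any $g\colon Z\to NY=\TTwo Y$ with free extension $\fext g\colon NZ\to NY$, the map $R\fext g\colon 2^Y\to 2^Z$ equals $\transp g$. To prove it, note that a homomorphism $\psi\colon NZ\to 2$ equals $\phi_{b'}$ for the unique $b'\in 2^Z$, and, since $\eta_Z(z)=\set{c}{z\in c}$, this $b'$ is recovered by $b'=\set{z}{\psi(\eta_Z(z))=1}$. Applying this to $\psi=\phi_b\o\fext g$ and using $\fext g\o\eta_Z=g$ gives $b'=\set{z}{(\phi_b\o\fext g)(\eta_Z(z))=1}=\set{z}{\phi_b(g(z))=1}=\set{z}{g(z)(b)=1}=\transp g(b)$, i.e.\ $R\fext g(\phi_b)=\phi_{\transp g(b)}$, which is exactly $\transp g$ under $\phi_b\leftrightarrow b$. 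Instantiating $g=\delta_a$ yields the transition equality, and $g=\iota$ (with $Z=1$, $Y=X$) yields $R\fext\iota=\transp\iota$. For the initial state I would use that $\theta$ is the base-adjunction transpose, so $\theta\fext f=R(\fext f)\o\eta_1$ with $\eta_1\colon 1\to RJ(1)=R(2)$ the unit isomorphism; since $2=J(1)=N\emptyset$, the lemma with $g=f\colon X\to N\emptyset$ gives $R(\fext f)=\transp f$, hence $\theta\fext f=\transp f$. Therefore the canonical map $R(\TTwo X)\cong 2^X$ commutes with all structure maps and is the desired isomorphism.

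The main obstacle is not any single computation but pinning down the identifications precisely: that $R\cong\EMN(-,2)$ acts by precomposition and carries $\TTwo X$ to $2^X$ via atoms, that a homomorphism out of a free algebra is determined by its values on the generators $\eta_Z(z)$ (this is the one place freeness and atomicity are genuinely used, and it is the crux of the lemma), and the reconciliation of the $\eta$-twisted map $\theta\fext f$ — coming from the fact that the output map dualises to the initial state — with the bare $R\fext f$, which relies on $J(1)=N\emptyset=2$ and on the unit of the base equivalence being an isomorphism. I would emphasise that the combinatorial identity $\transp{\delta_w}=\delta'_{w^R}$ of \cite{K06a} concerns the accepted languages and is \emph{not} needed here: the isomorphism $\A'\cong\bar R(\detA)$ is established purely at the level of automaton structure, with the three component maps agreeing on the nose after the identification $R(\TTwo X)\cong 2^X$.
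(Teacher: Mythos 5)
Your proof is correct, and its overall architecture matches the paper's: identify the state space of $\bar{R}(\detA)$ with $2^X$ via atoms of the powerset \CABA, then reduce the isomorphism to three componentwise equalities, all consequences of one lemma asserting that $R(\fext{g})=\transp{g}$ under that identification (the paper's Lemma~\ref{lem:R-transp}, supplemented by Lemma~\ref{lem:R-init} for the $\theta$-twisted initial state). Where you genuinely diverge is in how that key lemma is proved. The paper first computes the free extension explicitly, showing $\fext{d}=\mu_X\circ Nd=(\transp{d})^{-1}$ by unwinding the double-powerset multiplication (Lemma~\ref{lem:fext-transp}), and then evaluates the atom functor on $\fext{d}$ as an intersection $\bigcap\set{A}{a\in\fext{d}(A)}$. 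You never compute $\fext{g}$ at all: you use only its defining equation $\fext{g}\circ\eta_Z=g$ together with the observation that a \CABA homomorphism $\psi\colon NZ\to 2$ is the atom map $\phi_{b'}$ with $b'=\set{z}{\psi(\eta_Z(z))=1}$, i.e., is determined by its values on the generators. This is a cleaner argument that isolates exactly where freeness and atomicity enter, and it would transfer more readily to other situations where the dual adjunction is represented by a dualising object; the paper's version has the compensating virtue of yielding the explicit formula $\fext{d}=(\transp{d})^{-1}$, which makes the recurrying construction of \cite{K06a} concrete. Your treatment of the initial state via $\theta\fext{f}=R(\fext{f})\circ\eta_1$ and the identification $J(1)\cong N\emptyset\cong 2$ is also correct and recovers the content of Lemma~\ref{lem:R-init}.
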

\begin{proof}
The state space of $\A'$ is $2^X$ and the state space of $\bar{R}(\detA)$
is the set of atoms of the CABA $\EMNtoCABA(\TTwo{X},\mu_X)$ which is the set $\set{\{a\}}{a \subs X}$.
The correspondence between the initial and final states is shown in Lemmas \ref{lem:R-init} and \ref{lem:R-transp}.
The transition function 
$\delta$ is the tupling of maps $\delta_a: X \to \TTwo{X}$, i.e.,
$\delta = \angle{\delta_a}_{a \in \Sigma}$, and
similarly for $\fext{\delta}$ and $\transp{\delta}$.
The result follows by applying Lemma \ref{lem:R-transp} to each $\delta_a$
and retupling. 
\end{proof}

The relationship between an AFA and its determinised version can be understood as follows.
In an AFA, when reading an input word, we generate a computation tree downwards,
and once we reach the end of the word, we evaluate the outputs going back up using Boolean functions,
and at the top all outputs are aggregated into a single Boolean value with the acceptance condition.
In the determinised AFA, we propagate the acceptance condition forwards as a Boolean function (encapsulated in the state) and once we reach the end of the input word, we use the Boolean function to evaluate immediately instead of propagating back up.
The dual DFA of an AFA represents its logical semantics, or predicate transformer semantics,
where the observations at the end of the word are propagated backwards to the initial state.
Since predicate transformers move backwards, the language of an AFA is the reversed language of the dual DFA. 

Finally, we note that all conditions for Theorem~\ref{thm:min-algo} hold (with $\catD = \Set$ and $\FreeD=\ForgD = \Id$).
Hence we also get a Brzozowski style minimisation algorithm for alternating automata by instantiating \textbf{Algo2} of Section~\ref{sec:abstract-brz}.
Reachability in  $\Aut_\Set^{1,2}$  is just the standard automata-theoretic notion,
whereas now the more abstract algebraic notion from Section~\ref{sec:reach-obs}
is relevant ``on the left'' in the category $\Aut_{\EMN}^{N(1),2}$.
As with weighted automata (cf.~Section~\ref{sec:weighted}), we are not guaranteed that the result of the minimisation algorithm is again an alternating automaton (understood as an $FN$-coalgebra over $\Set$), since a subalgebra of a free \CABA need not be free.

\nocite{Street72}
\nocite{Huntington04}


%
%
%
%
%
%


\section{Conclusion and Related Work}
\label{sec:related-work}

In this paper we presented a unifying categorical perspective on the minimisation constructions presented in \cite{BKP:WoLLIC} and \cite{BBHPRS:ACM-TOCL}, revisited 
some examples from these two papers in light of the general framework,
and presented a new example of alternating automata.
We also filled in some details regarding topological automata (belief automata) that were missing from 
\cite{BKP:WoLLIC}.


Our starting points are Brzozowski's algorithm~\cite{Brz62} for the minimisation of deterministic automata and the use of Stone-type duality between computational processes and their logical characterisation~\cite{Abr91}. The connection between these two seemingly unrelated points is given by the duality principle between reachability and observability originally introduced in systems theory~\cite{Kalman59} and then extended to automata theory in \cite{AZ69,AM75,AM80:Hankel}. 

The duality between reachability and observability has been 
studied, e.g. in~\cite{BHK01} to relate coalgebraic and algebraic
specifications in terms of observations and constructors. In this context most notable is the use of Stone-type dualities between 
automata and varieties of formal languages~\cite{Geh09,GGP08,Rou11}
which recently culminated into a general algebraic and coalgebraic understanding of equations, coequations, Birkhoff's and Eilenberg–type correspondences~\cite{BCR15,SBBCR15,SBR16,AMUM15,Sal17,AMMU18}.

Our unifying categorical perspective is based on a dual adjunction between base categories lifted to a dual adjunction between coalgebras and algebras, as introduced in ~\cite{BK06,Kli07, KerKoeWes14} in the context of coalgebraic modal logic, and in~\cite{BKP:WoLLIC,KlinR16} to capture the observable behaviour of a coalgebra. Our novelty is to lift the coalgebra-algebra adjunction to a dual adjunction between automata which generalises the formalisation of Brzozowski's algorithm from~\cite{BBHPRS:ACM-TOCL},
and formalising the relationship of trace logic to the full modal logic and language semantics.

   Our paper focuses on comparing and unifying our earlier approaches from \cite{BKP:WoLLIC} and \cite{BBHPRS:ACM-TOCL} under a common umbrella, but we hasten to remark that
   the concept of minimisation via logic presented in section~\ref{sec:dual-adj-lift} is already in \cite{Rot16}.
   At its core, \cite{Rot16} uses a dual adjunction that is lifted to a dual adjunction between coalgebras and algebras.
   A logic is then used to provide a construction for obtaining observable coalgebras.
   This is esssentially what we call \textbf{Algo1}.
   The setting of \cite{Rot16} is more general as no assumptions are made on the specific shape of the algebra and coalgebra functors involved. Instead the necessary functor requirements are axiomatised.
   One achievement of~\cite{Rot16} is to generalise the setup in~\cite{BKP:WoLLIC} from
   dual equivalences to dual adjunctions. 
   The central contribution in~\cite{Rot16} 
   is to combine the duality-based framework with coalgebraic partition-refinement~\cite{AdamekBHKMS12} such that a logic-based treatment of Brzozowski and partition refinement is obtained. 
   Compared to~\cite{Rot16}, our framework is more restricted, as we
   confine ourselves to functors of certain shapes, but we believe this strikes a good balance between
   generality and a categorical setting for studying many different types of automata.
   Furthermore, our categorical framework incorporates a formalisation of the full Brzozowski algorithm
   via the small extension of the coalgebra-algebra adjunction to the adjunction of automata, i.e., 
   structures that have both initial and final states.   

Other categorical approaches to automata minimisation have been proposed in the literature;
we mention here just a few.
In \cite{CP17} languages and their acceptors are regarded as functors which provides a different perspective on minimisation in which Brzozowski can also be formulated.
In \cite{AdamekBHKMS12} the authors study coalgebras in categories equipped with factorisation structures in order to devise a generic partition refinement algorithm.
From the language-theoretic point of view, the relation between the automata constructions resulting from the automata-based congruences, together with the duality between right and left congruences, allows to relate determinisation and minimisation operations~\cite{GGV19}.





\section*{Appendix}
\renewcommand{\thesection}{A} 
\subsection{Example: PODFA-style minimisation vs Brzozowski}
\label{app:example}
  Classic DFAs are PODFAs with a single observation, hence they can be minimised using the
  duality approach in \cite{BKP:WoLLIC} using the duality between finite sets and finite Boolean algebras. 
$
\begin{tikzcd}[sep=1.5em, cramped]
    \bbP : \cat{FinSet}^\op
          \arrow[r, shift left=1.7pt]
      & \cat{FinBA} : \Uf
          \arrow[l, shift left=1.7pt]
\end{tikzcd}
$
or using Brozowski's algorithm via the self-dual adjunction, cf.~\cite{BBHPRS:ACM-TOCL}.
$
\begin{tikzcd}[sep=1.5em, cramped]
    \cPo : \Set^\op
          \arrow[r, shift left=1.7pt]
      & \Set : \cPo
          \arrow[l, shift left=1.7pt]
\end{tikzcd}
$
Consider the DFA $\str{X}$ 
below left from (11) in \cite{BBHPRS:ACM-TOCL}.
The DFA $\str{X}$ 
accepts the language $(a+b)^*a$.
The result $\str{X}'$ after the first reverse-determinise step in Brozowski's algorithm is shown on the right.
Disregarding initial and final states, $\str{X}'$ 
is also the modal algebra  
obtained from $\str{X}$ 
We then take reachable parts to get the automaton $\str{Y}$   
and the subalgebra $\str{A}$ 
of definable subsets of the modal language with a single proposition letter $p$ which is true precisely at accepting states $y,z$ of $\gamma$: $\sem{p} = \{yz\}$.
\[\begin{array}{cccccccc}
\text{Start: }  \str{X}  
& 
\str{X}' = det(rev(\str{X}))
& 
\reach{\str{X}'}
& 
\str{A}
\\[.6em]
\scalebox{.8}{\xymatrix@C=0.5cm@R=0.4cm{ %
&&&&
\\
  \ar[r] 
  & *++[o][F]{x} \ar@(ur,ul)_b \ar[rr]_a 
  & 
  &  *++[o][F]{z} \ar[r] \ar@/_1.2em/[ll]_b \ar[lldd]^a  & \\%
\\
  & *++[o][F]{y} \ar[r] \ar[uu]^b \ar@(dr,dl)^a  &
}} %
& 
\scalebox{.8}{\xymatrix@C=0.2cm@R=0.2cm{
&&&&&&
\\
 & *++[o][F]{xy} \ar[ur] \ar[dd]_a  \ar[rr]^b 
 && *++[o][F]{xyz} \ar@(ul,ur)^{a,b}   \ar[ur]
 &&  *++[o][F]{xz} \ar[ll]_b \ar[dd]^a \ar[ur]
 &
\\
&  && && &  \\
 \ar[r] 
 & *++[o][F]{yz} \ar[ddrr]^b \ar[rruu]_a 
 &&     
 &&  *++[o][F]{x} \ar[uull]^b \ar[lldd]_a \ar[ur] \\%
\\
             & *++[o][F]{y} \ar[rr]_b \ar[uu]^a  & & *++[o][F]{\emptyset} \ar@(dr,dl)^{a,b}  & & *++[o][F]{z} \ar[ll]^b \ar[uu]_a
}}
& 
\scalebox{.8}{\xymatrix@C=0.2cm@R=0.05cm{
\\\\
 &  && *++[o][F]{xyz} \ar@(ul,ur)^{a,b} \ar[r]  & &  \\
&  && &&   \\
      \ar[r] & *++[o][F]{yz} \ar[ddrr]_b \ar[rruu]^a & &      \\%
\\
             &   & & *++[o][F]{\emptyset} \ar@(dr,dl)^{a,b}  & &
}}
& 
\scalebox{.8}{\xymatrix@C=0.2cm@R=0.05cm{
\\\\
 &  && *++[o][F]{xyz} \ar@(ul,ur)^{a,b}   & &  \\
&  && &&   \\
       & *++[o][F]{yz} \ar[ddrr]_b \ar[rruu]^a & &  && *++[o][F]{x} \ar[uull]_-{b} \ar[ddll]^-{a}    \\%
\\
             &   & & *++[o][F]{\emptyset} \ar@(dr,dl)^{a,b}  & &
}}
\end{array}\]
After doing again reverse-determinise-reachability on
to complete the Brzozowski algorithm,
we get the automaton below on the left.
Taking the dual automaton (of atoms/ultrafilters) of $\str{A}$ 
we get the coalgebra below on the right.
\[ \begin{array}{cccc}
\text{Result of Brzozowski:}
&&
\text{Result of minimisation-via-duality} 
\\[.6em]
\scalebox{.9}{\xymatrix@C=0.4cm@R=0.4cm{ %
\\
 \ar[r] 
  & *++[o][F]{xyz} \ar@(ur,ul)_b \ar[rr]^a 
  &&  *++[o][F]{yz,xyz} \ar@/_1.2em/[ll]_b \ar@(ul,ur)^a \ar[d] & 
\\&&&&
\\
}} 
& \quad &
\scalebox{.9}{\xymatrix@C=0.4cm@R=0.4cm{ %
\\
  & *++[o][F]{x} \ar@(ur,ul)_b \ar[rr]^a 
  &&  *++[o][F]{yz} \ar@/_1.2em/[ll]_b \ar@(ul,ur)^a \ar[d] & 
\\&&&&
\\
}} 
\\
\text{Accepts } \rev(\rev(L))= (a+b)^*a
&&
\text{State $x$ accepts } L = (a+b)^*a
\end{array}\]
The two automata (modulo initial state) are clearly isomorphic, but not identical. 



\subsection{Details of $\rho$}
\label{app:rho-sem}

For our specific choice of functors $F_\catC$ and $G_\catD$,
we compute the concrete definition of $\rho$ from \eqref{eq:mate}
when the adjunction arises from a dualising object.
\[\ba{lcccccl}
\rho_X\colon\\
O + \Alph\cdot\Pred{X} & \sgoes{\eta_{G\Pred{X}}}
& \Pred\Spc(O+\Alph\cdot\Pred{X}) & \sgoes{\sim}
& \Pred(\Spc{O}\times(\Spc\Pred{X})^\Alph) & \sgoes{\Pred F_\catC\coun_X}
& \Pred(\Spc{O} \times X^\Alph)
\\
O + \Alph\cdot\dua^{X} & \sgoes{\eta_{G(\dua^{X})}}
& \dua^{\dua^{(O+\Alph\cdot\dua^{X})}} & \sgoes{\sim}
& \dua^{(\dua^{O}\times(\dua^{\dua^{X}})^\Alph)} & \sgoes{\dua^{F_\catC\coun_X}}
& \dua^{\dua^{O} \times X^\Alph}
\\
q & \mapsto & \eta(q) & \mapsto &  \eta(q) & \mapsto &
\lambda (j,d) . j(q)
\\
\qquad (a,p) & \mapsto & \eta(a,p) & \mapsto &  \eta(a,p) & \mapsto &
\lambda (j,d) . p(d(a))
\ea\]
Here we used that the units evaluate, and that
\[\ba{lccc}
\dua^{F_\catC\coun_X} \colon
& \dua^{(\dua^{O}\times(\dua^{\dua^{X}})^\Alph)} & \sgoes{}
& \dua^{\dua^{O} \times X^\Alph}
\\
& h & \mapsto &
\lambda(j,d) . h (j, \lambda a. \lambda g. g(d(a)))
\ea\]
In short,
\begin{equation}\label{eq:rho-concrete}
\ba{rc}
\rho_X\colon\colon
& O + \Alph\cdot\dua^{X} 
\sgoes{}
\dua^{\dua^{O} \times X^\Alph}
\\
&
\ba{rcl}
\rho(q)(j,d) &=& j(q)\\
\rho(a,p)(j,d) &=& p(d(a))
\ea            
\ea
\end{equation}


\subsection{Theory maps of $(G,\rhotrc)$ and  $(G_\catD,\rho)$ coincide}
\label{app:theory-maps}

\textbf{Proof of Lemma~\ref{lem:trc-coincidence}}\\
Due to the natural isomorphisms of Hom-sets given by the adjunctions
in \eqref{eq:comp-adj},
we have the following correspondence for all $F_\catC$-coalgebras $\gamma$
and all $G$-algebras $\alpha$:
\begin{equation}\label{eq:G-adjoints}
\ba{ll}
\gamma \to \ol{\Spc}\,\ol{\FreeD}(\alpha) & \Coalg_\catD(F_\catC)\\
\hline\hline
\ol{\FreeD}(\alpha) \to \ol{\Pred}(\gamma) & \Alg_\catD(G_\catD)\\
\hline\hline
\alpha \to \ol{\ForgD}\,\ol{\Pred}(\gamma) & \Alg_\Set(G)^{\phantom{T^T}}
\ea
\end{equation}
In particular,
since $(\Trc,\alpha)$ is an initial $G$-algebra, it follows that
$\ol{\FreeD}(\Trc,\alpha)$ is an initial $G_\catD$-algebra,
since left adjoints preserve initial objects.

Furthermore, since contravariant adjoint functors turn colimit into limits,
$\ol{\Spc}\,\ol{\FreeD}(\Trc,\alpha)$ is a final $F_\catC$-coalgebra.

The semantic maps of the two logics are the unique morphisms from
the initial algebras, and we denote them
$\sm{G} \colon \Trc \to \ForgD\Pred(C)$ and
$\sm{G_\catD}\colon \FreeD(\Trc) \to \Pred(C)$.
The correspondence in \eqref{eq:G-adjoints} says that $\sm{G_\catD}$
is fully determined by $\sm{G}$.
When $\FreeD \dashv \ForgD$ is a free-forgetful adjunction (as in most of our examples), 
this tells us that formulas that are contained in both logics have the same semantics.

By definition,
$\th{G}$ is the ($\Spc\FreeD \dashv \ForgD\Pred$)-adjoint of $\sm{G}$
and
$\th{G_\catD}$ is the ($\Spc\dashv\Pred$)-adjoint of $\sm{G_\catD}$.
From \eqref{eq:G-adjoints} we see that $\th{G} = \th{G_\catD}$, and both are the unique map
into the final $F_\catC$-coalgebra.

An alternative argument using the mates $\xi$ and $\xitrc$ is as follows.

Consider the following diagram:
$$
\xymatrix{
      & & &  \Spc G_\catD \FreeD  \Trc   \\
      F_\catC C \ar@{-->}[r]^-{F_\catC  \th{G}} & F_\catC  \Spc \FreeD \Trc \ar[rr]^{\xitrc_{\Trc}} \ar[rru]^{\xi_{\FreeD \Trc}} 
       & & \Spc \FreeD G \Trc \ar[u]_{\Spc \kappa^{-1}} \\  
      C \ar[u]^\gamma \ar@{-->}[rrr]^{\exists ! \th{G}} & &  & \Spc \FreeD \Trc \ar[u]_{\Spc \FreeD \alpha}}
$$
where $\alpha$ denotes the initial $G$-algebra. By the universal property of the theory map (cf. e.g.~(2.5) in~\cite{KlinR16}) we have that  
$\th{G}$ is the unique map
that makes the above square commute. The upper triangle commutes by definition of $\xitrc$. As $\ol{\FreeD}$ maps
initial $G$-algebra to initial $G_\catD$-algebra we also have that $(\FreeD(\Trc),\FreeD \alpha \circ \kappa^{-1})$ is 
the initial $G_\catD$-algebra. Therefore the diagram shows that $\th{G}$ also satisfies the universal property of $\th{G_\catD}$ and thus
$\th{G} = \th{G_\catD}$ as claimed. 


\subsection{Coincidence of reachability notons}
\label{app:coincidence-reach}

\begin{lemma}\label{lem:wellp-initial}
Let $\cat{A}$ 
be a wellpowered category with initial object $0_{\cat{A}}$ and 
factorisation system $(E,M)$ such that 
$M \subseteq \mathit{Mono}(\cat{A})$.
For all objects $A \in \cat{A}$, the least subobject $\lsub{A}$ of $A$ is obtained
by $(E,M)$-factorisation of the unique morphism from $0_{\cat{A}}$ to $A$. 
\end{lemma}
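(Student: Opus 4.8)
The plan is to produce the $M$-part of the factorisation of the initial morphism as an explicit witness for the least subobject, and then verify minimality using the diagonal fill-in property of $(E,M)$. Write $!_A \colon 0_{\cat A} \to A$ for the unique morphism out of the initial object, and factor it as $!_A = m \circ e$ with $e \in E$ and $m \colon \lsub A \rightarrowtail A$ in $M$. Since $M \subseteq \mathit{Mono}(\cat A)$, the morphism $m$ is monic and hence represents a subobject of $A$; this is the subobject I claim equals $\lsub A$. As is standard in the presence of a factorisation system, I take subobjects of $A$ to be represented by $M$-morphisms into $A$, ordered by factorisation, and wellpoweredness guarantees that these form a genuine (small) poset so that ``least'' is meaningful.

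First I would fix an arbitrary $M$-subobject $m' \colon S' \rightarrowtail A$ and aim to produce a morphism $h \colon \lsub A \to S'$ with $m' \circ h = m$, which is exactly the inequality $[m] \leq [m']$ in the subobject order. The key observation is that $m' \circ {!_{S'}} \colon 0_{\cat A} \to A$, where $!_{S'} \colon 0_{\cat A} \to S'$ is the unique initial morphism, must coincide with $!_A$ by uniqueness of morphisms out of the initial object. Hence $m \circ e = {!_A} = m' \circ {!_{S'}}$, so $e$, $m'$, $!_{S'}$, and $m$ form a commutative square with $e$ on one side and $m'$ on the other. Since $e \in E$ and $m' \in M$, the orthogonality $E \perp M$ of the factorisation system supplies a unique diagonal $h \colon \lsub A \to S'$ with $h \circ e = {!_{S'}}$ and $m' \circ h = m$; the second equation is precisely the required comparison, so $m$ lies below every $M$-subobject and is therefore the least.

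The only point demanding care is the precise notion of subobject at play: the fill-in step uses $m' \in M$ in an essential way, so the argument establishes minimality among $M$-subobjects rather than among all monos. This is exactly the relevant notion for the reachability/subalgebra application, since in the concrete categories of interest the $M$-morphisms are the injective (subalgebra) inclusions, and it is consistent with how $\lsub{A}$ is used elsewhere. I would also note that wellpoweredness plays no role in constructing $m$ — the factorisation produces the witness directly — and serves only to make the ambient collection of subobjects a set, so that ``the least subobject'' is a well-posed notion.
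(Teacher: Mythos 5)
Your proof is correct, and it runs on the same engine as the paper's: one application of the diagonal fill-in property of $(E,M)$ to a square whose two legs out of $0_{\cat{A}}$ agree automatically by initiality. The difference is in how the competing subobject is handled. You take it to be an $M$-morphism $m'\colon S'\rightarrowtail A$ and lift $e$ directly against $m'$, getting $h$ with $m'\circ h=m$ in a single step. The paper instead starts from an arbitrary mono $i\colon A'\rightarrowtail A$, first $(E,M)$-factorises the initial morphism $0_{\cat{A}}\to A'$ as $m'\circ e'$, and then fills in against the composite $i\circ m'$, setting $h=m'\circ d$; note that this fill-in is licensed by $E\perp M$ only when $i\circ m'\in M$, i.e.\ essentially when $i$ is itself an $M$-subobject (or $M$ consists of all monos, as in the paper's concrete instances). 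Your explicit restriction to $M$-subobjects therefore makes visible a hypothesis that the paper's version uses tacitly, and it is the right level of generality for the reachability application. Your remarks that uniqueness of $h$ comes with the orthogonality (or from $m'$ being monic) and that wellpoweredness only serves to make ``least subobject'' well-posed are both consistent with the paper.
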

\begin{proof}
Let $\xymatrix@1{0_{\cat{A}} \ar[r]^-{e} & \lsub{A} \ar[r]^-{m} & A}$ be the $(E,M)$-factorisation of the initial map from $0_{\cat{A}}$ to $A$. We show that $\lsub{A}$ is the least subobject of $A$. To this end, let $A' \stackrel{i}{\inj} A$ be a subobject and let 
$\xymatrix@1{0 \ar[r]^-{e'} & \lsub{A'} \ar[r]^-{m'} & A'}$ be the $(E,M)$-factorisation of the initial morphism for $A'$. By the diagonal fill-in property of $(E,M)$ we get a unique morphism $d\colon \lsub{A} \to \lsub{A'}$ such that $e' = d \circ e$ and $m = i \circ m' \circ d$ as shown here:
\[\xymatrix@C=1.2em@R=2em{
0 \ar[d]_-{e'} \ar[rr]^-{e} && \lsub{C} \ar@{>->}[d]^-{m} \ar@{-->}[dll]_-{d}
\\
\lsub{C'} \ar@{>->}[r]_-{m'} & C' \ar@{>->}[r]_-{i} & C
}\]
Take $h = m'\circ d \colon \lsub{A} \to A'$. To see that $h$ is unique, 
suppose $h' \colon \lsub{A} \to A'$ is such that $m = i \circ h'$, then
$i \circ h' = i \circ h$ and hence $h=h'$ since $i$ is a mono.
\end{proof}


\subsection{Factorisation systems for coalgebras and algebras}
\label{app:factor-sys}

\begin{lemma}\label{lem:factor-sys}
Assume that $\catC \cong \EM{T}$ for a $\Set$-monad $T$,
and $\catD \cong \EM{T}$ for a $\Set$-monad $T$.
We have:
\begin{enumerate}\setlength\itemsep{0pt}
\item 
$(E,M)$ with $E =$ surjective $F_\catC$-coalgebra morphisms
and $M = $ injective $F_\catC$-coalgebra morphisms is a factorisation system for $\Coalg_\catC(F_\catC)$.
\item 
$(E,M)$ with $E =$ surjective $G_\catD$-algebra morphisms
and $M = $ injective $G_\catD$-algebra morphisms is a factorisation system for $\Alg_\catC(G_\catD)$.
\item  
$(E,M)$ with $E =$ surjective $G$-algebra morphisms
and $M = $ injective $G$-algebra morphisms is a factorisation system for $\Alg_\Set(G)$.
\end{enumerate}
\end{lemma}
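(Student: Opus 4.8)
The plan is to reduce all three statements to one lifting principle: an $(E,M)$-factorisation system on a base category transports to the category of coalgebras (resp.\ algebras) for an endofunctor, as long as that endofunctor preserves the class $M$ (resp.\ $E$). First I would record the base fact that for a $\Set$-monad $T$ the category $\EM{T}$ is regular---in fact exact by \cite[Thm~4.3.5]{BorceuxII}---so that its surjective morphisms coincide with its regular (equivalently strong) epimorphisms, its injective morphisms coincide with its monomorphisms, and (surjections, injections) is a factorisation system; the same holds for $\Set$ with its ordinary image factorisation. This handles the base categories $\catC\cong\EM{T}$, $\catD\cong\EM{T}$ and $\Set$ occurring in items (1)--(3).

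Next I would establish the lifting principle. Assume $\catA$ has an $(E,M)$-factorisation system and $H\colon\catA\to\catA$ preserves $M$. Given an $H$-coalgebra morphism $h\colon(X,\gamma)\to(Y,\delta)$, factor its carrier as $h = m\o e$ with $e\in E$ and $m\in M$. The square with left edge $e$, right edge $Hm$, top $He\o\gamma$ and bottom $\delta\o m$ commutes, since both composites equal $Hh\o\gamma=\delta\o h$; as $e\in E$ and $Hm\in M$, the diagonal fill-in supplies a unique structure map $\zeta\colon Z\to HZ$ with $\zeta\o e = He\o\gamma$ and $Hm\o\zeta=\delta\o m$, making $e$ and $m$ into $H$-coalgebra morphisms. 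Checking orthogonality and uniqueness of the lifted classes is then routine, and shows that the coalgebra morphisms whose carriers lie in $E$ (resp.\ $M$) form a factorisation system on $\Coalg(H)$. The formally dual argument, now requiring that $H$ preserve $E$, gives the same conclusion for $\Alg(H)$.

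It remains to verify the preservation hypotheses. For (1), $F_\catC(C)=\Spc\FreeD(\Obs)\times C^\Alph$ is assembled from a constant, the finite power $(-)^\Alph$ (recall $\Alph$ is finite) and a binary product; since products of monomorphisms are monomorphisms and product with a fixed object preserves monos, $F_\catC$ preserves injections, so the principle applies with $M$ the injections. For (2) and (3), $G_\catD(D)=\FreeD(\Obs)+\Alph\cdot D$ and $G(X)=\Obs+\Alph\cdot X$ act on an arrow $e$ as the coproduct of an identity with the $\Alph$-fold coproduct $\Alph\cdot e$; the decisive point is that \emph{strong} (equivalently, here, regular) epimorphisms are closed under coproducts in any category, which follows immediately from their defining orthogonality to monos together with the universal property of the coproduct. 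Hence $G_\catD$ and $G$ preserve surjections, and the principle applies with $E$ the surjections, giving items (2) and (3).

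The main obstacle is precisely this last preservation claim for $G_\catD$ inside $\EM{T}$: surjectivity of a coproduct of morphisms is not visible through the underlying-set functor (which does not preserve coproducts of $T$-algebras), so I cannot argue pointwise as in $\Set$. The clean way around it is to identify the surjections with the strong epimorphisms---legitimate because $\EM{T}$ is regular---and then exploit that strong epis, being defined by an orthogonality condition, are automatically stable under coproducts.
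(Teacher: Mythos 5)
Your proposal is correct and follows essentially the same route as the paper: identify surjections with regular (strong) epis and injections with monos in the exact category $\EM{T}$, take $(RegEpi,Mono)$ as the base factorisation system, and lift it to coalgebras (resp.\ algebras) via the diagonal fill-in construction of the structure map on the image, subject to $F_\catC$ preserving monos (resp.\ $G_\catD$, $G$ preserving regular epis) --- the paper delegates the lifting step to \cite[Thm.~3.1.7]{Kurz:PhD} and its dual, whereas you prove it inline. You also supply the justification (closure of strong epis under coproducts via orthogonality, circumventing the failure of $\ForgD$ to preserve coproducts) for the preservation claim that the paper dismisses as ``straightforward''; this is a worthwhile addition but does not change the argument.
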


\rem{
  \begin{proof}
Item 1 follows from \cite[Thm.~3.1.7]{Kurz:PhD}.
Items 2 and 3 can be proved similarly to  \cite[Thm.~3.1.7]{Kurz:PhD} using some dual arguments.
More details are found in Appendix~\ref{app:factor-sys}.
\end{proof}
}
%
\begin{proof}
  Item 1 follows from \cite[Thm.~3.1.7]{Kurz:PhD}.
  We show that the conditions for  \cite[Thm.~3.1.7]{Kurz:PhD} hold.
  For all $\Set$-monads $T$,
$\EM{T}$ is a regular category (because it is exact) \cite[Thm~4.3.5]{BorceuxII},
and hence $(RegEpi,Mono)$ is a factorisation system.
Furthermore, in $\EM{T}$,
regular epis are the surjective homomorphisms (but epis need not be surjective) \cite[7.2]{AdaHerStr90:ACC}
and RegMono = Mono = injective homomorphisms \cite[6.9,6.12]{AdaHerStr90:ACC}.
It is easy to show that $F_\catC$ defined as above preserves monos.
Letting $V\colon \Coalg_\catC(F_\catC) \to \catC$ be the forgetful functor that
maps an $F_\catC$-coalgebra $(C,\gamma)$ to $C$, it then follows from
\cite[Thm.~3.1.7]{Kurz:PhD}, that $(V^{-1}RegEpi,V^{-1}Mono)$ is a factorisation system for
$\Coalg_\catC(F_\catC)$. Note that $V^{-1}RegEpi$ and  $V^{-1}Mono$ are the
surjective and injective $F_\catC$-coalgebra morphisms, respectively.
It is straightforward to prove that the functors $G_\catD$ and $G$ preserve
regular epis.

Items 2 and 3 can be proved similarly to  \cite[Thm.~3.1.7]{Kurz:PhD} using some dual arguments.
\rem{
\begin{lemma}\label{lem:lift-factsys-alg}
Let $H\colon \catA \to\catA$ be a functor on $\catA = \EM{T}$ for a $\Set$-monad $T$,
and let $V \colon \Alg_\catA(H) \to \catA$ be the forgetful functor that
maps an $H$-algebra $(A,\alpha)$ to $A$.
If $H$ preserves regular epis, 
then $(V^{-1}RegEpi,V^{-1}Mono)$ consisting of surjective and injective $H$-algebra morphisms
is a factorisation system for $\Alg_\catA(H)$.
\end{lemma}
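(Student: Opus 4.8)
The plan is to lift the $(RegEpi,Mono)$ factorisation system of $\catA=\EM{T}$ along the forgetful functor $V$, by a direct argument dual to \cite[Thm.~3.1.7]{Kurz:PhD}. As recorded in the proof of Lemma~\ref{lem:factor-sys}, since $\EM{T}$ is exact it is in particular regular, so $(RegEpi,Mono)$ is an orthogonal factorisation system on $\catA$, with $RegEpi$ the surjective and $Mono$ the injective homomorphisms. It therefore suffices to verify the two defining properties of an orthogonal factorisation system for the classes $V^{-1}RegEpi$ and $V^{-1}Mono$ on $\Alg_\catA(H)$: existence of factorisations, and the unique diagonal fill-in. The hypothesis enters in exactly one way, namely that $H$ preserves regular epis, so that $He$ is a regular epi whenever $e$ is.

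First I would lift factorisations. Given an $H$-algebra morphism $f\colon(A,\alpha)\to(B,\beta)$, factor its underlying $\catA$-morphism as $f = m\o e$ with $e\colon A\to I$ in $RegEpi$ and $m\colon I\to B$ in $Mono$. To equip $I$ with an algebra structure, consider the square with sides $He\colon HA\to HI$, $m\colon I\to B$, $e\o\alpha\colon HA\to I$ and $\beta\o Hm\colon HI\to B$; it commutes because $f$ is a morphism, as $m\o e\o\alpha = f\o\alpha = \beta\o Hf = \beta\o Hm\o He$. Since $He\in RegEpi$ (using that $H$ preserves regular epis) and $m\in Mono$, the diagonal fill-in in $\catA$ yields a unique $\iota\colon HI\to I$ with $\iota\o He = e\o\alpha$ and $m\o\iota = \beta\o Hm$. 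These two equations say precisely that $e\colon(A,\alpha)\to(I,\iota)$ and $m\colon(I,\iota)\to(B,\beta)$ are $H$-algebra morphisms, giving the required factorisation in $\Alg_\catA(H)$.

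Next I would establish the unique diagonal fill-in. Suppose $m\o g = h\o e$ is a commuting square of $H$-algebra morphisms with $e\colon(A,\alpha)\to(B,\beta)$ in $V^{-1}RegEpi$ and $m\colon(C,\gamma)\to(D,\delta)$ in $V^{-1}Mono$. Applying $V$ and the fill-in in $\catA$ produces a unique $\catA$-morphism $d\colon B\to C$ with $d\o e = g$ and $m\o d = h$; it remains to check that $d$ is an algebra morphism, i.e.\ $d\o\beta = \gamma\o Hd$. The key point is that both $d\o\beta$ and $\gamma\o Hd$ are diagonals for the \emph{same} lifting problem: the square with sides $He$, $m$, $g\o\alpha\colon HA\to C$ and $h\o\beta\colon HB\to D$ commutes (using $m\o g = h\o e$ and $e\o\alpha = \beta\o He$), and a short computation using that $e,g,m,h$ are algebra morphisms shows that each of $d\o\beta$ and $\gamma\o Hd$ satisfies $(-)\o He = g\o\alpha$ and $m\o(-) = h\o\beta$. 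Since $He\in RegEpi$ and $m\in Mono$, uniqueness of the diagonal forces $d\o\beta = \gamma\o Hd$. Uniqueness of $d$ as an algebra morphism is then inherited from its uniqueness in $\catA$, as $V$ is faithful.

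The argument is largely formal, and I expect the main obstacle to be bookkeeping rather than conceptual: tracking which morphism equations are invoked in the two diagonal computations, and confirming that ``$H$ preserves regular epis'' is exactly the property needed so that $He$ lands in the left class in both the factorisation and the fill-in steps. For the concrete functors $G_\catD$ and $G$ of the paper this last hypothesis is routine, since they are built from coproducts and copowers, which preserve regular epis; this is precisely what items~2 and~3 of Lemma~\ref{lem:factor-sys} invoke.
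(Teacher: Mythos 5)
Your proof is correct and takes essentially the same route as the paper's own (sketched) argument: lift the $(RegEpi,Mono)$-factorisation of $\catA$ along $V$ by using the diagonal fill-in together with the hypothesis that $H$ preserves regular epis to equip the image with an algebra structure, and then verify that the fill-in of a square of algebra morphisms is itself an algebra morphism --- exactly the two steps the paper delegates to dualising \cite[Thm.~3.1.7]{Kurz:PhD} and \cite[Lem.~2.4]{Rutten00}. The only cosmetic difference is in the last step, where your ``two diagonals of the same lifting problem'' argument can be shortcut by cancelling the mono $m$ directly, since $m\circ(d\circ\beta) = h\circ\beta = m\circ(\gamma\circ Hd)$.
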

}
Sketch: Using that $G_\catD$ preserves regular epis, one can show that
 $V \colon \Alg_\catD(G_\catD) \to \catD$ 
creates $(RegEpi,Mono)$-factorisations
in $\catD$ using the diagonal fill-in property (similar to \cite[Prop.~1.3.3]{Kurz:PhD}).
Using that regular epis in $\catD$ are surjective, one can show that the diagonal fill-in obtained from
$(RegEpi,Mono)$ in $\catD$ is an $G_D$-algebra morphism
(similar to \cite[Thm.~3.1.7]{Kurz:PhD} and \cite[Lem.~2.4]{Rutten00}).
It follows that $(V^{-1}RegEpi,V^{-1}Mono)$ is a factorisation system for $\Alg_\catD(G_D)$.
\end{proof}



\subsection{Isomorphism lemma for topological automata}
\label{app:topo-aut-iso}

\textbf{Proof of Lemma~\ref{lem:topo-aut-iso}}
  Let $\alpha\colon \SubD(Y) \to \Spec(M(Y)/J)$ be defined by
  $\alpha(\phi)(m) = m(\phi)$ where $\phi \in \SubD(Y)$ and $m \in M(Y)$.
  Note that this is well defined.  If $m \icong{J} m'$ then their
  difference lies in $J$ which means that
  \(m(\phi) - m'(\phi) =k(\phi)(|\pi^-|-\pi^-)(\phi)\).
  This condition is equivalent to \(1-\pi\)
  is positive and hence $|\pi^-|$ and $\pi^-$ are equal and hence the
  second term is $0$, whence \(m(\phi) = m'(\phi)\).
  Note that the topology of \(\Spec(M(Y)/J)\)
  is generated by the taking as the closed sets, sets of maximal ideals
  that contain a fixed element $\phi$ of $M(Y)$.  Any maximal ideal
  consists of the functions that vanish at a point $y$, call this $m_y$.
  So if we fix such an $\phi$ for it to be in a maximal ideal $m_y$, we
  have $\phi(y) =0$.  This means that $\alpha^{-1}$ of a closed set is the
  set of subdistributions that assign $0$ to a particular element $y$; this
  is a closed set so $\alpha$ is continuous.

Let $\beta\colon \Spec(M(Y)/J) \to \SubD(Y)$ be defined by
$\beta(\Phi)(y) = \Phi(\pi_y)$ where $\pi_y \in M(Y)/J$ projects onto $y$.
We check that $\beta(\Phi)$ is a subdistribution:
\[\sum_{y\in Y}\beta(\Phi)(y)
= \sum_{y\in Y}\Phi(\pi_y)
= \Phi\left(\sum_{y\in Y} \pi_y \right)
\pleq \Phi(1) = 1
\]
The second and last identity hold because $\Phi$ is a
$\CCStar$-homomorphism (hence linear and unital); 
the inequality holds since we are in $M(Y)/J$ (which says that $\pi\pleq
1$) and $\Phi$ is monotone.

We now show that for $\phi \in \SubD(Y)$,
$\beta(\alpha(\phi)) = \phi$. Let $y \in Y$, we then have:
$
\beta(\alpha(\phi))(y) = \alpha(\phi)(\pi_y) = \pi_y(\phi) = \phi(y)
$
We show that for $\Phi \in \Spec(M(Y)/J)$, $\alpha(\beta(\Phi)) = \Phi$.
Let $m\in M(Y)$, we then have
$
\alpha(\beta(\Phi))(m) = m(\beta(\Phi)).
$
By the Stone-Weierstrass theorem, the polynomials on the compact Hausdorff
space $[0,1]^Y$ are dense in $\Cont([0,1]^Y)$. 
Since $m \colon [0,1]^Y \to \bbR$ is continuous, 
it therefore suffices to show that for all polynomials $p$ on $[0,1]^Y$ we
have that $p(\beta(\Phi)) = \Phi(p)$.  

\emph{Case $p = 1$:} $p(\beta(\Phi)) = 1 = \Phi(1) = \Phi(p)$.
\emph{Case $p = r \in \bbR$:} $p(\beta(\Phi)) = r = \Phi(r) = \Phi(p)$.
\emph{Case $p = \sum_{y\in Y} r_y \pi_y$:}
$\Phi(p) = \Phi(\sum_{y\in Y} r_y \pi_y) = \sum_{y \in Y}r_y\Phi(\pi_y) = \sum_{y \in Y} r_y\beta(\Phi)(y) = p(\beta(\Phi))$. 
Finally, let $p = \sum_{y\in Y} r_y \pi_y$ and $q = \sum_{y\in Y} s_y \pi_y$.
Then
$
\Phi(pq) = \Phi(p)\Phi(q) = p(\beta(\Phi))q(\beta(\Phi)) = (pq)(\Phi).
$
It follows that $p(\beta(\Phi)) = \Phi(p)$ holds for all polynomials
$p$\footnote{Clearly all polynomials can be expressed as sums of products
  of lower degree polynomials.}
and we have now shown that $\alpha \colon \SubD(Y) \to \Spec(M(Y)/J) $ is a
bijection with inverse $\beta$.  


\subsection{Alternating automata}
\label{app:alternating}

These lemmas are needed for Theorem \ref{thm:alternating}.

\begin{lemma}
\label{lem:fext-transp}
Let $d: Y \to \TTwo{X}$. Then $\fext{d}={\transp d}^{-1}$.
\end{lemma}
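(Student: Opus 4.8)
The plan is to prove the identity by a direct computation that unwinds the explicit formulas for the unit $\eta$, multiplication $\mu$, and functorial action of the double-powerset monad $N$ recorded in Section~\ref{sec:CABA-EMN}. First I would pin down the types and the transpose convention. Since $d\colon Y \to \TTwo X$, reading each $d(y)\in\TTwo X$ as a subset of $2^X$, its exponential transpose is the set-valued map $\transp d\colon 2^X \to 2^Y$ with $\transp d(a) = \set{y \in Y}{a \in d(y)}$ for $a\subs X$. Hence ${\transp d}^{-1}\colon \TTwo Y \to \TTwo X$ is the inverse-image map of $\transp d$, so ${\transp d}^{-1}(H) = \set{a \in 2^X}{\transp d(a) \in H}$ for $H \subs 2^Y$. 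On the other side, the free extension is $\fext d = \mu_X \circ Nd$ (Section~\ref{sec:coalg-alg}), which is again a map $\TTwo Y \to \TTwo X$, so both sides have the same type to begin with.

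Next I would compute $\fext d(H)$ for an arbitrary $H \in \TTwo Y$, using $Nd = (d^{-1})^{-1}$ and $\mu_X = \eta_{\cPo X}^{-1}$ as given in Section~\ref{sec:CABA-EMN}. This is a two-step chase: first $Nd(H) = \set{W \subs \TTwo X}{\set{y}{d(y) \in W} \in H}$, and then $\fext d(H) = \mu_X(Nd(H)) = \set{a \in 2^X}{\eta_{\cPo X}(a) \in Nd(H)}$. The whole expression collapses once one records the single identity that makes the two nested inverse images telescope: since $\eta_{\cPo X}(a) = \set{W \in \TTwo X}{a \in W}$ by definition of the unit, we have $d(y) \in \eta_{\cPo X}(a) \iff a \in d(y)$. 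Substituting $W = \eta_{\cPo X}(a)$ into the description of $Nd(H)$ then gives $\fext d(H) = \set{a \in 2^X}{\set{y}{a \in d(y)} \in H} = \set{a}{\transp d(a) \in H} = {\transp d}^{-1}(H)$, as required.

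The only real obstacle is bookkeeping: up to four nested powersets appear ($Y$, $2^Y$, $2^X$, and the ambient layers entering $\mu_X$ and $Nd$), and it is easy to misalign a level or to confuse the membership $d(y)\in W$ with $a\in d(y)$. I would therefore isolate the identity $d(y) \in \eta_{\cPo X}(a) \iff a \in d(y)$ as the crux and keep all set-membership statements explicitly typed.

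As a cleaner conceptual alternative (and a useful cross-check) I would instead invoke the universal property of the free $N$-algebra: $\fext d$ is the unique $\EMN$-algebra morphism $(NY,\mu_Y)\to(NX,\mu_X)$ with $\fext d \circ \eta_Y = d$, so it suffices to verify these two properties for ${\transp d}^{-1}$. Writing ${\transp d}^{-1} = \cPo(\transp d)$ exhibits it as the image under the right adjoint $\cPo$ of a morphism between the corresponding objects on the dual side; such images are automatically algebra morphisms between free algebras (by naturality of the counit of $\cPo^\op \dashv \cPo$), which handles the homomorphism condition. The generator condition is the short calculation ${\transp d}^{-1}(\eta_Y(y)) = \set{a}{y \in \transp d(a)} = \set{a}{a \in d(y)} = d(y)$, and uniqueness of the free extension then forces $\fext d = {\transp d}^{-1}$.
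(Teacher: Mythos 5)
Your main computation is exactly the paper's proof: unwind $\fext d = \mu_X \circ Nd$ using $Nd = (d^{-1})^{-1}$ and $\mu_X = \eta_{\cPo X}^{-1}$, and telescope via the identity $d(y)\in\eta_{\cPo X}(a) \iff a \in d(y)$. The alternative argument via the universal property of the free $N$-algebra is a correct bonus cross-check, but the core approach coincides with the paper's element-chase.
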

\begin{proof}
For all $a \subs X$ and all $A \in \TTwo{Y}$,
\begin{align*}
a \in \fext{d}(A)
&\Iff a \in\mu_X(N d (A))
\Iff \eta_{QX}(a) \in N d (A)
\Iff \eta_{QX}(a) \in (d^{-1})^{-1} (A)\\
&\Iff d^{-1}(\eta_{QX}(a)) \in A
\Iff \set{x}{d(x)\in\eta_{QX}(a)} \in A
\Iff \set{x}{a\in d(x)} \in A\\
&\Iff \transp{d}(a) \in A.
\end{align*}
\end{proof}

The following lemmas show that the transition structure of $\A'$ coincides with the transition structure of $\bar{R}(\detA)$.

\begin{lemma}
\label{lem:R-init}
For all functions $f : X \to 2$, $\theta(\fext{f}) = \transp{f}$.
\end{lemma}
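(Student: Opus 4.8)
The plan is to turn the abstract hom-set bijection $\theta$ into an explicit statement about atoms and subsets, using that the duality $R=\At\circ\EMNtoCABA$ is concrete with dualising object $2$. First I would fix the types. Since $\fext f\colon(NX,\mu_X)\to 2$ is a morphism in $\EMN$ and $2=J(1)$, the instance of $\theta$ in play is $\theta\colon\EMN((NX,\mu_X),J1)\to\Set(1,R(NX,\mu_X))$, so $\theta(\fext f)$ is determined by a single element of $R(NX,\mu_X)=\At(\EMNtoCABA(NX,\mu_X))$. By the description of the free \CABA on $X$ recalled above, $\EMNtoCABA(NX,\mu_X)$ is the powerset \CABA on $2^X$, whose atoms are the singletons $\set{\{a\}}{a\subs X}$, identified with $2^X$ via $\{a\}\leftrightarrow a$ (the identification already used in Theorem~\ref{thm:alternating}). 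Thus the goal is to show that $\theta(\fext f)(\ast)$ is the atom $\{a\}$ with $a=f^{-1}(1)$, matching $\transp f(\ast)$.

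The key step is to make $\theta$ concrete at the set $1$. Using $\At\cong\CABA(-,2)$, evaluation at $1$ turns the bijection $\EMN((NX,\mu_X),J1)\cong\Set(1,R(NX,\mu_X))$ into the standard correspondence $\CABA(\EMNtoCABA(NX,\mu_X),2)\cong\At(\EMNtoCABA(NX,\mu_X))$ between \CABA-homomorphisms into $2$ and atoms: a homomorphism $\psi$ corresponds to the unique atom whose principal ultrafilter is the $1$-fibre $\psi^{-1}(1)$. (Equivalently, one may compute $\theta(\fext f)=R(\fext f)\circ\eta_1$ directly from the general adjunction formula and the action of $\At$ on morphisms.) Hence $\theta(\fext f)(\ast)$ is precisely the atom $\{a\}$ for which $\fext f$ equals evaluation at $a$, i.e.\ $\fext f(U)=1\Iff a\in U$ for all $U\in\TTwo X$.

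It then remains to identify $a$ and compare. Since $\TTwo X$ is the free \CABA on $X$ via the unit $\eta_X(x)=\set{a'}{x\in a'}$, any \CABA-homomorphism out of it is determined by its values on the generators $\eta_X(x)$; the universal property gives $\fext f(\eta_X(x))=f(x)$, while evaluation at $a$ gives $\mathrm{ev}_a(\eta_X(x))=[a\in\eta_X(x)]=[x\in a]$. Matching on generators forces $a=\set{x}{f(x)=1}=f^{-1}(1)$. On the other side, the swap-arguments description of the transpose gives $\transp f(\ast)=\set{x}{f(x)(\ast)=1}=f^{-1}(1)$ (identifying $2=2^1$). Under the identification $\{a\}\leftrightarrow a$ of atoms with subsets, the two sides coincide, which proves $\theta(\fext f)=\transp f$.

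The main obstacle is entirely bookkeeping: one must track the composite equivalence $\Setop\xrightarrow{J}\EMN\xrightarrow{\EMNtoCABA}\CABA\xrightarrow{\At}\Setop$ carefully, in particular the correspondence of a \CABA-homomorphism into $2$ with a principal ultrafilter and hence with an atom, and keep the free-algebra identification $R(NX,\mu_X)\cong 2^X$ consistent with the convention used elsewhere. Once the concrete form of $\theta$ at $1$ is pinned down, the argument reduces to a one-line comparison of two homomorphisms on the generators $\eta_X(x)$.
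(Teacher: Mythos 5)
Your proof is correct, but it takes a genuinely different route from the paper's. The paper's proof is a two-line reduction: it observes that $\theta^{-1}(\transp f)={\transp f}^{-1}$ (the concrete description of the adjoint as an inverse-image map) and then invokes Lemma~\ref{lem:fext-transp}, which had already established $\fext{d}={\transp d}^{-1}$ by a direct element chase through $\mu_X$ and $Nd$; applying this with $d=f\colon X\to 2\cong\TTwo{\emptyset}$ gives $\theta^{-1}(\transp f)=\fext f$ immediately. You instead work in the forward direction and do not use Lemma~\ref{lem:fext-transp} at all: you unfold $\theta$ at the one-point set via $\At\cong\CABA(-,2)$, identify $\theta(\fext f)(\ast)$ as the atom whose principal ultrafilter is $(\fext f)^{-1}(1)$, and then pin down that atom as $f^{-1}(1)$ by comparing $\fext f$ with evaluation on the generators $\eta_X(x)$ of the free \CABA. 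Both arguments are sound; the paper's buys brevity by reusing a lemma that is needed anyway for Lemma~\ref{lem:R-transp}, while yours is self-contained and makes the role of the concrete duality (homomorphisms into $2$ $\leftrightarrow$ principal ultrafilters $\leftrightarrow$ atoms) explicit, at the cost of more bookkeeping along the composite $\Setop\to\EMN\to\CABA\to\Setop$. Your closing identification of $\transp f(\ast)$ with $f^{-1}(1)$ and the atom/subset convention $\{a\}\leftrightarrow a$ are both consistent with how the paper uses them in Theorem~\ref{thm:alternating}.
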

\begin{proof}
This is immediate from the fact that $\theta^{-1}(\transp f) = {\transp f}^{-1}$ and Lemma \ref{lem:fext-transp}.
\end{proof}

\begin{lemma}
\label{lem:R-transp}
For all $d : Y \to \TTwo{X}$ and all $a \subs X$,
$R(\fext{d})(\{a\}) = \{\transp{d}(a)\}$.
Thus $R(\fext d)=\transp d$ up to bijections
relating atoms $\{a\}$ and their singleton elements $a$.
In particular, for all functions $i : 1 \to \TTwo{X}$, $R(\fext{i}) = \transp{i}$ up to these bijections.
\end{lemma}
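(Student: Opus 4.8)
The plan is to reduce the claim to the concrete action of the functor $\At$ on morphisms, together with Lemma~\ref{lem:fext-transp}. Since $R = \At\circ\EMNtoCABA$ and $\EMNtoCABA$ acts as the identity on morphisms, computing $R(\fext d)$ is the same as computing $\At(\fext d)$, where $\fext d$ is viewed as a \CABA morphism $\EMNtoCABA(NY,\mu_Y)\to\EMNtoCABA(NX,\mu_X)$ between the powerset {\CABA}s $\TTwo Y$ and $\TTwo X$. As already noted in the proof of Theorem~\ref{thm:alternating}, the atoms of these {\CABA}s are exactly the singletons $\{b\}$ with $b\subs Y$ and $\{a\}$ with $a\subs X$, so that $\At(\fext d)$ is a function $\set{\{a\}}{a\subs X}\to\set{\{b\}}{b\subs Y}$ and the types match the statement.

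First I would recall from Lemma~\ref{lem:fext-transp} that $\fext d=(\transp d)^{-1}$, where $\transp d\colon 2^X\to 2^Y$ is the exponential transpose, $\transp d(a)=\set{y\in Y}{a\in d(y)}$. By the definition of $\At$ on morphisms, $\At(\fext d)(\{a\})$ is the unique atom $\{b\}$ of $NY$ satisfying $\up{\{b\}}=(\fext d)^{-1}(\up{\{a\}})$, where $\up{\{a\}}$ denotes the principal ultrafilter on the atom $\{a\}$. The main step is then a direct computation of both sides: writing out the ultrafilter gives $\up{\{a\}}=\set{W\subs 2^X}{a\in W}$, so its preimage is $(\fext d)^{-1}(\up{\{a\}})=\set{V\subs 2^Y}{a\in\fext d(V)}$. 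Using $\fext d=(\transp d)^{-1}$ we have $a\in\fext d(V)\iff\transp d(a)\in V$, whence $(\fext d)^{-1}(\up{\{a\}})=\set{V\subs 2^Y}{\transp d(a)\in V}=\up{\{\transp d(a)\}}$. Since an atom is uniquely determined by the principal ultrafilter it generates, this yields $\At(\fext d)(\{a\})=\{\transp d(a)\}$, i.e.\ $R(\fext d)(\{a\})=\{\transp d(a)\}$. Identifying each atom $\{a\}$ with its single element $a$ then reads as $R(\fext d)=\transp d$, and specialising to $Y=1$ and $d=i\colon 1\to\TTwo X$ gives the final clause $R(\fext i)=\transp i$.

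I do not expect a genuine obstacle: the lemma is in essence the statement that the concrete duality $\At\cong\CABA(-,2)$ returns an inverse-image map $(\transp d)^{-1}$ to the function $\transp d$ from which it arises. The only care required is bookkeeping — keeping the atom/element correspondence $\{a\}\leftrightarrow a$ and the two layers of powerset ($2^X$ versus $\TTwo X$) straight — and confirming that the \CABA structure carried by $\EMNtoCABA(NX,\mu_X)$ is the powerset structure on $2^X$, so that its atoms really are the singletons $\{a\}$, as used above.
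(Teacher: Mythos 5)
Your proposal is correct and follows essentially the same route as the paper: both proofs reduce the computation to Lemma~\ref{lem:fext-transp} ($\fext d = (\transp d)^{-1}$) and then evaluate $\At$ on the powerset \CABA, the only cosmetic difference being that the paper identifies the image atom as the intersection $\bigcap\set{A}{a\in\fext d(A)}$ while you identify the preimage ultrafilter $(\fext d)^{-1}(\up{\{a\}})$ as the principal ultrafilter $\up{\{\transp d(a)\}}$ directly. These are the same calculation phrased two ways, so there is nothing to fix.
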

\begin{proof}
The atoms of $(\TTwo{X},\mu_X)$ are of the form $\{a\}$ for $a \subs X$, hence for $A\in\TTwo Y$,
\begin{align*}
R(\fext{d})(\{a\}) 
&= \bigwedge\set{A \in \TTwo{Y}}{\{a\}\le \fext{d}(A)}\\
&= \bigcap\set{A\in\TTwo{Y}}{a\in\fext{d}(A)} && \text{since $\bigwedge$ is $\bigcap$ in $(\TTwo{X},\mu_X)$}\\
&= \bigcap\set{A\in\TTwo{Y}}{\transp{d}(a)\in A} && \text{Lemma~\ref{lem:fext-transp}}\\
&= \{\transp d(a)\} && \text{since $\transp d(a)\in 2^Y$.}
\end{align*}
\end{proof}


\end{document}